\newtheorem{theorem}{Theorem}
\newtheorem{claim}[theorem]{Claim}
\newtheorem{corollary}[theorem]{Corollary}
\newtheorem{lemma}[theorem]{Lemma}
\newtheorem{proposition}[theorem]{Proposition}
\newcommand{\Xomit}[1]{ }
\newenvironment{proof}[1][Proof]{\textbf{#1.} }{\ \rule{0.5em}{0.5em}}
\newcommand{\eps}{\upvarepsilon}
\newcommand{\del}{\updelta}
\newcommand{\alp}{\upalpha}
\begin{document}

\date{}
\title{Online bin packing with cardinality constraints revisited}
\author{Gy\"{o}rgy D\'{o}sa\thanks{Department of Mathematics, University of Pannonia,
Veszprem, Hungary, \texttt{dosagy@almos.vein.hu}.}
\and Leah Epstein\thanks{ Department of Mathematics, University of Haifa,
Haifa, Israel. \texttt{lea@math.haifa.ac.il}. }}
\maketitle
\begin{abstract}
Bin packing with cardinality constraints is a bin packing problem where an upper bound $k \geq 2$ on the number of items packed into each bin is given, in addition to the standard constraint on the total size of items packed into a bin. We study the online scenario where items are presented one by one. We analyze it with respect to the absolute competitive ratio and prove tight bounds of $2$ for any $k \geq 4$. We show that First Fit also has an absolute competitive ratio of $2$ for $k=4$, but not for larger values of $k$, and we present a complete analysis of its asymptotic competitive ratio for all values of $k \geq 5$.
Additionally, we study the case of small $k$ with respect to the asymptotic competitive ratio and the absolute competitive ratio.
\end{abstract}

%

\section{Introduction}\label{intro}
Bin packing with cardinality constraints (BPCC) is a variant of
bin packing. The input consists of items, denoted by
$1,2,\ldots,n$, where item $i$ has a size $s_i>0$ associated with
it, and a global parameter $k \geq 2$, called the cardinality constraint.
The goal is to partition the input items into subsets called bins,
such that the total size of items of one bin does not exceed $1$,
and the number of items does not exceed $k$. In many applications
of bin packing the assumption that a bin can contain any number of
items is not realistic, and bounding the number of items as well
as their total size provides a more accurate modeling of the
problem. BPCC was studied both in the offline and online
environments \cite{KSS75,KSS77,KP99,CKP03,BCKK04,Epstein05,EL07afptas,FK13}.

In this paper we study online algorithms that receive and pack input items one by one, without any information on further input items. A fixed optimal offline algorithm that receives the complete list of items before packing it is denoted by $OPT$. For an input $L$ and algorithm $A$, we let $A(L)$ denote the number of bins that $A$ uses to pack $L$. We also use $OPT(L)$ to denote also the number of bins that $OPT$ uses for a given input $L$.
The absolute approximation ratio of an algorithm $A$ is the supremum ratio over all inputs $L$ between the number of bins $A(L)$ that it uses and the number of bins $OPT(L)$ that $OPT$ uses. The asymptotic approximation ratio is the limit of absolute competitive ratios $R_K$ when $K$ tends to infinity and $R_K$ takes into account only inputs for which $OPT$ uses at least $K$ bins. Note that (by definition), for a given algorithm (for some online bin packing problem), its  asymptotic approximation ratio never exceeds its absolute approximation ratio. If the algorithm is online, the term {\it competitive ratio} replaces the term {\it approximation ratio}. For an algorithm whose approximation ratio (or competitive ratio) does not exceed $R$, we say that it is an $R$-approximation (or $R$-competitive). We see a bin as a set of items, and for a bin $B$, we let $s(B)=\sum_{i \in B} s_i$ be its level.

Bin packing problems are often studied with respect to the
asymptotic measures. Approximation algorithms were designed for
the offline version (that is strongly NP-hard for $k \geq 3$)
\cite{KSS75,KP99,CKP03,EL07afptas}, and the problem has an
asymptotic fully polynomial approximation scheme (AFPTAS)
\cite{CKP03,EL07afptas}. Using elementary bounds, it was shown by
Krause, Shen, and Schwetman \cite{KSS75} that the cardinality
constrained variant of First Fit (FF), that packs an item $i$ into a
minimum index bin where it fits both with respect to size and
cardinality (i.e., it has at most $k-1$ items and level at most $1-s_i$), has an asymptotic competitive ratio of at most $2.7-\frac{2.4}k$. For $k \rightarrow \infty$, the competitive
ratio is $2.7$, since this is a special case of vector bin packing (with two dimensions) \cite{GareyGJ76}. The case $k=2$ is solvable using matching techniques in the
offline scenario, but it is not completely resolved in the online
scenario. Liang \cite{Liang80} showed a lower bound of $\frac 43$
on the asymptotic competitive ratio  for this case, Babel et al.~\cite{BCKK04} improved the lower bound to $\sqrt{2}\approx
1.41421$, and designed an algorithm whose asymptotic competitive
ratio is at most $1+\frac 1{\sqrt{5}}\approx 1.44721$ (improving
over FF). Recently, Fujiwara and Kobayashi \cite{FK13} improved
the lower bound to $1.42764$. For larger $k$, there is a
$2$-competitive algorithm \cite{BCKK04}, and improved algorithms
are known for $k=3,4,5,6$ (whose competitive ratios are at most
$1.75$, $1.86842$, $1.93719$, and $1.99306$, respectively)~\cite{Epstein05}. Note that the upper bounds of \cite{KSS75} for
FF and $k=3$ is $1.9$, and an algorithm whose competitive ratio is at
most $1.8$ was proposed by \cite{BCKK04}. A full analysis of the
cardinality constrained variant of the Harmonic algorithm
\cite{LeeLee85} (that partitions items into $k$ classes and packs
each class independently, such that the classes are
$I_{\ell}=(\frac 1{\ell+1},\frac 1\ell]$ for $1 \leq \ell \leq k -1$
and $I_k=(0,\frac 1k]$, and for any $1 \leq \ell \leq k$, each bin of $I_{\ell}$, possibly except for
the last such bin, receives exactly $\ell$ items) is given in
\cite{Epstein05}, and its competitive ratio for $k=2,3$ is $1.5$
and $\frac{11}6$, respectively (its competitive ratio is in $[2,2.69103]$ for $k \geq
4$, see Table \ref{tabtab} for some additional values). As for lower bounds, until recently, except for the case
$k=3$ for which a lower bound of $1.5$ on the competitive ratio
was proved in \cite{BCKK04}, most of the known lower bounds
followed from the analysis of lower bounds for standard bin
packing \cite{Yao80A,Vliet92,BBG}. New lower bounds for many
values of $k$ were given in \cite{FK13}, and in particular, they
proved lower bounds of $1.5$ and $\frac{25}{17}\approx 1.47058$
for $k=4$ and $k=5$, respectively. For $6 \leq k \leq 9$, the
current best lower bound remained $1.5$, that was implied by the
lower bound of Yao \cite{Yao80A}, and for $k=10$ and $k=11$, lower
bounds of $\frac{80}{53}\approx 1.50943$ and $\frac{44}{29}\approx
1.51724$, respectively, were proved in \cite{FK13} (see \cite{FK13} for the
lower bounds of other values of $k$). In this paper we provide a complete analysis of FF with respect to the asymptotic competitive ratio. We find that its competitive ratio is at most $2.5-\frac 2k$ for $k=2,3,4$,  $\frac{8(k-1)}{3k}=\frac 83-\frac{8}{3k}$ for $4 \leq k \leq 10$, and  $2.7-\frac 3k$ for $k \geq 10$ (we included each of the values $k=4$ and $k=10$ in two cases as $2.5-2/k=8/3-8/(3k)$ for $k=4$, and $8/3-8/(3k)=2.7-3/k$ for $k=10$).
Additionally, we provide improved lower bounds on the asymptotic competitive ratio of arbitrary online algorithms for $k=5,7,8,9,10,11$. The values of these lower bounds are $1.5$ for $k=5$, and approximately  $1.51748$, $1.5238$, $1.5242$, $1.526$, $1.5255$, for $k=7,8,9,10,11$, respectively.

There are few known results for the absolute measures. The asymptotic $(1+\eps)-$approximation algorithm of Caprara, Kellerer, and Pferschy \cite{CKP03} uses $(1+\eps)OPT+1$ bins to pack the items, and thus, choosing $\eps>0$ to be small (for example $\eps=\frac{1}{100}$) results in a polynomial time  absolute $\frac 32$-approximation algorithm. This is the best possible unless P=NP. In the online environment, it is not difficult to see that given the absolute upper bound of $1.7$ on the competitive ratio of FF for standard bin packing \cite{DS12}, the upper bound of $2.7-2.4/k$ becomes an absolute one (we provide the proof here for completeness). In this paper we analyze the absolute competitive ratio, and show a tight bound of $2$ on the absolute competitive ratio for any $k \geq 4$. The upper bound for $k=4$ is proved for FF. An upper bound for $k=5$ is proved using an algorithm that performs FF except for one case. We show that a variant of the algorithm of \cite{BCKK04} has an absolute competitive ratio $2$ for any $k \geq 2$. In the case $k=3$, we provide a lower bound of $\frac 74=1.75$ on the absolute competitive ratio of any algorithm, and show that the absolute competitive ratio of FF is $\frac {11}6 \approx 1.8333$. For $k=2$, tight bounds of $1.5$ on the best possible competitive ratio follow from previous work \cite{Bek} (the upper bound is the absolute competitive ratio of FF).

For standard bin packing
\cite{Ullman71,John,Johnso74,JDUGG74,LeeLee85,Yao80A,RaBrLL89}, it
is known that the asymptotic competitive ratio is in
[1.5403,1.58889] \cite{BBG,Seiden02J}, and the absolute
competitive ratio is in $[\frac 53,1.7]$ (see \cite{Zhang,DS12}),
where the upper bound is the competitive ratio of FF (without
cardinality constraints). Interestingly, introducing cardinality
constraints (with sufficiently large values of $k$) results in an
increase of many competitive ratios by $1$
\cite{KSS75,JDUGG74,LeeLee85,Epstein05}. Another related problem
is called {\it class constrained bin packing}
\cite{EIL10,ST01,ST04,XM06}. In that problem each item has a
color, and a bin cannot contain items of more than $k$ colors (for
a fixed parameter $k$). BPCC is the special case of that problem
where all items have distinct colors.

We start with lower bounds in Section \ref{lbproofs}, where both the absolute competitive ratio and the asymptotic competitive ratio are studied. We consider algorithms afterwards, in Section \ref{algs}, where we analyze FF, and  in Section \ref{oth}, where we consider algorithms whose absolute competitive ratio is at most $2$.

{\begin{table}[h!]
\renewcommand{\arraystretch}{1.05   }

$$
\begin{array}{||c|c|c|c|c|c|c||}
\hline
\hline

\mbox{Value of \  } k & \mbox{prev. LB} &  \mbox{new LB} & \mbox{FF} & \mbox{prev. UB for FF} & \mbox{Harmonic} & \mbox{best known} \\

\hline

2 & 1.42764 \  \cite{FK13} & -  & 1.5 & 1.5 & 1.5 & 1.44721 \ \cite{BCKK04}\\

\hline

3 & 1.5 \  \cite{BCKK04} &  - &  1.8333 & 1.9 & 1.8333 & 1.75 \ \cite{Epstein05} \\

\hline

4& 1.5 \  \cite{FK13} &  - &  2 & 2.1  &  2 &  1.86842 \ \cite{Epstein05}\\
\hline

5& 1.47058 \  \cite{FK13} &  1.5  & 2.1333 & 2.22 & 2.1  & 1.93719 \ \cite{Epstein05}\\
\hline
6 &  1.5 \  \cite{Yao80A} & -  & 2.2222 & 2.3 &  2.16667 &  1.99306 \ \cite{Epstein05}\\
\hline
7 & 1.5  \  \cite{Yao80A} &  1.5174825 &  2.2857 & 2.35714& 2.2381& 2\ \cite{BCKK04}   \\
\hline
8 & 1.5  \  \cite{Yao80A} &  1.5238095 & 2.3333 & 2.4 & 2.29167   &2\ \cite{BCKK04}  \\
\hline
9 & 1.5  \  \cite{Yao80A} & 1.52419355  & 2.3704 & 2.43333& 2.33333&    2 \ \cite{BCKK04}  \\
\hline
10 & 1.50943 \  \cite{FK13} &  1.525974 &  2.4 & 2.46& 2.36666 & 2  \ \cite{BCKK04}   \\
\hline
11 & 1.51724 \  \cite{FK13} &  1.525547 &  2.4273  & 2.481818 & 2.39394   & 2 \ \cite{BCKK04}  \\
\hline
12 & 1.53488  \  \cite{FK13} & -  & 2.45  & 2.5&  2.41667  & 2 \ \cite{BCKK04}  \\
\hline
\hline
\end{array}
$$
\caption{\label{tabtab} Bounds for $2 \leq k \leq 12$. The first column contains the previously known lower bounds on the asymptotic competitive ratio. The second column contains our improved lower bounds. The third column contains the tight asymptotic competitive ratio of FF (for $k=2,3,4$ it is also the absolute competitive ratio), the fourth column contains the previous upper bound on FF's asymptotic competitive ratio \cite{KSS75}, the fifth column contains the tight asymptotic competitive ratio of Harmonic \cite{Epstein05}, and the last column contains the asymptotic competitive ratio of the current best algorithm.}
\end{table}
}

\section{Lower bounds}\label{lbproofs}
In this section we present lower bounds for the two measures.

\subsection{Lower bounds on the absolute competitive ratio}
We show that the absolute competitive ratio is at least $2$ for $k \geq 4$. Together with the analysis of Section \ref{Kot}, we will find that this is the best possible competitive ratio.

\begin{proposition}
The absolute competitive ratio of any algorithm for $k \geq 4$ is at least $2$.
\end{proposition}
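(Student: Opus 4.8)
The standard approach for online bin packing lower bounds is an adversarial two-phase construction where the adversary reveals a batch of "small" items first, forcing the online algorithm to commit to a packing, and then reveals "large" items that pair badly with whatever the algorithm has done, while the optimal offline solution can pack everything cleverly because it sees the whole list. I would aim to produce, for a given $k\ge 4$, an input on which any deterministic online algorithm $A$ uses at least $2\cdot OPT$ bins, with $OPT$ being a small constant (so this is genuinely an absolute bound, not just asymptotic — a constant-size instance suffices since $2\cdot OPT$ needs to be achieved exactly, not in the limit).

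The concrete plan is as follows. Fix a small $\eps>0$. First present $N$ items of size $\eps$ (tiny items), where $N$ is chosen so that $OPT$ could pack them $k$ per bin into $N/k$ bins, but also so that $N$ is large relative to the number of bins $A$ opens. After $A$ packs these, look at how $A$ distributed the tiny items: by averaging, many bins contain few tiny items, OR few bins contain many — I would split into cases on the number of bins $A$ used for the first phase. In the second phase, present items of size close to $1$ (say $1-k\eps$ or just items of size slightly above $1/2$, chosen so that exactly one fits per bin with room only for a bounded number of tiny items). The key point: each large item either opens a brand-new bin for $A$, or goes into an existing bin that then can hold almost no more; meanwhile $OPT$ packs each large item together with a full complement of $k-1$ tiny items (if the large item is small enough) or bundles things to achieve a factor-$2$ gap. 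One clean version: use items of size $\tfrac12+\eps$ and items of size $\tfrac12-\eps$ in appropriate counts together with the tiny items, exploiting that two half-sized items fit in a bin (so $OPT$ is efficient) but $A$, having been forced to spread things out, cannot pair them. I would tune the counts so that $A(L)\ge 2\,OPT(L)$ exactly.

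The main obstacle is handling the adaptivity of the online algorithm: $A$ may anticipate the second phase and, e.g., refuse to consolidate the tiny items, or pack them in an unusual pattern. I expect to need a case analysis on the "profile" of $A$'s packing after phase one — specifically on how many bins have a given occupancy — and in each case extend the input (possibly with a third, even larger, batch) so the ratio $2$ is forced. The condition $k\ge 4$ should enter precisely here: with $k\ge 4$ there is enough room in a bin for $OPT$ to combine one medium/large item with several tiny ones (or two medium items with tiny filler), giving the offline side its efficiency, whereas $A$'s committed choices leave it stuck at twice the count; for $k=2,3$ the construction breaks because $OPT$ loses this bundling freedom (consistent with the table, which lists smaller bounds there). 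I would also verify that the construction uses only sizes in $(0,1]$ and that all cardinality constraints are respected on both the $A$ side and the $OPT$ side, and finally let $\eps\to 0$ if needed to make the arithmetic exact, or better, pick $\eps$ rational so the bound is attained on a finite instance.
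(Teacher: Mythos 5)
Your proposal is a plan rather than a proof, and as it stands it has a genuine gap: the construction you sketch would not force a ratio of $2$. The decisive trick you are missing in the first phase is to present \emph{exactly $k$} tiny items, not a large batch $N$ with a profile/averaging analysis. With exactly $k$ tiny items, $OPT=1$, so any algorithm that uses two bins already has ratio $2$ and the game is over; otherwise the algorithm is forced to put all $k$ of them into a single bin, which is then closed by the cardinality constraint. This completely eliminates the adaptivity concerns you flag (the "profile" case analysis), which is a sign the construction is not yet the right one. Second, your second phase is too coarse: with only tiny items followed by items of size about $\frac 12$, the algorithm that consolidates the tiny items ends with $3$ bins against $OPT=2$, i.e.\ ratio $\frac 32$, not $2$. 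The paper inserts an intermediate batch of two items of size $\frac 13+\eps$ and branches on the algorithm's response: if it opens two bins for them, a single item of size $\frac 23$ finishes the job ($4$ bins vs.\ $OPT=2$); if it pairs them in one bin, two items of size $\frac 12+\eps$ each need a fresh bin (again $4$ vs.\ $2$, where $OPT$ puts one half-item, one third-item, and about $\frac k2$ tiny items per bin). That three-way adaptive branching is the content of the proof, and it is absent from your plan.

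Your instinct about where $k\geq 4$ enters is essentially right: in the final branch $OPT$ must distribute the $k$ tiny items over two bins that already each hold two non-tiny items, which requires $\lceil k/2\rceil\leq k-2$, i.e.\ $k\geq 4$. But identifying the role of the hypothesis is not a substitute for the construction itself; to complete the proof you need the concrete sizes, the exact counts, and the verification that each $OPT$ packing respects both the size and cardinality constraints for a fixed (not limiting) $\eps$.
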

\begin{proof}
Let $0<\eps<\frac 1{3k}$.
The input starts with $k$ items, each of size $\eps$, called tiny items. Since an optimal solution packs them into one bin, if an online algorithm uses two bins, then we are done. Otherwise the algorithm packs them into one bin, and no further items can be combined into this bin, since it already has $k$ items. The next two items have sizes of $\frac 13+\eps$. If the algorithm packs them into two new bins, then the next item has size $\frac 23$ and it requires a new bin. An optimal solution packs the last item with $k-1$ tiny items, and the remaining three items into another bin, while the algorithm uses four bins.
Otherwise, the algorithm packs the two items of sizes $\frac 13+\eps$ into one bin. In this case the last two items have sizes of  $\frac 12+\eps$. The algorithm now has four bins, while an optimal solution has two bins, each with an item of size $\frac 12+\eps$, an item of size $\frac 13+\eps$, and $\lfloor \frac k2 \rfloor$ or $\lceil \frac k2 \rceil$ items of size $\eps$ each (which is possible since at most $k-2$ items of size $\eps$ are added; for $k=4$  $\frac k2=k-2=2$ holds, and $\lceil \frac k2 \rceil \leq \frac{k+1}2 \leq k-2$ holds for $k \geq 5$).
\end{proof}

In the case $k=2$, a lower bound of $\frac 32$  on the absolute
competitive ratio follows from an input that consists of two
items, each of size $\eps$, possibly followed by two items, each
of size $1-\eps$ (for $0<\eps<\frac 12$) \cite{Bek}.

Next, we present a lower bound of $\frac 74$ on the absolute
competitive ratio of any algorithm for $k=3$. Recall that the best
asymptotic competitive ratio for $k=3$ is in $[\frac 32,\frac
74]$. The upper bound of \cite{KSS75} for the asymptotic
competitive ratio of FF is $2.7-2.4/3=1.9$, and we will show a
tight bound of $\frac{11}{6}$ on the absolute and asymptotic
competitive ratios of FF.
\begin{proposition}
The absolute competitive ratio of any algorithm for $k=3$ is at least $\frac 74=1.75$.
\end{proposition}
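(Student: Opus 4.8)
The plan is to give an adversarial instance, analogous to the one used for $k\ge 4$, but calibrated so that the forced ratio is $\tfrac74$. Since $k=3$, after the algorithm commits three tiny items to a bin that bin is closed, so the same opening move works: start with $3$ items of size $\eps$ for a small $\eps>0$. If the algorithm opens two bins for these, we stop (ratio $\ge 2$). So assume the three tiny items sit alone in one bin. Now I would feed a short sequence of medium items, branching on the algorithm's choices, and in each branch compare the algorithm's bin count against what $OPT$ achieves when it is allowed to redistribute the tiny items among the medium ones.

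The key is to pick the medium sizes so that (i) $OPT$ can always pack very efficiently — in particular it can place one or two tiny items together with a medium item, staying within the size-$1$ and cardinality-$3$ limits — while (ii) the algorithm, having wasted its tiny items and being forced to react online, is driven into opening many one-per-bin bins. Concretely I expect to use items around size $\tfrac12$ and around size $\tfrac13$ (or $\tfrac14$): for instance follow the tiny items with a few items of size slightly above $\tfrac12$, then a few of size slightly above $\tfrac13$, possibly ending with one large item of size close to $1$. At each stage the adversary continues along the branch in which the algorithm's ratio is already $\ge \tfrac74$, or else notes that the algorithm has been forced into a configuration from which the final items make the ratio $\ge\tfrac74$. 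With a total of, say, $7$ or $8$ items and an $OPT$ value of $3$ or $4$, the target ratio $\tfrac74 = \tfrac{7}{4}$ is exactly of the right granularity.

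The main obstacle is the branching bookkeeping: unlike the $k\ge4$ case, here $\tfrac74$ is strictly below $2$, so a single "bad guess" by the algorithm is not by itself enough — I have to make sure that \emph{every} way the algorithm packs the medium items leads, after the final batch, to a ratio of at least $\tfrac74$, and simultaneously that $OPT$ can absorb the tiny items in all of these scenarios. So the delicate part is choosing the multiplicities of the medium items (how many near-$\tfrac12$'s, how many near-$\tfrac13$'s) and the size of the terminal large item so that the arithmetic $\text{ALG}/OPT \ge \tfrac74$ closes off in each leaf of the decision tree, while keeping $\eps$ small enough ($\eps<\tfrac16$ or so) that it never interferes with any of the size thresholds. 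Once the instance is fixed, verifying each branch is a routine finite check of bin levels and item counts.
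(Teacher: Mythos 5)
Your proposal correctly identifies the overall shape of the argument---an adversarial branching instance opened by three tiny items, followed by waves of medium items, with every leaf of the decision tree closing at ratio at least $\frac 74$---but it stops exactly where the actual proof begins. The entire mathematical content here is the concrete instance, and you have not produced one: you explicitly defer ``choosing the multiplicities of the medium items \dots and the size of the terminal large item'' to a later, unperformed verification. The part you defer is not routine. In particular, after the algorithm pairs two items of size $\frac 13+\eps$ into a single bin, that bin still has residual capacity $\frac 13-2\eps$ and one free cardinality slot, so the next wave of medium items must be perturbed upward (the paper uses $\frac 13+3\eps$) precisely so that they cannot enter it; this cascading choice of perturbations, repeated at each level of the tree, is the delicate mechanism that makes every branch close, and it is absent from your sketch.

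Your candidate ingredients also do not match a working construction. The paper's final branch uses four items of size $\frac 23-4\eps$: each exceeds $\frac 12$ (so no two fit together) and exceeds the residual capacities $\frac 13-2\eps$ and $\frac 13-6\eps$ of the two open medium bins, forcing four dedicated bins and a total of $7$ algorithm bins against $OPT=4$, where $OPT$ pairs each item of size $\frac 23-4\eps$ with one medium item and distributes the three tiny items among three of those four bins. A terminal item ``of size close to $1$,'' as you suggest, would be counterproductive: $OPT$ could not combine it with any medium item, so it costs $OPT$ a full bin and weakens the ratio. Similarly, items slightly above $\frac 12$ play no role in the $k=3$ construction (they appear only in the $k\geq 4$ proof). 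So while your strategic outline is sound and consistent with the paper's approach, the proposal as written has a genuine gap: the instance whose existence is the theorem's proof is not exhibited, and the sizes you tentatively propose would not yield one.
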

\begin{proof}
Let $0<\eps<\frac 1{24}$. 
The input starts with three tiny jobs of size $\eps$ each. Since an optimal solution can pack them into one bin, to avoid a competitive ratio of at least $2$, the algorithm must do the same. Note that the bin containing these items cannot receive any additional items. Next, two items of sizes $\frac 13+\eps$ arrive. If the last two items are packed into separate bins, an item of size $\frac 23$ is presented. An optimal solution can pack the items into two bins; the last item is combined with two tiny items, and the remaining three items are packed into a second bin, which is possible given the value of $\eps$.

Otherwise, the algorithm has two bins, where the second one can still receive one item of size at most $\frac 13-2\eps$. The remaining items will be larger, and thus they will be packed into new bins. Now, two items of sizes $\frac 13+3\eps$ arrive. If the algorithm uses two new bins to pack them, then two items whose sizes are equal to $\frac 23-2\eps$ arrive, and the algorithm is forced to use two new bins for them, for a total of six bins. An optimal solution uses three bins; two bins contain (each) an item of size $\frac 23-2\eps$, an item of size $\frac 13+\eps$, and a tiny item. The remaining three items have total size $\frac 23+7\eps$ and can be packed into a third bin by an optimal solution. Thus, the competitive ratio is $2$ in this case.

Otherwise, the algorithm has three bins, where the second and third bins can still receive one item each, but no item of size at least $\frac 13$ can be packed there. The remaining items will be larger than $\frac 13$, and thus they will be packed into new bins. Specifically, there are four items whose sizes are equal to $\frac 23-4\eps>\frac 12$. Each such item must be packed into a new dedicated bin, for a total of seven bins. An optimal solution can combine each such item with an item of size $\frac 13+\eps$ or an item of size $\frac 13+3\eps$, and three such bins also receive a tiny item.
\end{proof}


\subsection{Lower bounds on the asymptotic competitive ratio}
In this section, we present improved lower bounds for  ratio for ${k=5}$ and ${7\leq k\leq 11}$.
To prove these lower bounds, we will consider inputs that consist of at most four batches.
Let $0<\del<\frac{1}{2000}$,
and let $N$ be a large integer divisible by $6k$. For $7 \leq k \leq 11$, the first batch contains
$\frac{k-6}6 \cdot N$ items, and for $k=5$, the first batch contains $\frac N2$ items, each of size $\frac{1}{42}-3\del>0$. We use $\phi_k=\frac{k-6}{6k}$ for $7 \leq k \leq 11$, and $\phi_5=\frac 1{10}$. Thus, the first batch consists of $ k \phi_k \cdot N$ items.
The second batch contains $N$ items, each of size
$\frac{1}{7}+\del$, the third batch contains $N$ items, each of
size $\frac{1}{3}+\del$, and the fourth batch contains $N$ items,
each of size $\frac{1}{2}+\del$. Note that the first batch has a
smaller number of items than later batches for all values of $k$. The input may stop
after any batch, and thus there are four possible inputs, denoted
by $L_1$, $L_2$, $L_3$, and $L_4$. For $i=1,2,3,4$, an optimal
solution for $L_i$ has bins packed almost identically.
For $k=5$, clearly,
$OPT(L_1)\geq \frac{N}{10}$  and $OPT(L_2)\geq \frac{3N}{10}$
as no bin can have more than
$k$ items, $OPT(L_3) \geq \frac N2$, as no bin
can have more than two items of the third batch, and $OPT(L_4)
\geq N$, as no bin can have more than one items of the fourth
batch.
For $7 \leq k \leq 11$, clearly,
$OPT(L_1)\geq \frac{N\cdot(k-6)}{6k}$ as no bin can have more than
$k$ items, $OPT(L_2) \geq \frac N6$, as no bin can have more than
six items of the second batch, $OPT(L_3) \geq \frac N2$, as no bin
can have more than two items of the third batch, and $OPT(L_4)
\geq N$, as no bin can have more than one items of the fourth
batch.
Next, we define solutions that achieve those numbers and
thus they are optimal.  For $k=5$, optimal solutions for $L_1$ and for $L_2$ have five
items in each bin, which is possible since $5 \cdot (\frac 1{7}+\del) =\frac 57+5\del<1$. Thus, $OPT(L_1)=\frac {N}{10}$ and $OPT(L_2)=\frac {3N}{10}$. An optimal
solution for $L_3$ has two items of the third batch, two items of
the second batch, and one item of the first batch in each
bin, $2(\frac
13+\del)+2(\frac{1}{7}+\del)+(\frac{1}{42}-3\del)<1$, all
items are packed, and $OPT(L_3)=\frac {N}{2}$. Finally, an optimal
solution for $L_4$ has at most one item of the first batch and
exactly one item of any other batch packed into each bin. All
items are packed, $OPT(L_4)=N$, and $(\frac 12+\del)+(\frac
13+\del)+(\frac 17+\del)+(\frac 1{42}-3\del)=1$.
For $7 \leq k \leq 11$, an optimal solution for $L_1$ has $k$
items in each bin, which is possible since $k \cdot (\frac 1{42}-3\del) < \frac
{11}{42}<1$. Thus, $OPT(L_1)=\frac {N(k-6)/6}{k}$. An optimal solution
for $L_2$ has six items of the second batch and $k-6$ items of the
first batch in each bin, and $OPT(L_2)=\frac {N}{6}$. The solution
is valid as each bin has $k$ items,
$6(\frac{1}{7}+\del)+(k-6)(\frac{1}{42}-3\del) = \frac
67+\frac{k-6}{42}+(24-3k)\del \leq
\frac{6}{7}+\frac{5}{42}+24\del=\frac{41}{42}+24\del<1$, as $\del < \frac{1}{2000}$, and
all the items of the first two batches are packed. An optimal
solution for $L_3$ has two items of the third batch, two items of
the second batch, and at most two items of the first batch in each
bin, $2(\frac
13+\del)+2(\frac{1}{7}+\del)+2(\frac{1}{42}-3\del)=1-2\del<1$, all
items are packed, and $OPT(L_3)=\frac {N}{2}$. Finally, an optimal
solution for $L_4$ has at most one item of the first batch and
exactly one item of any other batch packed into each bin. All
items are packed, $OPT(L_4)=N$, and $(\frac 12+\del)+(\frac
13+\del)+(\frac 17+\del)+(\frac 1{42}-3\del)=1$. In all cases we have $OPT(L_1)=N\cdot \phi_k$. We use $\phi'_k=\frac{OPT(L_2)}N$, i.e., $\phi'_5=\frac{3}{10}$, and $\phi'_k=\frac 16$ for $7 \leq k \leq 11$.

Consider a deterministic or randomized algorithm $A$.  Let $X_i$
be the number of bins (or expected number of bins) that the
algorithm opens while packing the items of batch $i$. Assume that the competitive ratio
is $R$. Let $f$ be a function where $f(n)=o(n)$ such that for any
input $I$, $A(I) \leq R \cdot OPT(I)+f(OPT(I))$. We have $A(L_i)
\leq R \cdot OPT(L_i)+f(OPT(L_i))$. Note that $A(L_i)=\sum_{j=1}^i
X_j$. For any set of four parameters $\alp_i\geq 0$ for
$i=1,2,3,4$ (constants that are independent of $N$), we find
$\sum_{i=1}^4 \alp_i A(L_i) \leq \sum_{i=1}^4 \alp_i(R \cdot
OPT(L_i)+f(OPT(L_i)))$. Letting $\beta_i=\sum_{j=i}^4 \alpha_i$
and rewriting it we get $\sum_{i=1}^4 \beta_i X_i \leq RN \cdot
(\alpha_1 \phi_k+\frac{\alpha_2}\phi'_k+\frac{\alpha_3}2+\alpha_4)+
 \sum_{i=1}^4 \alpha_i f(OPT(L_i))$.

We define weights for the items. Let $w_i \geq 0$ be the weight of
an item of batch $i$.  The total weight of items of $L_4$ is $w_1 \cdot k\phi_k
\cdot N + w_2 \cdot N+ w_3 \cdot N+ w_4 \cdot N$. Let $W_i$ denote
the maximum weight of any bin opened by the algorithm for batch
$i$ or a later batch (possibly used for additional items later). We have $W_1 \geq W_2 \geq W_3 \geq W_4$.  The total weight of items is, denoted by $W$ satisfies $W \leq \sum_{i=1}^4 W_iX_i$. Thus $W=w_1 \cdot k\phi_k \cdot N + w_2 \cdot N+ w_3 \cdot N+ w_4 \cdot N = W \leq \sum_{i=1}^4 W_iX_i $.
Let $W_5=0$, and $\beta_i=W_i$ (i.e., $\alpha_i=W_i-W_{i+1}$ for $i=1,2,3,4$).

\begin{lemma}
For $ 7 \leq k \leq 11$, we have $R \geq \frac{\frac{k-6}6 w_1
 + w_2 + w_3 + w_4}{W_1/6-W_1/k+W_2/k+W_3/3+W_4/2}$. For $k=5$ we have $R \geq \frac{w_1/2
 + w_2 + w_3 + w_4}{W_1/10+W_2/5+W_3/5+W_4/2}$.
\end{lemma}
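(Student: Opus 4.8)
The plan is to push the item weights through the competitive-ratio estimate set up just above the lemma, choosing the multipliers so that the left-hand side collapses to exactly the sum $\sum_i W_iX_i$ that upper bounds the total weight of the final input.

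First I would record two facts about the total weight $W$ of $L_4$. Since $L_4$ has $k\phi_k N$ items of the first batch and $N$ items of each later batch, $W=N(k\phi_k w_1+w_2+w_3+w_4)$, where $k\phi_k=\frac{k-6}{6}$ for $7\le k\le 11$ and $k\phi_5=\frac12$. On the other hand, a bin that is first opened while batch $i$ is being packed has final weight at most $W_i$ by definition, and exactly $X_i$ bins are opened during batch $i$; summing over $i=1,\dots,4$ gives $W\le\sum_{i=1}^4 W_iX_i$.

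Next I would take $\alpha_i=W_i-W_{i+1}$ (with $W_5:=0$) as the multipliers; these are nonnegative since $W_1\ge W_2\ge W_3\ge W_4\ge 0$, and they telescope so that the coefficient of $X_i$ in $\sum_j\alpha_j A(L_j)$ is $\beta_i=\sum_{j\ge i}\alpha_j=W_i$. Forming the nonnegative combination of the bounds $A(L_i)\le R\cdot OPT(L_i)+f(OPT(L_i))$ and using $A(L_i)=\sum_{j\le i}X_j$ together with $OPT(L_1)=N\phi_k$, $OPT(L_2)=N\phi'_k$, $OPT(L_3)=\frac N2$, $OPT(L_4)=N$, I obtain
$$\sum_{i=1}^4 W_iX_i\le RN\bigl((W_1-W_2)\phi_k+(W_2-W_3)\phi'_k+(W_3-W_4)\tfrac12+W_4\bigr)+o(N),$$
the error being $o(N)$ because $f$ is sublinear and each $\alpha_i$ is at most $k\max_j w_j$. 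Chaining this with $W\le\sum_i W_iX_i$, substituting $W$, dividing by $N$ and letting $N\to\infty$ yields $k\phi_k w_1+w_2+w_3+w_4\le R\cdot D$, where $D$ is the bracketed expression. Since $D$ is a convex combination of $W_1,\dots,W_4$ — its coefficients $\phi_k$, $\phi'_k-\phi_k$, $\frac12-\phi'_k$, $\frac12$ are nonnegative and sum to $1$ — it is positive for the relevant weight vectors, so $R\ge(k\phi_k w_1+w_2+w_3+w_4)/D$. Finally I would substitute $\phi_k=\frac{k-6}{6k}=\frac16-\frac1k$, $\phi'_k=\frac16$ for $7\le k\le 11$ (so $D=\frac{W_1}{6}-\frac{W_1}{k}+\frac{W_2}{k}+\frac{W_3}{3}+\frac{W_4}{2}$) and $\phi_5=\frac1{10}$, $\phi'_5=\frac3{10}$ for $k=5$ (so $D=\frac{W_1}{10}+\frac{W_2}{5}+\frac{W_3}{5}+\frac{W_4}{2}$), which is exactly the claim.

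Essentially everything here is telescoping and arithmetic; there is no genuine obstacle. The one point deserving care is that $W_1,\dots,W_4$ depend on the instance, hence on $N$: this is harmless because a bin holds at most $k$ items of weight at most $\max_j w_j$, so these four quantities lie in a fixed compact set, the $o(N)$ term is uniform, and one may pass to a convergent subsequence of the $W_i$ before taking $N\to\infty$, so that the final inequality holds for the limiting values of $W_1,\dots,W_4$.
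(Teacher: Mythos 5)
Your proof is correct and follows essentially the same route as the paper: choose $\alpha_i=W_i-W_{i+1}$ so the combination telescopes to $\sum_i W_iX_i$, lower-bound that sum by the total weight $N(k\phi_k w_1+w_2+w_3+w_4)$, absorb the $f(OPT(L_i))$ terms into $o(N)$, and substitute the values of $\phi_k,\phi'_k$. Your extra remarks on the positivity of the denominator and the uniform boundedness of the $W_i$ are fine but not a departure from the paper's argument.
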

\begin{proof}
We have $N(w_1 \cdot k\phi_k + w_2 + w_3 + w_4)  \leq \sum_{i=1}^4 \beta_iX_i \leq RN((W_1-W_2)\phi_k+(W_2-W_3)\phi'_k+\frac{W_3-W_4}2+W_4)+ \sum_{i=1}^4 (W_i-W_{i+1}) f(OPT(L_i))$.

We find $\sum_{i=1}^4 (W_i-W_{i+1}) f(OPT(L_i))=o(N)$, since $OPT(L_i) =\Theta(N)$ for $i=1,2,3,4$, and the values $W_i$ are constants independent of $N$.

Thus, $w_1 \cdot k\phi_k  + w_2 + w_3+ w_4 \leq  R((W_1-W_2)\phi_k+(W_2-W_3)\phi'_k+\frac{W_3-W_4}2+W_4)$, or alternatively, $w_1 \cdot k\phi_k  + w_2 + w_3+ w_4 \leq  R(W_1\phi_k+W_2(\phi'_k-\phi_k)+W_3(\frac 12-\phi'_k)+\frac{W_4}2)$.
\end{proof}

For $k=5$, let $w_2=1$ and $w_1=w_3=w_4=2$.
For $k=7,8$, let $w_1=1$, $w_2=1$, $w_3=w_4=2$. For $k=9$, let
$w_1=1$, $w_2=2$, $w_3=w_4=4$. For $k=10,11$, let $w_1=1$,
$w_2=3$, $w_3=w_4=6$.

\begin{lemma}
For $k=5$, we have $W_1\leq 10$, $W_2 \leq 6$, $W_3 \leq 4$, and $W_4
\leq 2$.

For $k=7,8$, $W_1\leq k+2$, $W_2 \leq 6$, $W_3 \leq 4$, and $W_4
\leq 2$.

For $k=9$, $W_1\leq 16$, $W_2 \leq 12$, $W_3 \leq 8$, and $W_4
\leq 4$.

For $k=10,11$, $W_1\leq k+12$, $W_2 \leq 18$, $W_3 \leq 12$, and
$W_4 \leq 6$.
\end{lemma}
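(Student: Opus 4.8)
The plan is to reduce each of the eight bounds to a tiny finite optimization. The starting observation is that a bin counted in $W_i$ is empty immediately before batch $i$ arrives (it is ``opened for batch $i$ or later''), so it never receives an item of a batch $j<i$; hence it suffices to upper bound the weight of an arbitrary bin that is \emph{feasible} (total size at most $1$, at most $k$ items) and whose items all belong to batches $i,\dots,4$. I would next record the only size facts needed, all valid because $\del<\frac1{2000}$: a feasible bin has at most one batch-$4$ item (since $2(\frac12+\del)>1$); at most two items from batches $3$ and $4$ together (since $3(\frac13+\del)>1$, $2(\frac12+\del)>1$ and $2(\frac13+\del)+(\frac12+\del)>1$, while $2(\frac13+\del)<1$ and $(\frac13+\del)+(\frac12+\del)<1$); and at most six batch-$2$ items if no batch-$3$/$4$ item is present (since $7(\frac17+\del)>1$). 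Since $w_3=w_4$ in every case, the bounds $W_4\le w_4$ and $W_3\le 2w_3$ drop out immediately, and since moreover $w_3=2w_2$, the bound $W_2\le 2w_3+2w_2$ (which equals $6,6,12,18$ respectively) will follow from the case analysis below.

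For $W_1$ and $W_2$ I would split on the pair $(c,d)$ recording the numbers of batch-$3$ and batch-$4$ items; by the size facts the only possibilities are $(c,d)\in\{(0,0),(1,0),(2,0),(0,1),(1,1)\}$. Fixing $(c,d)$ leaves a residual size budget $1-c(\frac13+\del)-d(\frac12+\del)$, and dividing it by $\frac17+\del$ caps the number $b$ of batch-$2$ items; for $W_1$ the number $a$ of batch-$1$ items is then at most the minimum of the further residual budget over $\frac1{42}-3\del$ and the cardinality slack $k-b-c-d$. In each of the resulting (about a dozen per value of $k$) configurations I would evaluate the weight $aw_1+bw_2+(c+d)w_3$ and check that it is within the claimed value; the maximum over the cases is the lemma. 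Two subcases short-circuit the work: for $k=5$ every item has weight at most $w_1=2$, so $W_1\le 2k=10$ with no case analysis; and for $k=7,8$, where $w_1=w_2=1$ and $w_3=w_4=2$, the weight of any bin equals its number of items plus $(c+d)$ and is therefore at most $k+2$.

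The part I expect to be fiddly — but not deep — is the $\del$-bookkeeping inside the $(c,d)$-cases: one must track whether $a$ is constrained by the residual size or by the cardinality bound (for the larger values of $k$ it is the cardinality bound, which is exactly what produces the ``$+k$'' term in $W_1\le k+2$ and $W_1\le k+12$, while for the smaller values the size bound binds), and in the borderline configurations one must verify that inserting one more small item really does push the level above $1$, which is precisely where $\del<\frac1{2000}$ is used. There is no conceptual obstacle; the only genuine risk is mis-enumerating a case or an off-by-one in a size estimate, so as a safeguard I would exhibit, for each value of $k$, an explicit feasible bin attaining the stated maximum.
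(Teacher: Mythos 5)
Your proposal is correct and follows essentially the same route as the paper: the bounds $W_4\le w_4$ and $W_3\le 2w_3$ are immediate, $W_2$ is bounded by a case split on the number of batch-$3$/$4$ items (giving $\max\{6w_2,4w_2+w_3,2w_2+2w_3\}=6w_2$), and $W_1$ is bounded by the same dichotomy the paper uses — the cardinality constraint supplies the ``$+k$'' term when few large items are present, while in the dense configurations (e.g.\ four batch-$2$ items plus one batch-$3$/$4$ item, or two of each) the residual size $<\frac{1}{42}\cdot(\text{something small})$ caps the number of batch-$1$ items, exactly where $\del<\frac{1}{2000}$ is invoked. Your systematic enumeration over $(c,d,b)$ with $a=\min(\text{size slack},\,k-b-c-d)$ is just a finer-grained organization of the paper's case analysis and yields the same maxima.
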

\begin{proof}
The claim regarding $W_4$ holds since $W_4=w_4$ must hold (as
every bin opened for the last batch contains exactly one
item).
Moreover, any bin opened for the third batch will contain only
items of sizes strictly above $\frac 13$, so it can contain at
most two items and since $w_3=w_4$ in all cases, $W_3 \leq 2w_3$.

Let $k=5$. Consider a bin that was opened for the second batch. If it has two items of later batches, it can have at most two items of the second batch, so the total weight is at most $2w_2+2w_3=6$. If it has at most one item of later batches, then all its items have weight $1$ except for at most one item whose weight is at most $2$, for a total of $6$, since there are at most five items in total.
Finally, consider a bin that was opened for the first batch. Every item has weight of at most $2$, and there are at most five items, thus the total weight is at most $10$.

We are left with $7 \leq k \leq 11$, and bounding $W_1$ and $W_2$.
A bin that was opened for the second batch can contain at most six
items of that batch. If it contains one larger item, then it can
contain at most four items of the second batch, and if it contains
two larger items, then it can contain at most two items of the
second batch. It cannot contain more than two larger items, and
since $w_3=w_4$, we do not distinguish the items of the last two
batches. Thus $W_2 \leq \max\{6w_2,w_3+4w_2,2w_3+2w_2\}=6w_2$. We find
$W_2=6$ for $k=7,8$, and $W_2=12$ for $k=9$, and $W_2=18$ for
$k=10,11$.

Finally, consider bins opened for the first batch. We start with the
case $k=7,8$. A bin contains at most $k$ items, where items of the
last two batches have weights of $2$ and items of the first two
batches have weights of $1$. Thus, the total weight is at most $k$
plus the number of items of sizes above $\frac 13$, which is at
most $2$. The total weight is therefore at most $k+2$.

Consider the cases $k=9,10,11$. Given a bin that contains at most
three items of batches $2,3,4$, it can contain at most two items
of batches $3,4$, so the total weight is at most
$k+2(w_3-1)+(w_2-1)$. This last value is equal to $2w_3+w_2+k-3=5w_2+k-3$,
which is equal to $16$ for $k=9$ and to $k+12$ for $k=10,11$. We
are left with the case that there are at least four items of
batches $2,3,4$ packed into the bin. If all those items are of batch $2$, then there
are at most six such items, and the total weight is at most
$k+6(w_2-1)=6w_2+k-6 \leq 5w_2+k-3$ as $w_2 \leq 3$. If there is one item of the last two
batches, there are at most four items of batch $2$. If there are
at most three items of batch $2$, then the weight is at most
$k+3(w_2-1)+(w_3-1)=3w_2+w_3+k-4=5w_2+k-5$. If there is
one item of the last two batches and four items of the second
batch, their total size is at least $\frac 13+\del+4(\frac
17+\del)=\frac{19}{21}+5\del$. The remaining space can only
contain four items of the first batch as $5(\frac
1{42}-3\del)+\frac{19}{21}+5\del=1+\frac{1}{42}-10\del>1$. The
total weight is therefore at most $4w_1+4w_2+w_3$, which is equal
to $16$ for $k=9$, and to $22 \leq k+12 $ for $k=10,11$.

If there are two items of the last two batches, there are at most
two items of batch $2$. If there is at most one item of batch $2$,
then the weight is at most
$k+(w_2-1)+2(w_3-1)=w_2+2w_3+k-3=5w_2+k-3$. If there are two items of the last two
batches and two items of the second batch, their total size is at least
$2(\frac 13+\del)+2(\frac 17+\del)=\frac{20}{21}+4\del$. The
remaining space can only contain two items of the first batch as
$3(\frac 1{42}-3\del)+\frac{20}{21}+4\del=1+\frac{1}{42}-5\del>1$.
The total weight is therefore at most $2w_1+2w_2+2w_3$, which is
equal to $14$ for $k=9$, and to $20 \leq k+12 $ for $k=10,11$.
\end{proof}

\begin{corollary}
The following values are lower bounds on the competitive ratios.
\begin{itemize}
\item $\frac 32=1.5$ for $k=5$.
\item $\frac{k^2+24k}{k^2+10k+24}$ for $k=7,8$. This value is equal to $217/143 \approx 1.5174825$ for $k=7$ and to $\frac{32}{21} \approx 1.5238095$ for $k=8$.
\item $\frac{10.5}{62/9}=\frac{189}{124} \approx 1.52419355$, for $k=9$.
\item $\frac{k^2+84k}{k^2+48k+36}$ for $k=10,11$. This value is equal to $235/154 \approx 1.525974$ for $k=10$ and to $\frac{209}{137} \approx 1.525547$ for $k=11$.
\end{itemize}
\end{corollary}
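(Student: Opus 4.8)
The plan is to combine the two preceding lemmas directly. The first lemma already bounds the competitive ratio $R$ from below by an explicit ratio: for $7\le k\le 11$ its numerator is $\frac{k-6}{6}w_1+w_2+w_3+w_4$ and its denominator is $\frac{W_1}{6}-\frac{W_1}{k}+\frac{W_2}{k}+\frac{W_3}{3}+\frac{W_4}{2}$, while for $k=5$ the numerator is $\frac{w_1}{2}+w_2+w_3+w_4$ and the denominator is $\frac{W_1}{10}+\frac{W_2}{5}+\frac{W_3}{5}+\frac{W_4}{2}$. The second lemma supplies, for exactly the concrete weight vectors fixed just before its statement, upper bounds on $W_1,W_2,W_3,W_4$. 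So the whole proof reduces to substituting these weights and these $W_i$-bounds into the first lemma and simplifying.

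The one point that needs a moment's care is that substituting \emph{upper} bounds on the $W_i$ into the denominator is legitimate only because the denominator is nondecreasing in each $W_i$. All its coefficients except possibly that of $W_1$ are manifestly nonnegative; the coefficient of $W_1$ equals $\frac16-\frac1k$ (respectively $\frac1{10}$ when $k=5$), which is positive since $k\ge 7>6$. Hence increasing any $W_i$ only increases the denominator, so plugging in the upper bounds of the second lemma can only decrease the ratio, and the resulting inequality on $R$ remains valid.

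It then remains to carry out four short arithmetic reductions, one per family of $k$. For $k=5$ the numerator is $6$ and the denominator is at most $\frac{10}{10}+\frac65+\frac45+\frac22=4$, giving $R\ge\frac32$. For $k=7,8$ the numerator is $\frac{k-6}{6}+5=\frac{k+24}{6}$ and, putting the denominator over the common denominator $6k$, it is at most $\frac{k^2+10k+24}{6k}$, so $R\ge\frac{k^2+24k}{k^2+10k+24}$, which equals $\frac{217}{143}$ and $\frac{32}{21}$. For $k=9$ the numerator is $\frac12+10=\frac{21}{2}$ and the denominator is at most $\frac{16}{6}-\frac{16}{9}+\frac{12}{9}+\frac83+2=\frac{62}{9}$, so $R\ge\frac{189}{124}$. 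For $k=10,11$ the numerator is $\frac{k-6}{6}+15=\frac{k+84}{6}$ and, again over $6k$, the denominator is at most $\frac{k^2+48k+36}{6k}$, so $R\ge\frac{k^2+84k}{k^2+48k+36}$, which equals $\frac{235}{154}$ and $\frac{209}{137}$.

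I do not expect a genuine obstacle here: all the combinatorial content — in particular the case analysis bounding $W_1$, the maximum total weight of a bin first opened for the batch of tiny items — has already been discharged in proving the two lemmas, and what is left is bookkeeping, namely clearing denominators over $6k$ to obtain the two closed forms and checking the sign condition that licenses the substitution.
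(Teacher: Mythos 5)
Your proposal is correct and matches the paper's (implicit) derivation: the corollary is obtained exactly by substituting the fixed weight vectors and the $W_i$ upper bounds of the second lemma into the ratio of the first lemma, and your arithmetic for all four families of $k$ checks out. The monotonicity remark justifying the substitution of upper bounds (the coefficient $\frac16-\frac1k$ of $W_1$ being positive for $k\geq 7$, and $\frac1{10}$ for $k=5$) is a worthwhile point of care that the paper leaves unstated.
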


Note that in the cases $k=6$ and $k=12$, our methods do not produce improved lower bounds, and they give exactly the known lower bound.

\section{A complete analysis of First Fit}\label{algs}
We provide a complete analysis of the asymptotic competitive
ratio. For $k=2,3,4$, the bounds that we find are the absolute
competitive ratios as well. In the analysis, a bin of FF that has
$j$ items for $j \leq k$ is called  a $j$-bin, and a bin whose
number of items is in $[j,k-1]$ for some $1\leq j<k$ is called a
$j^+$-bin.

We find that the asymptotic competitive ratio of FF is $2.5-\frac
2k$ for $k=2,3,4$,  $\frac{8(k-1)}{3k}=\frac 83-\frac{8}{3k}$ for
$4 \leq k \leq 10$, and  $2.7-\frac 3k$ for $k \geq 10$ (recall that the values $k=4$ and $k=10$ are included in two cases each). The
values for $k=2,3,\ldots,12$ are given in Table{tabtab}. An interesting property is that for large values
of $k$ ($k$ tending to infinity) both the competitive ratio of the
cardinality constrained Harmonic algorithm and FF have competitive
ratios that are larger by $1$ than their competitive ratios for
standard bin packing. Thus, Harmonic has a slightly smaller
competitive ratio of $1.69103$. Moreover, it can be verified that
the worst-case examples of Harmonic are valid (but not tight) for FF. For
$k=2,3,4$ they have the same competitive ratios, but not for
$k\geq 5$, and in many cases the competitive ratio of Harmonic is
much smaller (see examples in Table \ref{tabtab}).

We start with examples showing that the asymptotic competitive
ratios cannot be smaller. For $2\leq k \leq 4$, let ${\ell} \geq 0$ be a large integer,
and let $0 < \eps < \frac 1{9k}$. Consider an input consisting of
$2k(k-2){\ell}$ items of size $\eps$ each (smallest items),
$2k{\ell}$ items of size $\frac 12-k\eps>\frac 13$ each (medium
size items), and $2k{\ell}$ items of size $\frac 12+\eps$ each
(largest items). The items are presented in this order. FF creates
$2(k-2){\ell}$ bins containing $k$ smallest items each. Then, as
further items are larger than $\frac 13$, FF creates $k{\ell}$
bins containing pairs of medium size items, and as the remaining
items are larger than $\frac 12$, the largest items are packed
into $2k{\ell}$ dedicated bins. For this input $L_{\ell}$,
$OPT(L_{\ell})=2k{\ell}$, since it is possible to pack a largest
item, a medium size item, and $k-2$ smallest items into a bin as
$\frac 12+\eps +\frac 12 -k\eps+ (k-2)\eps < 1$, while
$FF(L_{\ell})=2(k-2){\ell}+k{\ell}+2k{\ell}=5k{\ell}-4{\ell}$.
This shows that the asymptotic competitive ratio of FF is at least
$2.5-\frac 2k$, that is, at least $\frac{11}6$ for $k=3$ and at
least $2$ for $k=4$. The example is valid for $k=2$ too, giving
the value $1.5$ (in this case there are no smallest items).

In the case $5 \leq k \leq 10$, let $\ell$ be a positive integer
divisible by $k$, let $0< \eps < \frac 1{120}$ and $\delta <
\frac{\eps}{3^{\ell+4}}$ be small positive values, and consider
the following input. There are $3\ell$ items of size $\frac
12+\del$, $\ell$ items of size $\frac 12-10\del$, $\ell$ items of
size $\frac 14+20\del$, $\ell$ items of size $\frac 14-30\del$,
$(3k-8)\ell$ items of size $\del$, and for $1\leq p\leq \ell$
there is a pair of items of sizes $\frac 14+\frac{\eps}{3^p}$ and
$\frac 14-\frac{\eps}{3^p}-10\del$. Since $\del<\eps<\frac
1{120}$, all sizes are strictly positive. An optimal solution has
three types of bins. There are $\ell$ bins with an item of size
$\frac 12+\del$, an item of size $\frac 12-10\del$, and $k-2 \leq
8$ items of size $\del$ each. There are $\ell$ bins with an item
of size $\frac 12+\del$, an item of size $\frac 14+20\del$,  an
item of size $\frac 14-30\del$ and $k-3 \leq 7$ items of size
$\del$ each. Finally, there are $\ell$ bins, where the $p$th bin
has an item of size $\frac 12+\del$, the pair of items of sizes
$\frac 14+\frac{\eps}{3^p}$ and $\frac
14-\frac{\eps}{3^p}-10\del$, and $k-3 \leq 7$ items of size $\del$
each. Remove the items of sizes $\frac 14+\frac{\eps}{3^{\ell}}$
and $\frac 14-10\del-\frac{\eps}{3}$, and one item of size $\frac
14-30\del$ from the input. Obviously, an optimal solution still
requires at most $3\ell$ bins. For $1 \leq p \leq \ell-1$, the
items of sizes  $\frac 14+\frac{\eps}{3^{p}}$ and $\frac
14-10\del-\frac{\eps}{3^{p+1}}$ are called a modified pair of
index $p$.

The items are presented to FF in the following order. First, all
items of size $\del$ are presented and packed into
$(3k-8)\frac{\ell}k$ bins that cannot be used again. Next, for $1
\leq p \leq \ell-1$, the modified pair of items of index $p$ is
presented, followed by an item of size $\frac 14-30\del$. The
total size of these three items is $\frac
14+\frac{\eps}{3^{p}}+\frac 14-10\del-\frac{\eps}{3^{p+1}}+\frac
14-30\del=\frac 34-40\del+\frac{2\eps}{3^{p+1}}>\frac
34-\frac{40\eps}{3^{\ell+4}}+\frac{2\eps}{3^{p+1}}\geq \frac
34-\frac{40\eps}{3^{p+5}}+\frac{2\eps}{3^{p+1}}>\frac
34+\frac{\eps}{2\cdot 3^{p}}$, while further items have sizes of
$\frac 12+\del>\frac 12-10\del>\frac
14+20\del>\frac 14$, $\frac 14+\frac{\eps}{3^{p'}}>\frac 14$, $\frac 14-30\del > \frac 14-10\del-\frac{20\eps}{3^{\ell+4}} \geq \frac 14-10\del-\frac{\eps}{3^{\ell+1}} > \frac 14-10\del-\frac{\eps}{3^{p'+1}}$, and
$\frac 14-10\del-\frac{\eps}{3^{p'+1}}$, where $p'\geq p+1$ (further modified pairs exist only if $p<\ell-1$). We have $\frac
34+\frac{\eps}{2\cdot 3^{p}}+\frac
14-\frac{\eps}{3^{p'+1}}-10\del\geq 1+\frac{\eps}{2\cdot
3^{p}}-\frac{\eps}{3^{p+2}}-\frac{10\eps}{3^{p+5}}=1+\frac{\eps(
3^5/2-3^3-10)}{3^{p+5}}>1$. This proves that after a bin of a
modified pair and an item of size $\frac 14-30\del$ is created, no
further items can be packed into that bin. When no modified pairs
remain, pairs of items of sizes $\frac 12-10\del$ and $\frac 14
+20\del$ are presented (there are $\ell$ such pairs). Each bin
receives such a pair, whose total size is $\frac 34+10\del$. Since
all remaining items have sizes above $\frac 14$, each created bin
will not be used for further items. Finally, all remaining items
(of sizes $\frac 12+\del$) are packed into dedicated bins. The
total number of bins is
$(3k-8)\frac{\ell}k+\ell-1+\ell+3\ell=\frac{8k-8}{k}\ell-1$. Since
an optimal solution has at most $3\ell$ bins, we find that the
asymptotic competitive ratio is at least $\frac{8(k-1)}{3k}$.

In the case $k \geq 10$, we adapt the lower bound example of FF
\cite{JDUGG74} by adding tiny items. The original construction was such that almost every bin of OPT (all bins except for a constant number of bins) had an item whose size was $\frac 12+\del$.
We replace those items with items of sizes $\frac 12+\frac{\del}2$, and add $k-3$ tiny items of sizes $\frac {\del}k$ to each such bin. All bins of OPT that had an item of size $\frac 12+\del$ have at most three items each, so tiny items can be added to almost all bins of OPT, and this modification keeps those bins valid. The tiny items are presented to FF before other items, so they are packed into bins containing $k$ items each, that cannot be used for other items. The items of sizes $\frac 12+\frac{\del}2$ are presented last and must be packed into dedicated bins, as any previous bin either has $k$ items or total size above $\frac 12$. Thus, the modified construction will give a lower bound of $1.7+\frac{k-3}k$ on the asymptotic competitive ratio of FF. We describe the exact construction for completeness.

Let $\ell$ be a
positive integer, such that $\ell-1$ is divisible by $k$ and by
$k-3$, let $0< \eps < \frac 1{120}$ and $\delta <
\frac{\eps}{3^{\ell+4}}$ be small positive values, and consider
the following input. The instance consists of $10(k-3)(\ell-1)$
tiny items of size $\frac{\delta}k$ and $30\ell$ larger items. We
describe the items and the packing of FF, where the larger items
are packed into $17\ell$ bins: $B_1,\ldots,B_{17\ell}$. The tiny
items are presented first, and they are packed into
$10(k-3)\frac{\ell-1}k$ bins by FF.

The first  $10 \ell$ larger items  are denoted by $ a_{i,p}$ for
$i=1,\ldots ,10$ and $p=1,\ldots ,\ell$, and their sizes are
defined as follows. The item $a_{i,p}$ has size ${\frac 16}+{\eps \over
3^p}-\delta$ for $1\leq i \leq 3$, ${1\over 6}+{\eps \over
3^p}-2\delta$ for $4\leq i \leq 5$, ${1\over 6}-{\eps \over
3^{p+1}} -\delta$ for $6\leq i \leq 7$, and ${1\over 6}-{\eps
\over 3^{p+1}} -2\delta$ for $8\leq i \leq 10$.
Note that all item sizes are in $(\frac 17,\frac 15)$ since the
smallest item has size $\frac 16-\frac{\eps}9-2\delta
>\frac 17$ as $\frac{\eps}9+2\delta < \eps <\frac{1}{120}$, and
the largest item has size $\frac 16+\frac{\eps}3-\delta < \frac
15$.

Every ten items are packed into two bins as follows. The items
$a_{1,p},a_{2,p},a_{3,p},a_{6,p},a_{7,p}$ are packed into one bin
and $a_{4,p},a_{5,p},a_{8,p},a_{9,p},a_{10,p}$ are packed into
another bin. We call these bins $B_{2p-1}$ and $B_{2p}$. The total
size of items in $B_{2p-1}$ is $\frac 56
+\frac{7\eps}{3^{p+1}}-5\delta$. Note that the least loaded bin out
of the first $2\ell$ bins has load of $\frac
56+\frac{\eps}{3^{\ell}}-10\delta > \frac 56+
\frac{\eps}{3^{\ell+1}}$, since $10\delta <
\frac{10\eps}{3^{\ell+4}} < \frac {2\eps}{3^{\ell+1}}$.

The next $10 \ell $ items are denoted by $b_{i,p}$ for $i=1,\ldots
,10$ and $p=1,\ldots ,\ell$.  Their sizes are defined as follows.
The size of $b_{i,p}$ for $1 \leq i \leq 5$ is ${1\over 3}+{\eps
\over 3^{p-1}}-i\delta$ and for  $6 \leq i \leq 10$ it is ${1\over
3}-{\eps \over 3^p}-(i-5)\delta$.

There are $5\ell$ bins are created {from} these items, bins
$B_{2\ell}+5(p-1)+j$, where $1 \leq j \leq 5$ contains items
$b_{j,p}$ and $b_{j+5,p}$. The total size of items in each such
bin is $\frac 23 +\frac{2\eps}{3^p}-2j\delta$. The least loaded bin has a load
of $\frac 23+\frac{2\eps}{3^{\ell}}-10\delta>\frac
23+\frac{2\eps}{3^{\ell}}-\frac{10\eps}{3^{\ell+4}}>\frac
23+\frac{\eps}{3^{\ell}}$.


The last $10 \ell $ items  are denoted by $c_i$ for $i=1,2,\ldots
,10 \ell$, each of these has size of ${1\over 2}+\frac{\delta}2$. These items are packed into the
dedicated bins $B_{7\ell+j}$ for $1 \leq j \leq 10\ell$.

Using the result of \cite{JDUGG74}, the larger items can be packed
into $10\ell+O(1)$ bins. We show that the tiny items can be
combined into these bins, where every such bin receives $k-3$ tiny
items, as there will be three larger items packed into each bin.
For $i=1,\ldots ,5$ and $p=1,\ldots ,\ell$ there is a bin
containing $\{ a_{i,p}, b_{5+i,p}, c_{5(p-1)+i} \}$. For
$i=1,\ldots ,5$ and $p=3,\ldots,\ell$ there is a bin containing
$\{ a_{5+i,p-2}, b_{i,p},c_{5(p+\ell-3)+i}\}$. This gives a total
of $10\ell-10$ bins, each having a level of at most $1-\del$, and
thus $k-3$ tiny items can be added to each bin. All tiny items are
packed, and this leaves thirty unpacked items. The remaining items
are packed into $12$ additional bins: five bins containing $\{
c_{10(\ell-1)+i},b_{i,1}\}$ for $i=1,\ldots ,5$, five bins
containing $\{ c_{10(\ell-1)+5+i},b_{i,2}\}$ for $i=1,\ldots ,5$,
and two bins with five items each, a bin with $\{
a_{6,\ell},a_{7,\ell},a_{8,\ell},a_{9,\ell},a_{10,\ell}\}$, and a
bin with $\{
a_{6,\ell-1},a_{7,\ell-1},a_{8,\ell-1},a_{9,\ell-1},a_{10,\ell-1}\}$.

The number of bins used by FF is $10(k-3)(\ell-1)/k+17\ell$, while
an optimal solution requires at most $10\ell+2$ bins. Thus, the
asymptotic competitive ratio is at least $2.7-\frac 3k$.

Next, we prove upper bounds.  The next two lemmas will be used for
all values of $k \geq 2$.
\begin{claim}\label{onebins}
Every bin of $OPT$ has at most one item of a $1$-bin of FF.
\end{claim}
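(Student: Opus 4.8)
The plan is to argue by contradiction. Suppose some bin $B$ of $OPT$ contains two items $x$ and $y$, each of which is the unique item of a $1$-bin of FF. Without loss of generality assume $x$ is presented to FF before $y$. Let $B_x$ and $B_y$ denote the FF-bins containing $x$ and $y$ respectively; by assumption these are distinct $1$-bins, so at the end of the process $B_x=\{x\}$ and $B_y=\{y\}$.

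The first step is to observe that $B_x$ has a smaller index than $B_y$. Indeed, a $1$-bin receives its single item at the moment it is opened and never afterwards, so $B_x$ is opened exactly when $x$ is packed and $B_y$ exactly when $y$ is packed; since $x$ precedes $y$, the bin $B_x$ is opened first and hence has the smaller index.

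The second step is to examine the moment $y$ is presented. At that time $B_x$ already contains $x$ and will never contain anything else, so in particular it holds exactly one item, i.e.\ at most $k-1\geq 1$ items, and the cardinality constraint does not forbid adding $y$ to $B_x$. Since FF nevertheless opened the new bin $B_y$ for $y$, while $B_x$ has a smaller index than $B_y$, the only possible reason is that $y$ does not fit into $B_x$ with respect to size, i.e.\ $s_x+s_y>1$. But $x$ and $y$ lie together in the bin $B$ of $OPT$, whence $s_x+s_y\leq s(B)\leq 1$, a contradiction. This establishes the claim.

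The argument is essentially immediate once one records that a $1$-bin is opened precisely when its item arrives; the only point requiring a word of care is the ordering of the bin indices, which is exactly what rules out the degenerate possibility that $B_y$ was already available (with lower index) at the time $x$ was packed. No genuine obstacle is expected here; the statement is a warm-up fact to be combined with later weighting arguments.
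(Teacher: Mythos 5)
Your proposal is correct and follows essentially the same argument as the paper: when the later of the two items arrives, the $1$-bin holding the earlier item is feasible both in cardinality and in size (since the two items share a bin of $OPT$, their total size is at most $1$), so FF would not have opened a new bin for it. The extra remarks about bin indices are fine but not needed beyond what the paper already records.
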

\begin{proof}
Assume by contradiction that this is not the case, and items $i,j$ of one bin of $OPT$ are packed into $1$-bins by FF. When $j$ arrives, since $s_i+s_j \leq 1$, FF does not open a new bin for $j$, as there is at least one existing bin where it can be packed, a contradiction.
\end{proof}

\begin{claim}\label{veryuseful}
Let $1 \leq j \leq k-1$. Every $j$-bin except for at most one bin has level above $\frac{j}{j+1}$. Moreover, every $j^+$-bin except for at most one bin has level above $\frac{j}{j+1}$.
\end{claim}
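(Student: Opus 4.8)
The plan is to exploit the defining rule of First Fit: an arriving item is placed into the minimum-index bin into which it fits with respect to both size and cardinality, and a new bin (with the next index) is opened only if no existing bin works. In particular, among any collection of bins, the one of smallest index was opened no later than the others, and at every moment its item count and its level are at most their final values.

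The key elementary observation I would record first is: if a bin $B$ contains at least $j$ items and $s(B)\le \frac{j}{j+1}$, then $B$ contains an item $x$ with $s_x\le\frac{1}{j+1}$. Indeed, if $B$ has $m\ge j$ items, its smallest item has size at most $\frac{s(B)}{m}\le\frac{j}{(j+1)m}\le\frac{1}{j+1}$.

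Now I argue both statements by contradiction in a uniform way. Suppose two bins $B_1,B_2$ of the relevant class (two $j$-bins for the first statement, or two $j^+$-bins for the ``moreover'' statement), with $B_1$ of smaller index, both have level at most $\frac{j}{j+1}$. A $j$-bin has exactly $j\le k-1$ items, and a $j^+$-bin has at most $k-1$ items; in either case $B_1$ has at most $k-1$ items, and at the time any given item was packed it had at most $k-1$ items as well. Since $B_2$ has at least $j$ items and level at most $\frac{j}{j+1}$, by the observation it contains an item $x$ with $s_x\le\frac{1}{j+1}$. Consider the moment FF packed $x$. At that moment $B_1$ was already open, had at most $k-1$ items, and had level at most its final level $\le\frac{j}{j+1}$; hence $s(B_1)+s_x\le\frac{j}{j+1}+\frac{1}{j+1}=1$ and the cardinality bound is respected, so $x$ fits into $B_1$. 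Then FF would have packed $x$ into $B_1$ or a bin of even smaller index, contradicting that $x$ landed in $B_2$. Therefore at most one $j$-bin, and at most one $j^+$-bin, has level at most $\frac{j}{j+1}$; every other such bin has level above $\frac{j}{j+1}$.

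I expect no real obstacle here; the only point requiring care is the cardinality bookkeeping — verifying that at the relevant moment $B_1$ held at most $k-1$ items — which is exactly why the two classes considered ($j$-bins with $j\le k-1$, and $j^+$-bins) are the right ones. The size estimate and the smallest-item observation are routine.
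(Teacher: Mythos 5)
Your proof is correct and is essentially the paper's argument: both rest on the observation that a $j$-bin or $j^+$-bin has at most $k-1$ items (so cardinality never blocks it) and level at most $\frac{j}{j+1}$ throughout, hence any item of size at most $\frac{1}{j+1}$ arriving later would have been placed there by FF; the paper states this directly (all items in later such bins exceed $\frac{1}{j+1}$, so a bin with at least $j$ of them has level above $\frac{j}{j+1}$), while you take the contrapositive via an averaging step to extract a small item from the second low-level bin. The two formulations are logically equivalent, so no substantive difference.
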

\begin{proof}
Assume that there exists a $j$-bin (or $j^+$-bin) whose level is at most $\frac{j}{j+1}$. All further $j^+$-bins (that appear later in the ordering of FF) only have items of sizes above $\frac 1{j+1}$, and each such bin has at least $j$ items, so their levels are above $\frac j{j+1}$.
\end{proof}

We start with a simple proof that the upper bound $2.7-2.4/k$ on the competitive ratio of FF holds in the absolute sense, that is based on the proof of \cite{KSS75}.

\begin{proposition}
The absolute competitive ratio of FF is at most $2.7-2.4/k$.
\end{proposition}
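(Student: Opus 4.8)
The plan is to upgrade the asymptotic bound of Krause, Shen and Schwetman~\cite{KSS75} to an absolute one, using two external ingredients: the fact, due to~\cite{DS12}, that ordinary First Fit (without the cardinality constraint) obeys the absolute bound $1.7$, and Claims~\ref{onebins} and~\ref{veryuseful}, which pin down which bins of FF can be light. The starting observation is a reduction to ordinary First Fit: a bin of FF that ends with fewer than $k$ items is never closed because of the cardinality bound, so size is always the binding constraint for it, and a short induction on the steps of the algorithm then shows that, if $L''$ denotes the sub-sequence of items that FF places into such bins, the bins of FF on $L$ that have fewer than $k$ items are exactly the bins produced by ordinary First Fit on $L''$. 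Consequently the number of non-$k$-bins of FF on $L$ equals the number of bins that ordinary First Fit uses on $L''$, which by~\cite{DS12} is at most $\lfloor 1.7\,q\rfloor$, where $q\le OPT(L)$ is the ordinary (cardinality-free) optimum of $L''$; hence it is at most $\lfloor 1.7\,OPT(L)\rfloor$.

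The cheap part of the bound concerns the $k$-bins. They carry all $k$ of their items, so their number is at most $n/k$, where $n$ is the number of items, and $n\le k\cdot OPT(L)$ since every bin of $OPT$ holds at most $k$ items. Writing $F$ for the number of non-$k$-bins of FF, the items split as $n=k\cdot(\#k\text{-bins})+|L''|$ with $|L''|\ge F$, so $\#k\text{-bins}\le OPT(L)-F/k$. Combining, $FF(L)=\#k\text{-bins}+F\le OPT(L)+F(1-\tfrac1k)\le OPT(L)+1.7\,OPT(L)(1-\tfrac1k)=(2.7-\tfrac{1.7}{k})OPT(L)$, and since $FF(L)$ is an integer this already yields the absolute inequality $FF(L)\le\lfloor(2.7-\tfrac{1.7}{k})OPT(L)\rfloor$, which is of the same shape as, but slightly weaker than, the claimed bound.

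To reach the sharp constant $2.7-\tfrac{2.4}{k}$ one replaces the crude estimate $\#k\text{-bins}\le OPT(L)$ by the weight argument of~\cite{KSS75}: one uses a weight function on item sizes obtained from the classical First Fit weight function by adding a term of order $\tfrac1k$ per item, so that a full $k$-bin of FF is automatically heavy enough; one shows that the total weight of all bins of FF is at least $FF(L)$ up to a bounded additive error, the potentially light bins being exactly the $1$-bins (controlled via Claim~\ref{onebins}) and at most one $j$-bin for each $2\le j\le k-1$ (controlled via Claim~\ref{veryuseful}); and one shows that the total weight of any feasible bin is at most $2.7-\tfrac{2.4}{k}$, by cases on its number of items, the extremal configuration being a bin with roughly $k$ items whose classical-weight value falls below $1.7$ by exactly the right margin. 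The additive error is then removed as for ordinary First Fit in~\cite{DS12}: finitely many values of $OPT(L)$ are checked directly (using the bound of the previous paragraph), and for larger $OPT(L)$ the exceptional bins are recharged.

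The main obstacle is the weight computation on the $OPT$ side: getting the constant to be precisely $2.4$, rather than a weaker value, requires the sharp bound on the classical-First-Fit weight of a feasible bin with a prescribed number of items close to $k$, and hence a careful analysis of the explicit weight function of~\cite{KSS75}. A secondary point of care is the bookkeeping that makes the $O(1)$ exceptional bins of FF truly cost nothing in the absolute statement; this is where leaning on the exact (constant-free) form of the ordinary First Fit bound from~\cite{DS12} is essential.
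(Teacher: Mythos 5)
Your first two paragraphs are sound and match the paper's setup: the split into items of $k$-bins and the rest, the observation that FF restricted to the non-$k$-bin items behaves like ordinary (cardinality-free) First Fit, and the use of the absolute bound $FF \leq 1.7\,OPT$ from \cite{DS12}. The counting $FF(L)\leq OPT(L)+F(1-\frac 1k)\leq (2.7-\frac{1.7}k)OPT(L)$ is correct. The genuine gap is in the third paragraph: the passage from $1.7/k$ to $2.4/k$ is only a program, not a proof. You never define the modified weight function, never verify that every feasible bin has weight at most $2.7-\frac{2.4}k$, and never carry out the removal of the additive error. Worse, the specific plan you give for that removal cannot work as stated: you propose to handle ``finitely many values of $OPT(L)$'' directly ``using the bound of the previous paragraph'', but that bound is $(2.7-\frac{1.7}k)OPT(L)$, which does \emph{not} imply $\lfloor (2.7-\frac{2.4}k)OPT(L)\rfloor$ already for moderate values such as $OPT(L)=10$, $k=10$ (where the two floors are $25$ and $24$). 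So the weaker bound cannot serve as the base case, and no amount of ``recharging'' of $O(1)$ exceptional bins is justified in the text.

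The missing idea is much more elementary than a weight argument, and it is exactly where Claim \ref{onebins} enters. By that claim each bin of $OPT$ contains at most one item that FF places into a $1$-bin, so the number of $1$-bins is at most $OPT(L)$; every other non-$k$-bin holds at least two items, whence $|L''|\geq 2F-OPT(L)$ rather than merely $|L''|\geq F$. Splitting into two cases: if $F\leq OPT(L)$ your own counting already gives $FF(L)\leq (2-\frac 1k)OPT(L)$, which is below $2.7-\frac{2.4}k$ for all $k\geq 2$; if $F>OPT(L)$, the improved item count yields $k\cdot(\#k\text{-bins})\leq k\,OPT(L)-2F+OPT(L)$, hence
\[
FF(L)\;=\;\#k\text{-bins}+F\;\leq\;OPT(L)+\frac{OPT(L)}k+F\Bigl(1-\frac 2k\Bigr)\;\leq\;OPT(L)\Bigl(1+\frac 1k+1.7-\frac{3.4}k\Bigr)\;=\;\Bigl(2.7-\frac{2.4}k\Bigr)OPT(L),
\]
using $F\leq 1.7\,OPT(L)$. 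No weight function, case-checking of small instances, or recharging is needed; the constant $2.4=1.7\cdot 2-1$ comes directly from replacing the coefficient $1-\frac 1k$ by $1-\frac 2k$ on the $F$ term.
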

\begin{proof}
Let $L$ be an input, and partition it into two subsequences, $L_1$ that consists of all items that are packed into bins eventually having $k$ items, and $L_2=L \setminus L_1$. By the definition of FF, running it on $L_1$ results in the same bins for these items as in the run on $L$, and the same is true for $L_2$, even if FF is applied without taking the cardinality constraint into account. Let $M_1=FF(L_1)$ and $M_2=FF(L_2)$ be the resulting numbers of bins, where $FF(L)=M_1+M_2$. We will use $OPT(L) \geq \frac {|L|}{k}$ and $|L|-|L_2|=|L_1|=kM_1$. Since the output for $L_2$ is valid without cardinality constraints, we have $FF(L_2) \leq 1.7 OPT(L_2)$.

First, consider the case $M_2 \leq OPT(L)$. Since every bin of FF has at least one item, we have $|L_2|\geq M_2$, and therefore $M_1+M_2 =M_1+M_2/k+(1-1/k)M_2 \leq \frac{|L_1|}k+\frac {|L_2|}k+(1-1/k)M_2=\frac{|L|}k+(1-1/k)M_2 \leq (2-1/k)OPT(L)$.

Otherwise, as the number of $1$-bins is at most $OPT(L)$, and the remaining bins for $L_2$ have at least two items, thus, $|L_2| \geq M_2+(M_2-OPT(L))$, and we get $M_1+M_2 \leq
M_1+2M_2/k+(k-2)M_2/k \leq \frac{|L_1|}k+|L_2|/k+OPT(L)/k+(k-2)M_2/k= |L|/k +OPT(L)/k +(k-2)/k\cdot 1.7OPT(L_2)\leq  OPT(L)+OPT(L)/k+(1.7k-3.4)OPT(L)/k= (2.7-2.4/k)OPT(L)$.
\end{proof}

\begin{claim}
\label{reduc}
For every input $\sigma$ for FF there exists an input $\sigma'$ for FF that contains the same items (possibly in a different order),  $FF(\sigma)=FF(\sigma')$, $OPT(\sigma)=OPT(\sigma')$, and in the output of FF for $\sigma'$, all $k$-bins appear before all bins that are not $k$-bins, and all $1$-bins appear after all $2^+$-bins.
\end{claim}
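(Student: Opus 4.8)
For every input $\sigma$ for FF there is an input $\sigma'$ with the same items, $FF(\sigma)=FF(\sigma')$, $OPT(\sigma)=OPT(\sigma')$, such that in FF's output for $\sigma'$ all $k$-bins come before all non-$k$-bins, and all $1$-bins come after all $2^+$-bins.

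The plan: we transform $\sigma$ into $\sigma'$ by reordering, not changing the multiset of items, so $OPT$ is automatically unchanged. The goal is to reorder the *presentation* so that (a) items ending up in $k$-bins of the original run are fed first, grouped by their original bin, and (b) items landing in $1$-bins are fed last. The key observation we need is that FF's behavior is, in a precise sense, oblivious to reordering as long as we keep the "bin structure" consistent — more carefully, we should argue that we may present the items in the order of the bins they occupy in the original FF run.

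First I would fix the original run of FF on $\sigma$, obtaining bins $C_1,C_2,\ldots,C_m$ in the order FF opened them. Define $\sigma'$ to list the items of $C_1$ first (in their original internal arrival order within $C_1$), then those of $C_2$, and so on. The central lemma is: \emph{running FF on $\sigma'$ reproduces exactly the bins $C_1,\ldots,C_m$ in this order.} This is proved by induction on the number of items processed: when the items of $C_1,\ldots,C_{t-1}$ have been placed (into bins identical to $C_1,\ldots,C_{t-1}$) and we start feeding the items of $C_t$, the first such item $x$ could not have fit (by size or cardinality) into any of $C_1,\ldots,C_{t-1}$ in the original run at the moment it arrived there — but in the original run those bins only ever \emph{grew}, so at the end they are supersets of their states at that moment, hence $x$ still does not fit into any of $C_1,\ldots,C_{t-1}$, and FF opens a new bin, which becomes $C_t$; subsequent items of $C_t$ behave identically. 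This uses only monotonicity of bin contents under FF.

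Having established that $\sigma'$ (items listed bin-by-bin in FF-open order) gives $FF(\sigma')=FF(\sigma)=m$ with the same bins, it remains to permute the \emph{blocks} $C_1,\ldots,C_m$ so that $k$-bins precede non-$k$-bins and $1$-bins follow all $2^+$-bins, while still reproducing the same bins. Here I would invoke the same lemma in reverse: for \emph{any} ordering of the blocks $C_{\pi(1)},\ldots,C_{\pi(m)}$, feeding the items block-by-block reproduces bins $C_{\pi(1)},\ldots,C_{\pi(m)}$ in that order — the induction goes through verbatim because the only fact used is that each block's items, when they arrive, fail to fit into all previously-completed blocks, and "fails to fit into $C_j$" depends only on the final content of $C_j$, not on $j$'s position in the ordering. (One must check the within-block order is still the original arrival order within that bin, so that each block indeed closes as the same set; that is immediate.) So choose $\pi$ to put all $k$-bins first, then all $2^+$-bins that are not $k$-bins, then all $1$-bins; the resulting $\sigma'$ satisfies all four required properties.

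The main obstacle is stating and proving the oblivious-reproduction lemma cleanly — in particular being careful that "$x$ does not fit into $C_j$" must be read as: \emph{neither} the size bound \emph{nor} the cardinality bound permits adding $x$ to the final $C_j$. For the cardinality part this is subtle only if some $C_j$ is not full: then we must know that in the original run, at the time $x$ was placed, $C_j$ already had $k$ items \emph{or} $s_x + s(C_j\text{-at-that-time}) > 1$; the first disjunct is monotone trivially, and for the second we use that $C_j$'s level only increased afterward. A clean way to avoid case analysis is to note that in FF's original run, when FF opened the bin of $x$, \emph{every} earlier bin $C_j$ was in a state where $x$ did not fit, and since both the item count and the level of $C_j$ are nondecreasing over the run, $x$ still does not fit into the \emph{final} $C_j$; this single monotonicity statement handles both constraints simultaneously. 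Everything else is bookkeeping.
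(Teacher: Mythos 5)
Your construction of $\sigma'$ is essentially the paper's (items of $k$-bins first, then items of the other $2^+$-bins, then items of $1$-bins, each bin's items kept in their original internal order), and your first lemma --- that presenting the items block-by-block in the order in which FF opened the bins reproduces the original packing --- is correct: monotonicity of bin contents suffices there because every block is preceded only by blocks of \emph{smaller} original index. The gap is in the second step. The ``oblivious-reproduction lemma'' you invoke for an \emph{arbitrary} permutation $\pi$ of the blocks is false. Take $k=3$ and $\sigma=(0.6,\,0.5,\,0.3)$: FF produces $C_1=\{0.6,0.3\}$ and $C_2=\{0.5\}$, but feeding block $C_2$ first yields the packing $\{0.5,0.3\},\{0.6\}$, not $C_2,C_1$. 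Your induction does not ``go through verbatim'' because monotonicity only guarantees that an item fails to fit into the \emph{final} content of a bin opened \emph{before} the item's own bin in the original run; once $\pi$ places an originally later block ahead of an originally earlier one, there is no such guarantee for the later block, and the item may well fit there.

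For the specific $\pi$ you need the statement is true, but it requires two additional facts that your argument does not supply. First, moving the $k$-bins to the front is harmless because a completed $k$-bin rejects every item by the cardinality constraint alone --- trivial, but not a consequence of your monotonicity statement. Second, and this is the real content, moving the $1$-bins to the back requires showing that an item $x$ that opened a $1$-bin $C_j$ does not fit into the final state of a $2^+$-bin $C_l$ with $l>j$; monotonicity says nothing here. The missing argument (which is exactly the second case in the paper's proof) is that the first item $y$ of $C_l$ was not placed into $C_j=\{x\}$ when it arrived, so $s_x+s_y>1$, hence the final level of $C_l$ is at least $s_y>1-s_x$ and $x$ is rejected from $C_l$ by size. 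Without this step you have not established that the $1$-bin items still open dedicated bins under $\sigma'$, so the proof as written is incomplete.
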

\begin{proof}
Given $\sigma$, and the output for FF for it, we remove all items of $k$-bins and of $1$-bins of $FF$ from $\sigma$, we append all items of $1$-bins at the end of the input in some order, and we insert all items of $k$ bins in the beginning of the input, in the same order as they appear in $\sigma$. This defines $\sigma'$. Obviously $OPT(\sigma')=OPT(\sigma)$. The items of $k$-bins are packed for $\sigma'$ exactly as they are packed for $\sigma$, as all the items of $k$-bins are presented in the same order. Afterwards, the items of $2^+$-bins are packed exactly as for $\sigma$, since no further item can be packed into a bin already containing $k$ items. Finally, since no two items of $1$-bins can be packed into a bin together, and they cannot join $k$-bins, it remains to show that no such item can join a $2^+$-bin. Let $B$ be a $2^+$-bin, and let $i$ be an item of a $1$-bin $B'$. If $B$ appears earlier than $B'$ in the ordering of FF (applied on $\sigma$), then when item $i$ is presented, it cannot be packed into $B$ (which at that time contains a subset of the items that $B$ receives).  If $B$ appears later than $B'$ in the ordering of FF (applied on $\sigma$), then no item of $B$ can be packed with $i$ into a bin, and obviously $i$ cannot be added to $B$.
\end{proof}

By Claim \ref{reduc}, in what follows we will only analyze inputs where the condition of the claim for $\sigma'$ holds.

\subsection{Analysis of the absolute competitive ratio for the cases $\boldsymbol{k=2,3,4}$}
We start with the simple case $k=2$. A simple
upper bound of $\frac 32$ is achieved by a greedy matching
algorithm, which is a generalization of FF. It is folklore that
this algorithm matches at least half of the edges that an optimal
solution can match and therefore it translates into a $\frac
32$-competitive algorithm for bin packing (where an edge between
two items exists if they can be packed together into a bin). Moreover, for this case the upper bound $2.7-2.4/k$ is equal to $1.5$. For
completeness, and as an introductory
case for analysis using weights, we show how FF can be analyzed using weights for the case $k=2$.  The usage of weights is slightly different from their usage for proving lower bounds. We usually use a weight function $w$, that is applied on sizes of items. Thus, we define $w(a)$ for $a\in (0,1]$, where the variable $a$
denotes the size of an item.  For a set of items $A$ and a set of bins $\mathcal{A}$, let $w(A)$ and
$w(\mathcal{A})$ denote the total weight of all items of $A$ or $\mathcal{A}$.
Furthermore, let $W=w(I)$ be the total weight of all items of the input $I$.
In this kind of analysis,  the weights of bins of the algorithm and of OPT are compared, using the property that for a fixed input, the total weight of items is equal for all algorithms.
An item of an $i$-bin of FF is assigned a weight of $\frac 1i$ (for $i=1,2$). Obviously, any bin of FF has weight $1$, and we analyze the total weight of bins of $OPT$. A bin of $OPT$ cannot have two items of $1$-bins, and therefore its weight cannot exceed $\frac 32$. We find that for any input $L$, the total weight satisfies $FF(L)=W\leq 1.5OPT(L)$.

\subsubsection{The case $\boldsymbol{k=3}$}

In this section we show that the absolute competitive ratio of FF for $k=3$ is exactly $\frac{11}6 <2$ (and that the asymptotic competitive ratio of FF is also equal to this value).

\begin{theorem}
The absolute  approximation ratio of FF for $k=3$ is at most $\frac{11}{6}$.
\end{theorem}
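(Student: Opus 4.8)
The plan is to use the weight-function technique just introduced for $k=2$, assigning to each item of an $i$-bin of FF a weight of $\frac 1i$ (for $i=1,2,3$), so that every bin of FF has weight exactly $1$ and $FF(L)=W$. It then suffices to show that every bin of $OPT$ has weight at most $\frac{11}6$, with only an additive constant number of exceptional bins of higher weight (which handles the asymptotic bound); for the absolute bound one must be more careful and show the slack from the exceptional bins never pushes $FF(L)$ above $\frac{11}6\,OPT(L)$ exactly. By Claim~\ref{reduc} I may assume FF's output lists all $3$-bins first, then all $2^+$-bins (here $2$-bins), then all $1$-bins, and by Claim~\ref{veryuseful} all but at most one $2$-bin has level above $\frac 23$ and all but at most one $1$-bin (trivially, but more usefully via Claim~\ref{onebins}) interacts with $OPT$ in a restricted way.

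The core combinatorial step is to bound the weight of a single bin $B$ of $OPT$. Such a bin has at most $3$ items. If all three are from $3$-bins the weight is $1$; the danger comes from items of $1$-bins (weight $1$) and $2$-bins (weight $\frac 12$). By Claim~\ref{onebins}, $B$ contains at most one item $x$ from a $1$-bin of FF. If $B$ contains such an $x$, then since $x$ sits alone in its FF-bin, every other item $y$ of $B$ satisfies $s_x+s_y>1$, so $s_y>1-s_x$; I will argue $s_x$ must be large (bigger than $\frac 12$, else two items would have been combinable — more precisely, using that $x$'s $1$-bin comes after all $2^+$-bins, any item presented before $x$ in a non-full bin has size $>1-s_x$, and items after $x$ in $1$-bins also pair-wise exceed $1$), which forces the companions $y$ to be small enough that they lie in $3$-bins or are few in number; the worst case is $x$ of weight $1$ together with at most one item of weight $1$ or a couple of items of weight $\le\frac12$ — one checks the maximum is $1+\frac12+\frac13=\frac{11}6$, attained by one $1$-bin item, one $2$-bin item, and one $3$-bin item. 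If $B$ contains no $1$-bin item, its weight is at most $1+\frac12+\frac12 = 2$ a priori, so here I need the size argument: three items each of weight $\frac12$ would be three items from $2$-bins, but then each has size $>\frac13$ (all but one $2$-bin has level $>\frac23$, and by FF's rule the second item of such a bin has size $>\frac13$), giving total size $>1$ in $B$ — contradiction; so at most two items of $B$ have weight $\frac12$ and the third has weight $\le\frac12$, wait, that still gives $\frac32$; the genuinely bad configuration $1+\frac12+\frac12$ is excluded because the weight-$1$ item is from a $1$-bin, already handled. So a $1$-bin-free bin of $OPT$ has weight $\le \frac12+\frac12+1$ only if one item is weight-$1$, i.e. never, hence weight $\le \frac32 < \frac{11}6$.

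Summing, $W = FF(L) \le \frac{11}6\,OPT(L)$ up to the contribution of the at most two exceptional bins flagged by Claim~\ref{veryuseful} and the $1$-bin that must be treated by hand. The main obstacle is exactly this bookkeeping at the boundary between the asymptotic and absolute statements: one must show that the exceptional low-level bins of FF, together with the leftover $1$-bins, cannot be exploited to exceed the bound $\lceil \frac{11}6 OPT(L)\rceil$ for \emph{small} $OPT(L)$; I would handle this by a direct case analysis on $OPT(L)\in\{1,2,3,\dots\}$ small, using that if $FF$ has a $1$-bin at all then by Claim~\ref{onebins} and the assumed ordering the structure of the whole instance is tightly constrained (few large items), and otherwise every FF-bin has $\ge 2$ items so $FF(L)\le \frac{|L|}2 \le \frac 32 OPT(L)$ directly.
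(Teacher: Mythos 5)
There is a genuine gap at the central step of your proposal. Under the weighting you choose ($\frac 1i$ per item of an $i$-bin), every bin of FF has weight exactly $1$, so $W=FF(L)$ and the whole burden falls on showing that each bin of $OPT$ has weight at most $\frac{11}{6}$. The dangerous configuration is an $OPT$ bin containing one item of a $1$-bin together with two items of two \emph{distinct} $2$-bins: its weight is $1+\frac 12+\frac 12=2>\frac{11}{6}$, and nothing in your sketch rules it out. Your argument that the $1$-bin item $x$ has $s_x>\frac 12$ and hence ``forces the companions to lie in $3$-bins or be few in number'' is false: a small item can perfectly well sit in a $2$-bin of FF next to a large partner. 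For instance, on the input $0.8,\,0.15,\,0.8,\,0.15,\,0.7$ FF builds two $2$-bins $\{0.8,0.15\}$ and one $1$-bin $\{0.7\}$, while $OPT$ may use the bin $\{0.7,0.15,0.15\}$, which has weight $2$ under your weighting. The only configuration your size argument does exclude (three $2$-bin items in one $OPT$ bin) has weight $\frac 32$ and was never a threat. Since a per-bin bound of $2$ is all your method yields, the argument fails even asymptotically, not merely in the small-$OPT(L)$ bookkeeping you flag at the end.

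The paper closes exactly this gap with two ingredients absent from your proposal. First, it shows that \emph{if the number of $1$-bins equals $OPT(I)$}, then no bin of $OPT$ contains two items of $2$-bins: every $OPT$ bin then contains a $1$-bin item, and summing the five relevant size inequalities (two FF non-fit inequalities and two $OPT$ feasibility inequalities) gives $2<2$, a contradiction. In that case your intended bound $1+\frac 12+\frac 13=\frac{11}{6}$ is correct. Second, when the number of $1$-bins is at most $OPT(I)-1$, it \emph{abandons} the $\frac 12$ weight for $2$-bin items and replaces it by a size-based weight ($\frac 23$, $\frac 12$, or $\frac 13$ according to thresholds $\frac 12$ and $\frac 14$); this caps every $OPT$ bin at $\frac{11}{6}$ but lets some $2$-bins of FF fall below weight $1$, so one must additionally show that at most one such deficient $2$-bin exists (via the FF ordering) and offset its deficit against an $OPT$ bin guaranteed to have weight at most $\frac 32$ (one with no $1$-bin item), with a further split according to whether a ``special'' $1$-bin item of size at most $\frac 12$ exists. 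Without one of these two mechanisms, or something equivalent, the claimed per-bin bound of $\frac{11}{6}$ is simply not available.
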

\begin{proof}
Next, let $I$ be an input sequence of items. Recall that it can be assumed without
loss of generality that $3$-bins are positioned in the beginning
of the output, while $1$-bins are positioned in the end of the
output. Thus, the output is sorted by the numbers of items in the
bins. Restricting our attention to the $2$-bins and $1$-bins we
can see that these bins would have been created by running FF only
on the subsequence of the items packed into them, even if the
cardinality constraint is not taken into account. Thus, as in
\cite{BDE}, it can be assumed that no $2$-bin contains two items
that are packed together in an optimal solution, since merging
them into one item would result in the same packing (both for the
application of FF on the original input and for the application of
FF on the items of $2$-bins and $1$-bins). Moreover, if the number
of $1$-bins is $OPT(I)$, then no bin of the optimal solution
contains two items that are packed into $2$-bins (as in
\cite{BDE}). For completeness, we prove this property. Consider
two $2$-bins $B$ and $B'$ (where $B'$ appears later than $B$ in
the ordering). Let $i_1$ and $i_2$ be the items of $B$, and let
$i_3$ be the item of $B'$. Assume that $i_2$ and $i_3$ are packed
into the same bin of $OPT(I)$. Let $i_4$ be the item of that bin
of $OPT(I)$ that is packed into a $1$-bin of $FF(I)$ and $i_5$ is
the item of a $1$-bin of $FF(I)$ that is packed with $i_1$ in
$OPT(I)$. We find $s_{i_3}+s_{i_1}+s_{i_2}>1$ and
$s_{i_4}+s_{i_5}>1$, as $i_3$ was not packed into $B$, and
$i_4,i_5$ are packed into $1$-bins (the item out of $i_4$ and
$i_5$ that arrives later was not packed with the other item out of
these two items). On the other hand, $s_{i_3}+s_{i_2}+s_{i_4} \leq
1$ and $s_{i_1}+s_{i_5} \leq 1$. We have $ 2 <
s_{i_1}+s_{i_2}+s_{i_3}+s_{i_4}+s_{i_5} \leq 2$, a contradiction.

We split the analysis into cases.

\noindent{\bf Case 1.\ } The number of $1$-bins is $OPT(I)$. An optimal solution has at most one such item in each bin, and thus every bin of $OPT$ contains such an item. Additionally it can contain at most one item packed into a $2$-bin by FF.
We define a weight function based on the packing of FF. An item packed into an $i$-bin has weight $\frac 1i$. We find that any bin of the optimal solution has weight of at most $1+\frac 12+\frac 13=\frac {11}{6}$.

\noindent{\bf Case 2.\ } The number of $1$-bins is at most $OPT(I)-1$. In this case there exists at least one bin of the optimal solution that does not contain an item packed into a $1$-bin by FF. We define slightly different weights in this case. An item packed into an $i$-bin by FF, where $i\neq 2$ has weight $\frac 1i$. For the $2$-bins, we define weights as a function of the sizes of items. An item of size above $\frac 12$ (called a big item) has weight $\frac 23$, an item of size in $(\frac 14,\frac 12]$ (called a medium item) has weight $\frac 12$, and an item of size in $(0,\frac 14]$ (called a tiny item) has weight $\frac 13$. Note that there is at most one item packed into a $1$-bin whose size does not exceed $\frac 12$. If such an item exists, then we call it the special item. If its size at most $\frac 14$, then we say that the special item is small, and otherwise it is large. Note that the weight of a bin of $OPT$ that does not have an item of an $1$-bin of FF has a total weight of at most $\frac 32$.

Obviously, the only bins whose total weights can be below $1$ are $2$-bins.
We claim that there exists at most one $2$-bin whose total weight is strictly below $1$, and if there exists a special item and it is small, then no such bin can exist. Consider a $2$-bin that contains a big item. The weight of the big item is $\frac 23$, and the weight of the other item is at least $\frac 13$. Consider a $2$-bin that has level above $\frac 34$. If the bin contains a big item, then we are done. Otherwise, since both the items packed in it have sizes of at most $\frac 12$, none of them can be tiny, and each one of the items has weight $\frac 12$. Assume now that there are two $2$-bins, each of total weight below $1$. Each such bin contains no big items, and at most one medium item. Thus, each such bin contains a tiny item. This contradicts the action of FF as the tiny item that is packed into the bin that appears later in the ordering could be packed into the bin that appears earlier in the ordering.  If there is a special item that is tiny, and there exists a $2$-bin with level at most $\frac 34$, then the special item could be packed there by FF contradicting its action.

Let $W$ denote the total weight. We split the analysis further.

\noindent{\bf Case 2.1\ } There is no special item. We calculate the total weight. There is at most one $2$-bin of weight below $1$, and this bin still has weight of at least $\frac 23$ (as it has two items). Thus, $W \geq (FF(I)-1)+\frac 23=FF(I)-\frac 13$. Every bin of the optimal solution that has an item of a $1$-bin has an item of weight $1$ and size above $\frac 12$. As there is no special item, and it can have only one further item of weight above $\frac 13$, and if it exists, then this item must have size in $(\frac 14,\frac 12]$, and weight $\frac 12$. Thus, the total weight of the bin is at most $\frac{11}{6}$. A bin that only has items of $3$-bins and $2$-bins cannot have an item of weight $1$. It can have at most one item of weight $\frac 23$, in which case the total weight of the two additional items is at most $\frac 12+\frac 13$ (bins with less than three items can only have smaller total weights). If the bin does not have an item of weight $\frac 23$, then the total weight is at most $\frac 32$ as well, since the weight of each item is at most $\frac 12$. We find $W \leq \frac{11}{6}(OPT(I)-1)+\frac 32=\frac{11}{6} OPT(I)-\frac 13$. We found that $FF(I) \leq \frac{11}{6} OPT(I)$ holds in this case.

\noindent{\bf Case 2.2\ } There is a small special item. In this case $W \geq FF(I)$. The total weights of bins of the optimal solution not containing the special item are as computed before. Recall that the special item is an item of a $1$-bin, thus there exists at least one bin whose total weight is at most $\frac 32$. The weight of the bin containing the special item can be larger by $\frac 23$ compared to a bin containing an item of the same size that is not the special item, but not containing an item of a $1$-bin. Therefore,  $W \leq \frac{11}{6}(OPT(I)-2)+2 \cdot \frac 32+\frac 23=\frac{11}{6} OPT(I)$. We found that $FF(I) \leq \frac{11}{6} OPT(I)$ holds in this case as well.

\noindent{\bf Case 2.3\ } There is a large special item.
In this case, if there exists a $2$-bin of FF of level at most $\frac 34$, still its level is above $\frac 12$, as otherwise FF could pack the special item in this bin. At least one of its two items is medium, and the total weight of the bin is at least $\frac 56$. Therefore,  $W \geq FF(I) -\frac 16$. The total weights of bins of the optimal solution not containing the special item are as computed before. As before, the special item is an item of a $1$-bin, thus there exists at least one bin whose total weight is at most $\frac 32$. The weight of the bin containing the special item can be larger by $\frac 12$ compared to a bin containing an item of the same size that is not the special item, but not containing an item of a $1$-bin. Therefore,  $W \leq \frac{11}{6}(OPT(I)-2)+2 \cdot \frac 32+\frac 12=\frac{11}{6} OPT(I)-\frac 16$. We found that $FF(I) \leq \frac{11}{6} OPT(I)$ holds in this case as well.
\end{proof}

\subsubsection{The case $\boldsymbol{k=4}$}
We prove that $FF$ is $2$-competitive in the absolute sense for $k=4$. We define weights as follows. A large item, i.e. any item whose size exceeds $\frac 12$ has weight $1$. A medium item, i.e., an item of size in $(\frac 14,\frac 12]$ has  weight $\frac 12$. A small item, i.e., an item of size at most $\frac 14$ has  weight $\frac 14$. Recall that the total weight of the item is denoted
by $W$.

\begin{lemma}
\label{opt}The weight of any bin of $OPT$ is at most $2$.
\end{lemma}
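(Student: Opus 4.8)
The plan is to bound the total weight of the items in an arbitrary bin $B$ of $OPT$ by a short case analysis on the number of large items it contains, using the size bound $s(B) \le 1$ together with the cardinality bound $|B| \le k = 4$. Recall the weights: large ($>\frac12$) has weight $1$, medium ($\in(\frac14,\frac12]$) has weight $\frac12$, small ($\le\frac14$) has weight $\frac14$. First observe that $B$ can contain at most one large item, since two items of size $>\frac12$ do not fit together. I split into the cases "$B$ has a large item" and "$B$ has no large item".

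If $B$ contains a large item $x$, then $s(x) > \frac12$, so the remaining items of $B$ have total size $< \frac12$, hence none of them is large or medium — they are all small, each of weight $\frac14$. Since $|B| \le 4$, there are at most $3$ such small items, for a total weight of at most $1 + 3\cdot\frac14 = \frac74 < 2$. If $B$ contains no large item, then every item has weight at most $\frac12$, and with $|B|\le 4$ the total weight is at most $4\cdot\frac12 = 2$. In both cases $w(B) \le 2$, which proves the lemma.

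I do not expect any real obstacle here; the only thing to be slightly careful about is that the bound is genuinely tight in the no-large-item case (four medium items of size just over $\frac14$ — e.g.\ $\frac14+\eps$ — fit in a bin and have total weight exactly $2$), so the constant $2$ cannot be improved by this weight function, which is consistent with the $2$-competitiveness target and with the matching lower bound of Proposition on $k\ge4$. The genuine work of the $k=4$ theorem will come afterwards, in showing that every bin of $FF$ has weight at least $1$ (up to an additive constant, or exactly $1$ after the normalization of Claim~\ref{reduc}), so that $FF(I) = W \le 2\,OPT(I)$; the present lemma is just the easy half.
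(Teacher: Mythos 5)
Your overall structure matches the paper's (case split on whether $B$ contains a large item), but the large-item case contains a genuine error. You claim that since the remaining items have total size $<\frac12$, ``none of them is large or medium.'' That does not follow: a medium item only needs size greater than $\frac14$, so a single medium item of size, say, $0.26$ fits comfortably next to a large item of size $0.51$ (plus two small items). What \emph{is} true --- and is what the paper argues --- is that at most \emph{one} of the remaining (at most three) items can be medium, since two medium items together with a large item would have total size exceeding $\frac14+\frac14+\frac12=1$. This gives the bound $1+\frac12+2\cdot\frac14=2$, not your claimed $\frac74$. The lemma's conclusion survives because $2$ is still the required bound, but your intermediate claim is false and the $\frac74$ figure is unobtainable by your reasoning.

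Your tightness remark is also off: four medium items each of size $\frac14+\eps$ have total size exceeding $1$ and do not fit in one bin, so the no-large-item case cannot actually reach weight $2$. The configuration that makes the bound $2$ tight is precisely the one your case analysis excludes: one large item, one medium item, and two small items (e.g., sizes $\frac12+\eps$, $\frac14+\eps$, and two tiny items), which is consistent with the worst-case instance used in the paper's lower-bound example for $k=4$.
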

\begin{proof}
Consider a bin $B$ of $OPT$. Bin $B$ can contain at most one large item.
If $B$ does not contain a large item, then the weight of any item
is at most $\frac 12$, and since $|B|\leq 4$, the
total weight is at most $2$. Suppose now that $B$ contains a large item. Out of the remaining (at most) three items, at most one item can be medium, and the total weight is at most $1+\frac 12+2\cdot \frac 14=2$, and the claim follows.
\end{proof}

\begin{claim}
Every bin that has a large item has total weight of at least $1$.  Every $4$-bin has total weight of at least $1$.
Every $2^+$-bin that does not have any small items has total weight of at least $1$. Every bin that has items of total size above $\frac 34$ has total weight of at least $1$.
\end{claim}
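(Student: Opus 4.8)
The plan is to verify each of the four assertions by a direct counting argument using only the weight definitions for $k=4$ (a large item has weight $1$, a medium item weight $\frac12$, a small item weight $\frac14$). The first three are immediate. A bin containing a large item already has weight at least $1$ from that single item. A $4$-bin contains exactly four items, each of weight at least $\frac14$, so its total weight is at least $1$. A $2^+$-bin contains at least two items, and if none of them is small then each is medium or large and hence has weight at least $\frac12$, so again the total is at least $1$.

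The only assertion requiring a short argument is the last one. First I would dispose of the case that the bin contains a large item, which is already covered by the first assertion. So assume the bin has total size exceeding $\frac34$ and contains no large item, and suppose for contradiction that its total weight is strictly less than $1$. Let $m$ and $s$ denote the numbers of medium and small items in the bin; then the total weight equals $\frac m2+\frac s4$, and the assumption $\frac m2+\frac s4<1$ is equivalent to $2m+s\le 3$. Since a medium item has size at most $\frac12$ and a small item has size at most $\frac14$, the total size of the bin is at most $\frac m2+\frac s4=\frac14(2m+s)\le\frac34$, contradicting the hypothesis that it exceeds $\frac34$. Hence the total weight is at least $1$.

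There is no real obstacle here; the claim is a routine consequence of the weight assignment, and it is used only as a bookkeeping tool for the subsequent comparison of $FF$'s bins with $OPT$'s bins. The one point to watch in the last part is the direction of the inequalities: a medium item may have size exactly $\frac12$ and a small item size exactly $\frac14$, so the bound $\frac14(2m+s)$ on the total size is tight (for instance three items of size exactly $\frac14$ give total size exactly $\frac34$ and total weight exactly $\frac34$). This is precisely why the hypothesis is stated with the strict inequality $>\frac34$ rather than $\ge\frac34$, and it is what makes the case analysis close.
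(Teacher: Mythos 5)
Your proof is correct and follows essentially the same route as the paper: the first three assertions are verified exactly as you do, and for the fourth both arguments reduce to showing that a bin with no large item and total weight below $1$ has total size at most $\frac34$. The only difference is presentational — the paper enumerates the relevant bin types ($2$-bins containing a small item, $3$-bins with various numbers of small items), while you replace that case split with the single observation that the total weight $\frac14(2m+s)$ upper-bounds the total size and, being a multiple of $\frac14$, drops to at most $\frac34$ once it is below $1$ — a slightly cleaner packaging of the same fact.
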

\begin{proof}
The first property holds as the weight of a large item is $1$. The second property holds as the weight of any item is at least $\frac 14$. The third claim holds as two medium items have total weight of $1$. Finally, we prove the last claim. The total size of items of a $2$-bin that does not have a large item but has a small item is at most $\frac 34$. If a $3$-bin only has small items, then the total size of its items is at most $\frac 34$. A $3$-bin that has at most two small items has an item of weight at least $\frac 12$, so its total weight is at least $1$.
\end{proof}

\begin{claim}
Given an input $L$, the total weight of the input items is at least $FF(L)-\frac 34$.
\end{claim}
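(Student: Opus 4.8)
The plan is to prove the equivalent statement $FF(L)-W=\sum_B\bigl(1-w(B)\bigr)\le\frac34$, where the sum ranges over all bins $B$ produced by FF. By the preceding claim, the only bins whose weight can fall below $1$ — call them \emph{deficit bins} — are of five kinds: a $1$-bin holding a single medium item (weight $\frac12$, deficit $\frac12$); a $1$-bin holding a single small item (weight $\frac14$, deficit $\frac34$); a $2$-bin of the form \{medium, small\} (weight $\frac34$, deficit $\frac14$); a $2$-bin of the form \{small, small\} (weight $\frac12$, deficit $\frac12$); and a $3$-bin of the form \{small, small, small\} (weight $\frac34$, deficit $\frac14$). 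Every other bin contributes a nonpositive term, so it suffices to bound the total deficit contributed by these five types.

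Next I would record two structural facts about how many deficit bins of each kind can occur. First, the three kinds with deficit at least $\frac12$ (the single-medium $1$-bin, the single-small $1$-bin, and the \{small,small\} $2$-bin) all have level at most $\frac12$, and are all $1^+$-bins; by Claim~\ref{veryuseful} with $j=1$, at most one such bin exists altogether. Second, the two remaining kinds (the \{medium,small\} $2$-bin and the \{small,small,small\} $3$-bin) each have level at most $\frac34$, contain an item of size at most $\frac14$, and have at most $k-1=3$ items; I claim at most one such bin exists altogether. Indeed, if $B$ and $B'$ were two of them with $B$ of smaller FF-index, pick a small item $x$ of $B'$: when $x$ is processed, $B$ already exists, its level is at most its final level $\le\frac34$, and it has fewer than $k$ items, so $x$ fits into $B$ both by size ($\le\frac34+\frac14=1$) and by cardinality — contradicting that FF placed $x$ in the higher-index bin $B'$.

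Granting these two facts, the generic case — when no single-small $1$-bin occurs — follows at once: at most one deficit bin comes from the first group (contributing at most $\frac12$) and at most one from the second (contributing $\frac14$), for a total deficit of at most $\frac34$. The only remaining case is that some $1$-bin holds a single small item $y$ (of size at most $\frac14$). Here I would use the reordering of Claim~\ref{reduc}: in that ordering all $1$-bins come after all $2^+$-bins, so when $y$ is processed every $2^+$-bin is already in its final state and $y$ failed to fit into any of them; since $y$ has size at most $\frac14$ and every $2^+$-bin has fewer than $k$ items, every $2^+$-bin must have level above $\frac34$. Hence there is no \{medium,small\} $2$-bin, no \{small,small\} $2$-bin, and no \{small,small,small\} $3$-bin; and by Claim~\ref{veryuseful} with $j=1$, this single-small $1$-bin (level $\le\frac12$) is the unique $1^+$-bin of level at most $\frac12$, so there is no single-medium $1$-bin and no second single-small $1$-bin either. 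Thus the lone single-small $1$-bin is the only deficit bin and the total deficit equals $1-\frac14=\frac34$.

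The step I expect to be the main obstacle is the careful first-fit bookkeeping in the second structural fact and in the single-small-$1$-bin case: one must use that a bin's level never exceeds its final level, verify that the relevant bins are already present when the offending item is processed (which is exactly where the reordering of Claim~\ref{reduc} is essential, since in an arbitrary input order a $2^+$-bin might be created only after a $1$-bin's item), and confirm that with $k=4$ the cardinality bound genuinely never prevents a small item from joining a $2$-bin or a $3$-bin.
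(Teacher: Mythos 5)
Your proof is correct and rests on the same ingredients as the paper's own argument: the classification (from the preceding claim) of which bins can have weight below $1$, Claim~\ref{veryuseful} with $j=1$ to limit the bins of level at most $\frac 12$, the reordering of Claim~\ref{reduc} for the lone small $1$-bin case, and a first-fit argument showing that at most one $2^+$-bin of level at most $\frac 34$ containing a small item can exist. The paper organizes the case analysis around the first such ``bad'' $2^+$-bin rather than around an explicit enumeration of deficit types, but the substance and the resulting bound of $\frac 34$ on the total deficit are the same.
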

\begin{proof}
We calculate the total weight based on the packing of FF.
Every dedicated bin, except for possibly one such bin, has a large item, thus there is at most one dedicated bin whose weight is below $1$ (and it is at least $\frac 14$). If all $2^+$-bins have weights of at least $1$, then we are done as all $2^+$-bins, all $4$-bins, and all $1$-bins except for possibly one bin (that has weight of at least $\frac 14$) have total weights of at least $1$.

Otherwise, there must be a $2^+$-bin of level of at most $\frac 34$, that does not have a large item. Consider the first such bin according to the ordering of FF, and call it $B$. This bin must have a small item.
All items that are packed into bins that appear later in the ordering of FF can only be medium or large. Thus, all further $2^+$-bins have total weights of at least $1$, and if there is a $1$-bin that does not have a large item, then it must have a medium item.
Moreover, if $B$ has level of at most $\frac 12$, then all dedicated bins have large items. In this last case, all bins except for $B$ have weights of at least $1$, while $B$ has weight of at least $\frac 12$. Otherwise, if the level of $B$ is in $(\frac 12 , \frac 34]$, then we have the following cases. If $B$ is a $2$-bin, then it has one medium item and one small item, and its weight is $\frac 34$. If $B$ is a $3$-bin, then it has weight of $\frac 34$ as well (as its weight is below $1$, and a $3$-bin has weight of at least $\frac 34$). Thus, all bins except for $B$ and possibly one $1$-bin with a medium item have total weights of at least $1$, while these two bins have weights of at least $\frac 34$ and $\frac 12$, respectively, and the claim follows.
\end{proof}

We have $W \leq 2 \cdot OPT(L)$ and $W \geq FF(L)-\frac 34$. Thus, $FF(L)-2 \cdot OPT(L) \leq \frac 34$, which implies (by integrality) $FF(L) \leq OPT(L)$.


\subsection{The case $\boldsymbol{k=5}$}
We analyzed the cases $k=2,3,4$, and it remains to analyze the asymptotic competitive ratio for
the cases $k\geq 5$, which are more complicated. In particular,
items that are packed in $k$-bins will be treated separately.
These items are called $\alpha$-items, and the weight of every such item will be
equal to $\frac 1k$ in all remaining cases.
Often these items will be analyzed together
with very small items. Items that are not $k$-items will be called
{\it additional} items. Thus, for $k=5$, the weight of any $\alpha$-item is $\frac 15$, and we define weights and types for the additional items as follows. Recall that the variable $a$
denotes the size of an item.

\[
w(a)=
\begin{cases}
1/5 & \text{\ \ \ \ \  if \ \ \ \ \ \ \ \ \ \ \ \ }a\leq1/6\text{, \
\ \ \ in this case the item is  tiny}\\
4/15 & \text{\ \ \ \ \  if \ \ \ }1/6<a\leq1/4\text{, \ \ \ \ \ in this
case the item is  small}\\
7/15 & \text{\ \ \ \ \ if \ \ \ }1/4<a\leq1/3\text{, \ \ \ \ \ in this
case the item is  medium}\\
8/15 & \text{\ \ \ \ \ if \ \ \ }1/3<a\leq1/2\text{, \ \ \ \ \ in this
case the item is  big}\\
1 & \text{\ \ \ \ \ if \ \ \  }1/2<a\leq1\text{, \ \ \ \ \ \  \ \ in this
case the item is  huge}%
\end{cases}
\]

We will show that the weight of any bin of $OPT$ is at most
$32/15$, while the weight of any bin of FF is at
least $1$, except for a constant number of special bins.

\begin{lemma}
For every bin $B$ of $OPT$, $w(B)\leq 32/15$ holds.
\end{lemma}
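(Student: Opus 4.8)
The plan is to bound the weight of a single bin $B$ of $OPT$ by case analysis on the number of huge items it contains. Since a huge item has size above $\frac 12$, $B$ can contain at most one huge item. The heart of the argument is that whenever $B$ contains a large item (huge or big), the remaining size budget and cardinality budget severely restrict how many other big or huge items can appear, and the weight function is designed so that the worst combination in each case yields exactly $\frac{32}{15}$.

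First I would treat the case where $B$ contains a huge item, of weight $1$. Then the remaining items have total size below $\frac 12$ and there are at most $k-1=4$ of them. I would argue that at most one of them can be big (two big items would contribute more than $\frac 23 > \frac 12$ in size), so the remaining three items have weight at most $\frac 8{15} + 2 \cdot \frac 7{15}$ — but actually I must be careful, since a medium item has size above $\frac 14$, and three medium items would exceed $\frac 34 > \frac 12$; so at most one of the non-huge items can have size above $\frac 14$. Hence the non-huge items contribute at most $\frac 8{15}$ (one big item) plus three items of weight at most $\frac 4{15}$ each (small or tiny/$\alpha$), giving $1 + \frac 8{15} + 3 \cdot \frac 4{15} = 1 + \frac{20}{15} > \frac{32}{15}$ — so this naive bound is too weak, and I would instead use the size constraint more sharply: if one item is big (size $> \frac 13$), the other three have total size below $\frac 12 - \frac 13 = \frac 16$, so each is tiny or an $\alpha$-item of weight $\frac 15$, yielding $1 + \frac 8{15} + 3 \cdot \frac 15 = 1 + \frac 8{15} + \frac 9{15} = \frac{32}{15}$. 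If instead no non-huge item is big but one is medium, the other three have size below $\frac 12 - \frac 14 = \frac 14$, again each of weight at most $\frac 15$ (they are tiny/$\alpha$; a small item has size $>\frac 16$ so at most one fits, but even allowing weight $\frac 4{15}$ the bound must be checked), and $1 + \frac 7{15} + (\text{three items})$ needs all three to have weight $\frac 15$ for the total to reach at most $\frac{32}{15}$ — I would verify the size arithmetic handles this. If no non-huge item exceeds $\frac 14$, all four remaining items have weight at most $\frac 4{15}$, giving $1 + \frac{16}{15} < \frac{32}{15}$.

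Second I would handle the case with no huge item. Then every item has weight at most $\frac 8{15}$ and there are at most $k=5$ items, so a trivial bound gives $5 \cdot \frac 8{15} = \frac{40}{15}$, which exceeds $\frac{32}{15}$ and must be improved via size constraints. Five big items would have total size above $\frac 53 > 1$, so at most two items can be big; with two big items their size exceeds $\frac 23$, leaving the other three items with total size below $\frac 13$, hence each of weight at most $\frac 7{15}$, and in fact three items of size $<\frac 13$ summing to $<\frac 13$ means each has small average — I'd push this to get each at most $\frac 4{15}$ or better, targeting $2\cdot\frac 8{15} + 3 \cdot (\le \frac 4{15})$, but this gives $\frac{28}{15} < \frac{32}{15}$, fine. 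The tight subcase should be one big item plus four medium items, or zero big with five medium: five medium items have size above $\frac 54 > 1$, impossible; four medium items have size above $1$, impossible; so at most three medium items, and I would enumerate: the configuration maximizing weight subject to total size $\le 1$ and cardinality $\le 5$ should come out to exactly $\frac{32}{15}$ (e.g. one big plus two medium plus two items of weight $\frac 15$: $\frac 8{15} + \frac{14}{15} + \frac 6{15} = \frac{28}{15}$; or three medium plus two small: $\frac{21}{15}+\frac{8}{15}=\frac{29}{15}$).

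The main obstacle I anticipate is pinning down which configuration is actually extremal: the weight function has the feature that the "density" (weight per unit size) is non-monotone — tiny/$\alpha$ items have density $\frac{1/5}{1/6}=\frac 65$ at the top of their range, small items up to $\frac{4/15}{1/6}=\frac 85$, medium up to $\frac{7/15}{1/4}=\frac{28}{15}$, big up to $\frac{8/15}{1/3}=\frac 85$, huge down to just above $1$ — so the extremal bin is not simply "as many high-density items as possible" but a balance between the cardinality bound (favoring high-weight items) and the size bound (favoring high-density items). I would therefore organize the proof as a clean case split on (number of huge items, number of big items) and, within each case, bound the residual size and hence the weights of the remaining items, checking in each of the few cases that the total is at most $\frac{32}{15}$; the cases with exactly one huge item and one big item, and with one big and several medium/small items, are the ones where equality $\frac{32}{15}$ is approached and so demand the most careful size bookkeeping.
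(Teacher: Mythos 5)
Your proposal takes essentially the same route as the paper: bound $w(B)$ by cases on whether $B$ contains a huge item, and within each case use the residual size budget to limit how many big/medium/small items can coexist; the argument is correct. The one check you left open resolves in your favor: with a huge item and a medium item, the remaining three items total less than $\frac 14$, so at most one of them exceeds $\frac 16$, and the worst case is one small plus two tiny/$\alpha$ items, giving $1+\frac{7}{15}+\frac{4}{15}+2\cdot\frac 15=\frac{32}{15}$ exactly (this, together with huge$+$big$+$three items of weight $\frac 15$, is the extremal configuration, so your worry that a small item there might break the bound is unfounded). Two minor slips that do not affect the conclusion: ``at most two big items'' follows because \emph{three} big items already have total size above $1$ (the five-item computation only rules out five); and next to two big items one of the three leftover items can in fact be medium rather than small, but the total is then $\frac{16}{15}+\frac{7}{15}+\frac{6}{15}=\frac{29}{15}$, still below the bound.
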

\begin{proof}
Bin $B$ can contain at most one huge item. Assume
first that $B$ contains a huge item. If it also contains a big
item, then every remaining item is either tiny or an
$\alpha$-item, that is, an item of weight $\frac 15$. The total weight is therefore at most $1+\frac
8{15}+3\cdot \frac 15=\frac{32}{15}$. If $B$ does not contain a
big item, then it can have at most two items that are medium or small, out
of which at most one can be medium, and the remaining items have
weights of $\frac 15$. In this case the total weight is at
most $1+\frac{7}{15}+\frac{4}{15}+2\cdot \frac 15=\frac{32}{15}$.

If $B$ does not contain a huge item, then it can contain at most
three items of sizes above $\frac 14$, out of which at most two
can have sizes above $\frac 13$, and the remaining items have
weights of at most $\frac4{15}$. The total weight is at most
$2\cdot \frac{8}{15}+\frac{7}{15}+2\cdot
\frac{4}{15}=\frac{31}{15}$.
\end{proof}

Now we consider the  bins created by FF.
\begin{lemma}
The total weight of $j$-bins of FF for a given input $L$ is at
least $FF(L)-4$.
\end{lemma}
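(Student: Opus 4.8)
The plan is to prove the equivalent statement that all but at most four bins opened by FF have weight at least $1$; since weights are nonnegative, summing over all bins then gives $W\ge FF(L)-4$, and since $W$ is the same for every packing of the same items, this is exactly the claim. Throughout I will use the following \emph{FF blocking principle}: if a bin $B$ is created before a bin $B'$, if $B$ is not a $k$-bin, and if $s(B)\le 1-s_x$ for some item $x\in B'$, then a contradiction arises — at the moment $x$ was processed the bin $B$ already existed, had fewer than $k$ items (it never held more items than in the final output) and level at most $s(B)\le 1-s_x$, so FF would have packed $x$ into $B$ or an earlier bin, not into $B'$. By Claim~\ref{reduc} we may assume the output is sorted, although all we really need is that the bins are linearly ordered by creation time.

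I would first handle the easy bins. Every $k$-bin holds five $\alpha$-items of weight $\frac15$, so its weight is exactly $1$. For the $1$-bins, Claim~\ref{veryuseful} with $j=1$ shows that all but at most one have level exceeding $\frac12$, hence contain a huge item of weight $1$; so at most one $1$-bin has weight below $1$.

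The heart is the $2^+$-bins. Using only the weight table, and the fact that a bin containing a huge item automatically has weight $\ge 1$, I would list every multiset of $2,3$, or $4$ item-types whose total weight is below $1$. A finite check then shows that each such ``light configuration'' falls into one of three groups: (T) it contains a tiny item, and then its level is at most $\frac56$; (S) it contains a small item but no tiny item, and then its level is at most $\frac34$; or (M) it contains neither a tiny nor a small item, in which case it must consist of two medium items and has level at most $\frac23$. Now apply the blocking principle inside each group: two bins in group T would collide, since the tiny item (size $\le\frac16$) of the later one fits into the earlier one as $\frac56+\frac16\le1$; two bins in group S collide at size $\frac14$ because $\frac34+\frac14\le1$; and two bins in group M collide at size $\frac13$ because $\frac23+\frac13\le1$ (this last also follows directly from Claim~\ref{veryuseful} with $j=2$). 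Hence each group contains at most one light bin, so there are at most three light $2^+$-bins, and at most $3+1=4$ light bins overall.

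The step I expect to be the most delicate is the finite case analysis behind this trichotomy: one must verify that the level of \emph{every} light configuration containing a tiny item is at most $\frac56$, and that of every tiny-free light configuration containing a small item is at most $\frac34$. These are precisely the bounds that make the transfer thresholds $\frac16$ and $\frac14$ complementary to $1$, and they are what dictate the breakpoints $\frac16,\frac14,\frac13,\frac12$ in the definition of $w$. The check is routine but must be done configuration by configuration; the extremal cases are a $3$-bin consisting of two tiny items and a big item, and a $4$-bin consisting of two tiny and two small items, both of level exactly $\frac56$.
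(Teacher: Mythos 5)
Your proof is correct. I checked the finite case analysis behind your trichotomy: with weights $\frac15,\frac4{15},\frac7{15},\frac8{15}$ for tiny/small/medium/big (and $1$ for huge, which immediately excludes huge items from light bins), the complete list of light $2^+$-configurations is $\{tt,ts,tm,tb,ttt,tts,ttm,ttb,tss,tsm,tttt,ttts,ttss\}$ in group T (all of level at most $\frac56$, with $ttb$ and $ttss$ extremal), $\{ss,sm,sb,sss\}$ in group S (all of level at most $\frac34$), and $\{mm\}$ in group M (level at most $\frac23$); the blocking thresholds $\frac16,\frac14,\frac13$ then do exactly what you claim, and cardinality is never an obstruction since these bins have at most $k-1=4$ items. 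Your route differs from the paper's in how the constant $4$ is extracted. The paper orders the light bins as $B_1,B_2,\ldots$, assumes four light $2^+$-bins exist, and derives a chain of increasingly restrictive properties ($B_1,B_2$ have level above $\frac34$ hence are $3^+$-bins, $B_1$ has level at most $\frac56$ so later light bins have no tiny items, $B_2$ is then a $3$-bin of level at most $\frac34$, contradiction); this sequential refinement needs each conclusion about $B_i$ to feed into the analysis of $B_{i+1}$. Your version instead partitions light bins by their smallest item type and shows each class is a singleton by a single application of the blocking principle, which decouples the cases, makes the role of the breakpoints $\frac16,\frac14,\frac13$ transparent, and arrives at the same count ($3$ light $2^+$-bins plus one light $1$-bin). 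The price is that you must enumerate all light configurations up front rather than only those consistent with the accumulated constraints, but for $k=5$ that enumeration is short. One cosmetic point: the two extremal configurations have level at most $\frac56$ (attained only in the limit), not ``exactly'' $\frac56$; this does not affect the argument since FF packs an item $x$ into any earlier non-full bin of level at most $1-s_x$, so equality is harmless.
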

\begin{proof}
As $k$-bins always have weight $1$, it is left to consider $j$-bins for $1\leq j\leq4$, which contain additional items, and thus their weights are according to $w$. We remove all bins containing total weight
at least $1$ from the considered set of bins, and we will prove
that at most four bins are left. As FF acts in the same way on
subsequences where the complete sets of items of a subset of bins
is given, this will prove the claim. Any bin containing a huge item has
weight of at least $1$, and thus no such bins remain.  Since all $1$-bins except for
at most one bin have huge items, at most one $1$-bin remained,
and if there is such a bin, then it must appear last. Assume by
contradiction that at least five bins have remained, denote the
$\ell$th bin by $B_{\ell}$. The first four bins are $2^+$-bins.

In the next three claims we consider the possible contents of $2^+$-bins that have total
weights below $1$ together with lower bounds on total sizes of
items of such bins.

\begin{claim}
A $2$-bin $B$ such that $w(B)<1$ has items of total size at most $\frac 34$.
\end{claim}
\begin{proof}
Assume by contradiction that the total size of items is above
$\frac 34$. At least one of the items of $B$ must be big, as otherwise the total size is at most $\frac 23$. The second item must be either medium of big, as otherwise the total size is at most $\frac 34$. Thus, the total weight is at least $1$, a contradiction.
\end{proof}

\begin{claim}
Bins $B_1$ and $B_2$ have total sizes of items above $\frac 34$.
\end{claim}
\begin{proof}
Assume by contradiction that the claim does not hold. The further bins cannot have items of sizes in $(0,\frac 14]$ as such an item could be packed into one of $B_1$, $B_2$. A $3^+$-bin with medium and big items has a weight above $1$, thus $B_3$ and $B_4$, that are $2^+$-bins must be $2$-bins. None of them can have a big item, since the total weight of big item and a medium item is $1$. Thus, $B_3$ and $B_4$ have two
medium items each. However, the first medium item of $B_4$ could be
packed into $B_3$, which is a contradiction.
\end{proof}

We find that $B_1$ and $B_2$ are $3^+$-bins.

\begin{claim}
The total size of the items of $B_1$ is at most $\frac 56$.
\end{claim}
\begin{proof}
Assume by contradiction that the total size of items is above
$\frac 56$. We analyze the contents of a $3^+$-bin with items of total size above $\frac 56$ and weight below $1$.

If $B_1$ is a $4$-bin, then it cannot contain an item of size above $\frac 14$, as in such a case the total weight is at least $\frac 7{15}+3\cdot \frac 15>1$.
It cannot have at least two tiny items, since the total size of two tiny items and two small items is at most $\frac 56$. However, the total weight of three small items and another small or tiny item is at least $1$. Thus $B_1$ is a $3$-bin. The total size of three small items is at most $\frac 34$, thus $B_1$ has a medium or big item. It cannot have more than one such item, as in this case the total weight is at least $2\cdot \frac{7}{15}+\frac 15>1$. Since the total size of a medium item and two small items is at most $\frac 56$, we find that $B_1$ has a big item and two items that are small or tiny. Bin $B_1$ has at most one tiny item as the total size of a big item and two tiny items is at most $\frac 56$.  The total weight of a big item, a small item, and another small or tiny item, is at least $1$. We reached a contradiction.
\end{proof}

We find that the further bins do not have tiny items (as a tiny item could be packed into $B_1$). Thus, $B_2$ is a  $3$-bin, as a $4$-bin with no tiny items has a weight of at least
$\frac{16}{15}$, while $w(B_2)<1$.

\begin{claim}
The total size of the items of $B_2$ is below $\frac 34$.
\end{claim}
\begin{proof}
Assume by contradiction that the total size of items is above
$\frac 34$. As $B_2$ is a $3$-bin, it must have at least one item that
is not small, while the total weight of a medium item and two small items is $1$.
\end{proof}

\noindent We have reached a contradiction, and thus the lemma is proved.
\end{proof}

We found that $FF(L) \leq W \leq \frac{32}{15} OPT(L)+4$ for any input $L$.
\begin{theorem}
The asymptotic approximation ratio of FF for $k=5$
is at most $\frac{32}{15}$.
\end{theorem}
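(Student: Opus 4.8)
The plan is to combine the two preceding lemmas exactly as the last display line of the excerpt suggests, and then invoke the definition of the asymptotic competitive ratio. Concretely, I would argue as follows. Fix any input $L$. Since the total weight of the items is an input-invariant quantity (it depends only on the multiset of item sizes together with the information of which items land in $k$-bins of FF, and that information is fixed once we fix FF and $L$), we may write $W=w(I)$ for the total weight of the items of $L$. By the first lemma, every bin of $OPT(L)$ has weight at most $\frac{32}{15}$, so $W\leq \frac{32}{15}\,OPT(L)$. By the second lemma, the total weight of the $j$-bins of FF (over all $1\le j\le k$, including $k$-bins, which have weight exactly $1$) is at least $FF(L)-4$; since every FF-bin's items are among the input items, this total weight is at most $W$. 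Hence $FF(L)-4\leq W\leq \frac{32}{15}\,OPT(L)$, i.e. $FF(L)\leq \frac{32}{15}\,OPT(L)+4$.

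From here the asymptotic bound is immediate: taking $f(m)=4$, which is certainly $o(m)$, we have exhibited an additive constant $4$ such that $FF(L)\le \frac{32}{15}OPT(L)+4$ for every $L$, so by definition the asymptotic competitive ratio of FF for $k=5$ is at most $\frac{32}{15}$. I would state this in one or two sentences, referencing the chain of inequalities $FF(L)\le W\le \frac{32}{15}OPT(L)+4$ already recorded just before the theorem statement.

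There is essentially no obstacle remaining: both hard lemmas (the $\frac{32}{15}$ bound on $OPT$-bins and the ``all but $4$ FF-bins have weight $\ge 1$'' bound) have already been proved in the excerpt, and the weight-conservation principle was set up in the $k=2$ discussion. The only thing to be careful about is bookkeeping — making sure the weight function is applied consistently ($\alpha$-items, i.e. items in $k$-bins, get weight $\frac1k=\frac15$, so every $k$-bin has weight exactly $1$; additional items get weight by the case table) and that the ``constant number of special bins'' in the second lemma is exactly the additive constant $4$ appearing in the final estimate. So the proof is just: quote Lemma on $OPT$-bins, quote Lemma on FF-bins, chain the inequalities through $W$, and read off the asymptotic ratio. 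Nothing further is needed.

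\begin{proof}
Let $L$ be any input, and let $W=w(I)$ be the total weight of its items under the weight function $w$ defined above (where every $\alpha$-item, i.e., every item packed by FF into a $k$-bin, has weight $\frac15$). Since the total weight of the items is the same no matter how they are partitioned into bins, the total weight of the bins of $OPT$ equals $W$; by the lemma bounding $w(B)\le \frac{32}{15}$ for every bin $B$ of $OPT$, we get $W\le \frac{32}{15}\,OPT(L)$. On the other hand, every $k$-bin of FF has weight exactly $1$, and by the lemma on the $j$-bins of FF the total weight of the bins produced by FF is at least $FF(L)-4$; as these bins contain only items of $L$, this quantity is at most $W$. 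Combining, $FF(L)-4\le W\le \frac{32}{15}\,OPT(L)$, so $FF(L)\le \frac{32}{15}\,OPT(L)+4$ for every input $L$. Since $4=o(OPT(L))$, this shows that the asymptotic approximation ratio of FF for $k=5$ is at most $\frac{32}{15}$.
\end{proof}
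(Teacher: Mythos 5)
Your proof is correct and is essentially identical to the paper's argument: the paper likewise just chains the two lemmas through the conserved total weight $W$ to get $FF(L)\le \frac{32}{15}\,OPT(L)+4$ and reads off the asymptotic bound. Your explicit remark that the weight of each item is fixed by FF's run on $L$ (so that $W$ is the same whether summed over FF's bins or $OPT$'s bins) is a correct and welcome clarification of a point the paper leaves implicit.
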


\subsection{The cases $\boldsymbol{k=6,7,8}$}
In this case the definitions of the different types of additional
items remain the same, but the weights of such items are defined
differently. The weight of any huge additional item is $1$. Next,
we consider the remaining items, i.e., the additional items with
sizes at most $1/2$. The weight $w(a)$ of any additional item of size $a \leq \frac 12$ consists of three parts. The first part is the ground
weight, the second part is the scaled size, and the third part is
the bonus. Each part is non-negative. The ground weight of any
item of size $a$, is $g(a)=1/k$. This ensures that the weight of
any item (no matter how small it is) is at least $1/k$. The scaled
size of an additional item of size $a \leq 1/2$, is defined by
$s(a)=\frac{2\left(  2k-11\right)  }{3k}a$.

The bonus of an item of size $a$, denoted by $b(a)$ is defined as
follows.
\[
b(a)=%
\begin{cases}
0 & \text{\ \ \ \ if  \ \ \ \ \ \ \ \ \ \ ~ }a\leq1/6\text{ \ \ \ \ (tiny) }\\
\frac{2\left(  2k-11\right)
}{3k}(a-\frac{1}{6})+\frac{10-k}{9k}=\frac{2\left(  2k-11\right)
}{3k}a+\frac{7-k}{3k} & \text{ \ \ \  if
\ \ \ \ }1/6<a\leq1/4\text{ \ \ \ \ (small) }\\
\frac{2\left(  2k-11\right)
}{3k}(a-\frac{1}{4})+\frac{3}{2k}=\frac{2\left(  2k-11\right)
}{3k}a+\frac{10-k}{3k} &
\text{\ \ \ \ if \ \ \ \ }1/4<a\leq1/3\text{ \ \ \ \ (medium)}\\
\frac{2}{k} & \text{\ \ \ \ if \ \ \ \ }1/3<a\leq1/2\text{ \ \ \ \ (big) }%
\end{cases}
\]

Note that $b(a)$ (and therefore also $w(a)$) is a piecewise linear
function. The value of the bonus is zero if $a\leq1/6$, and the
bonus is constant ($2/k$) for $a \in (1/3,1/2]$. It is monotonically non-decreasing for $a \in (0,1/2]$.
The weight of an additional item of size $a \leq 1/2$, is $w(a)=g(a)+s(a)+b(a)$.
The weight function has the discontinuity points, $1/6$, $1/4$, $1/3$, and $1/2$ (this is not exactly the same set of discontinuity points are the weight function of FF and standard bin packing \cite{JDUGG74}).

The bonus of a medium item is at least $\frac 3{2k}$ and at most
$\frac{\frac 23\left( 2k-11\right)
}{3k}+\frac{10-k}{3k}=\frac{k+8}{9k}<\frac{2}k$, and the
bonus of a small item is at least $\frac{10-k}{9k}$ and at most $\frac{\frac 24\left(  2k-11\right)  }{3k}+\frac{7-k}{3k}=\frac{1}{2k}<\frac{3}{2k}$.

\subsubsection{Properties of the weighting and the asymptotic bound}

\begin{lemma}
\label{OptWeight68}
For every bin $B$ of $OPT$, $w(B)\leq \frac{8(k-1)}{3k}$ holds.
\end{lemma}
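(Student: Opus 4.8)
The plan is to do a case analysis on the number of "large" items (size $>1/2$, i.e., huge items) in $B$, exactly as in the earlier $k=5$ lemma, but now keeping careful track of the three-part weight structure (ground weight $g(a)=1/k$, scaled size $s(a)=\frac{2(2k-11)}{3k}a$, and bonus $b(a)$). The key preliminary observation is that the scaled-size contribution of any subset of items is at most $\frac{2(2k-11)}{3k}$ times the total size of those items, and since the items not equal to the huge item have total size at most $1/2$ (if there is a huge item) or at most $1$ (if there is none), the sum $\sum s(a)$ over the non-huge items is bounded by $\frac{2(2k-11)}{3k}\cdot\frac12 = \frac{2k-11}{3k}$ or by $\frac{2(2k-11)}{3k}$ respectively. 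Then I separately bound the total ground weight by $\frac{|B|}{k} \le 1$ and the total bonus by counting how many items of each type ($\le 1/6$, small, medium, big) can fit, using the size constraints.

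First I would handle the case that $B$ contains a huge item. Then $B$ has at most $k-1$ further items, all of size $\le 1/2$ and of total size $\le 1/2$. Among these, at most one can be big (size $>1/3$), and a big plus any other item of size $>1/6$ already exceeds $1/2$, so if there is a big item the remaining items are all tiny (bonus $0$); otherwise at most two items can have size $>1/4$ — and actually since three items of size $>1/4$ would exceed $1/2$, at most two are medium/small — and the rest are tiny. So the total bonus of the non-huge items is at most $\max\{b_{\text{big}},\ 2b_{\text{medium}}\} = \max\{\tfrac2k,\ \tfrac{2(k+8)}{9k}\}$; for $k\le 8$ one checks $\tfrac{2(k+8)}{9k}$ versus $\tfrac2k$ and takes the larger. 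Adding $1$ (the huge item's weight), plus the ground-weight bound $\le \frac{k-1}{k}$ wait — more carefully, ground weights of the $\le k-1$ non-huge items total $\le \frac{k-1}{k}$, the scaled sizes total $\le \frac{2k-11}{3k}$, and the bonus is as above; summing the huge item's $1$ with these three bounds and simplifying should give exactly $\frac{8(k-1)}{3k}$ (this is presumably how the weight constants were reverse-engineered, so the arithmetic will close tightly).

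Next, the case that $B$ has no huge item: all $|B|\le k$ items have size $\le 1/2$, total size $\le 1$. Ground weights total $\le 1$. Scaled sizes total $\le \frac{2(2k-11)}{3k}$. For the bonus, at most three items have size $>1/4$ (four such would exceed $1$), and of these at most two have size $>1/3$; so the bonus is maximized by two big items plus one medium item, i.e.\ at most $2\cdot\tfrac2k + \tfrac{k+8}{9k}$, with all remaining items tiny. Summing $1 + \frac{2(2k-11)}{3k} + \frac{4}{k} + \frac{k+8}{9k}$ and simplifying should again be $\le \frac{8(k-1)}{3k}$ for $k=6,7,8$ — and one should double-check whether the configuration "three big items" is excluded (three items of size $>1/3$ sum to $>1$, so yes it is) and whether "two big + one small" or other near-maximal configs give more; they should not, since medium bonus exceeds small bonus. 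The main obstacle is bookkeeping: making sure the weight constants $\frac{2(2k-11)}{3k}$, $\frac{10-k}{9k}$, $\frac{3}{2k}$, $\frac2k$ were chosen so that all these extremal configurations hit the bound $\frac{8(k-1)}{3k}$ simultaneously, and verifying the boundary restriction $k\le 8$ is genuinely needed (e.g.\ the inequality $\frac{2(k+8)}{9k}\ge \frac2k$ and positivity of $2k-11$ — note $2k-11>0$ needs $k\ge 6$ — and the small-item bonus lower bound $\frac{10-k}{9k}\ge 0$ needs $k\le 10$). I would verify each of the handful of extremal bin contents by direct substitution of $k=6,7,8$ rather than carrying symbolic expressions, since the claim is only for these three values.
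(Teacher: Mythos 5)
Your decomposition into ground weight, scaled size, and bonus, with a case split on whether $B$ contains a huge item, is exactly the paper's strategy, and your budgets for the ground weight and the scaled size are right. The gap is in the bonus accounting, and in the huge-item case it is fatal to the argument as written. There the non-huge items have total size below $\frac12$, so the available bonus budget is $\frac{8k-8}{3k}-1-\frac{k-1}{k}-\frac{2k-11}{3k}=\frac2k$. Your claimed bound $\max\{\frac2k,\frac{2(k+8)}{9k}\}$ equals $\frac{2(k+8)}{9k}$, which exceeds $\frac2k=\frac{18}{9k}$ for every $k\ge 2$, so your total comes to $\frac{26k-26}{9k}>\frac{24k-24}{9k}=\frac{8(k-1)}{3k}$: the arithmetic does not ``close tightly'' as you hoped. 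Two medium items cannot in fact accompany a huge item (their sizes alone exceed $\frac12$), but even the feasible worst case of one small plus one medium item cannot be dispatched by adding per-item maximum bonuses, since those maxima are $\frac{1}{2k}$ and $\frac{k+8}{9k}$, summing to $\frac{k+12.5}{9k}>\frac{18}{9k}$ for $k\ge 6$. The missing idea, which the paper uses, is to exploit the joint constraint $a_1+a_2<\frac12$ together with the fact that the small and medium bonus branches share the slope $\frac{2(2k-11)}{3k}$, which gives total bonus at most $\frac{2(2k-11)}{3k}\cdot\frac12+\frac{7-k}{3k}+\frac{10-k}{3k}=\frac2k$.

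The no-huge case has a second, milder problem: you treat every item of size at most $\frac14$ as tiny, but small items (sizes in $(\frac16,\frac14]$) carry positive bonus up to $\frac1{2k}$, so as many as five items in the bin can have positive bonus, not three. Consequently ``two big plus one medium'' is not the extremal configuration; the binding one is a single big item together with three further positive-bonus items, each bounded by the medium maximum $\frac{k+8}{9k}$, which attains the bonus budget $\frac{k+14}{3k}$ with equality. Your configuration happens to stay under the budget, so this omission does not make the statement false, but it means the actual worst case was never checked. Finally, you should state explicitly, as the paper does, that $\alpha$-items in $B$ may be replaced by additional items of the same size, since the ground weight alone already dominates the $\alpha$-weight $\frac1k$.
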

\begin{proof}
We will assume that all items packed into $B$ are additional items, as an additional item has larger weight than an $\alpha$-item of the same size.

\textbf{Case 1:} $B$ contains no huge item. The bin can contain at
most $k$ items, thus the total ground weight is at most $1$.
Similarly, the total scaled size is at most $\frac{2\left(
2k-11\right)  }{3k}$. Thus it remains to bound ${b}(B)$, it
suffices to show that the total bonus of the items in
the bin is at most ${b}(A)\leq  \frac{8k-8}{3k}-1-\frac{2\left(  2k-11\right)  }%
{3k}=\frac{k+14}{3k}$.

If there are two big items in the bin, there can be at most one
further item with a positive bonus, and $b(A)\leq3\cdot\frac{2}{k} \leq \frac{k+14}{3k}$, for $k \geq 4$.
If there is only one big item in the bin, there can be at most
three further
items having positive bonuses. Then $b(A)\leq\frac{2}{k}+3\cdot\frac{k+8}{9k}=\frac{k+14}{3k}$. Now suppose that any item of $B$ has size at most $1/3$.
There can be at most five items in the bin having positive bonuses,
and there can be at most three medium items among them. Thus the
total bonus is at most $b(A)\leq3\cdot
\frac{k+8}{9k}+2\cdot\frac{1}{2k}=\frac{k+11}{3k}$.

\textbf{Case 2:} $B$ contains a huge item. Recall that the weight of the huge
item is $1$, and its size is bigger than $\frac 12$. There can be at
most $k-1$ further items in the bin, their total ground weight is at most $\frac{k-1}k$, and their total scaled size is at most $\frac{2\left( 2k-11\right)  }{3k}\cdot \frac 12$, thus it suffices
to show that the total bonus of
the further items in the bin is at most $\frac{8k-8}{3k}-1-\frac{k-1}{k}-\frac{2k-11}{3k}=\frac{2}{k}$. The total size of remaining items is below $\frac 12$, thus the bin can contain at most two items with positive bonuses.
Moreover, if $B$ contains only one item with a positive bonus, then this bonus is at most $\frac 2k$, and we are done. Otherwise, if it contains two items of positive bonuses, none of them can be big, and at least one of them is small. If both are small, then their total bonus is at most $\frac 1k$. We are left with the case that $B$ contains items of sizes $a_{1}$ and $a_{2}$ where $\frac 16 < a_{1}\leq \frac 14 <a_{2} \leq \frac 13$. Then applying
$a_{1}+a_{2} < \frac 12$, we get that the total bonus is
\begin{align*}
& \frac{2\left(  2k-11\right)
}{3k}a_{1}+\frac{7-k}{3k}+\frac{2\left(  2k-11\right)  }{3k}a_{2}+\frac{10-k}{3k}\leq
\frac{2\left(  2k-11\right)
}{3k} \cdot \frac 12 +\frac{17-2k}{3k}=\frac{2}{3k}.\end{align*}
\end{proof}

Now, we find a lower bound on the total weight of the bins created by FF for an input $L$ and a given $k\in\{6,7,8\}$. The total weight of $1$-bins is at least their number minus $1$, as all $1$-bins except for possibly one bin have huge items. The total weight of $k$-bins is exactly their number. We will show that for each one of the four sets:  $2$-bins, $3$-bins, $4$-bins, and $5^+$-bins, the total weight of items packed into bins of this set is at least the number of such bins minus $2$ (for $5^+$-bins it is at least their number minus $1$). This will show that $W \geq FF(L)-8$. Since the weight of every bin that contains a huge item is at least $1$, we can restrict the analysis to bins that do not contain such items, and for $2 \leq j\leq k-1$ we will only consider $j$-bins that have no huge items.

\begin{claim}
Every $5^+$-bin of level above $\frac 56$ has weight of at least $1$, and the total weight of $5^+$-bins is at least their number minus $1$.
\end{claim}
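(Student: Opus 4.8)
The plan is to establish the two assertions separately, relying on the decomposition $w(a)=g(a)+s(a)+b(a)$ of the weight of an additional item into the ground weight $g(a)=1/k$, the scaled size $s(a)=\frac{2(2k-11)}{3k}a$, and the nonnegative bonus $b(a)$. Since a bin containing a huge item already has weight at least $1$, it suffices to treat $5^+$-bins all of whose items are additional (non-huge).

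For the first assertion, I would take a $5^+$-bin $B$ with $j$ items, $5\le j\le k-1$, none huge, and level $s(B)>5/6$, and sum the three parts of the weights: $w(B)=\frac{j}{k}+\frac{2(2k-11)}{3k}s(B)+\sum_{i\in B}b(s_i)$. Using $j\ge 5$, the positivity of $2k-11$ for $k\ge6$, and $s(B)>5/6$, the first two terms already exceed $\frac{5}{k}+\frac{5(2k-11)}{9k}=\frac{10(k-1)}{9k}$, so it remains to show that $\sum_{i\in B}b(s_i)\ge 1-\frac{10(k-1)}{9k}=\frac{10-k}{9k}$ (this quantity is positive for $k\le9$). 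Here I split into two cases. If $B$ contains at least one non-tiny item, then already that single bonus is at least $\frac{10-k}{9k}$ --- this is the smallest of the lower bounds on the bonuses of small, medium and big items recorded just before the lemma --- and we are done. If every item of $B$ is tiny (size at most $1/6$), then the bonuses vanish, but $s(B)>5/6$ forces $j\ge6$ (in particular this case is vacuous when $k=6$), and plugging $j\ge6$ and $s(B)>5/6$ into the same expression gives $w(B)>\frac{6}{k}+\frac{5(2k-11)}{9k}=\frac{10k-1}{9k}>1$.

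For the second assertion, I would invoke Claim~\ref{veryuseful} with $j=5$: all $5^+$-bins except at most one have level above $5/6$. Combined with the first assertion (and with the trivial fact that the one exceptional $5^+$-bin still has weight at least $5/k$, being made of at least five items of weight at least $1/k$ each), this shows that at most one $5^+$-bin has weight below $1$, and hence the total weight of the $5^+$-bins is at least their number minus $1$.

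I do not expect a genuine obstacle; the only point requiring care is the all-tiny subcase of the first assertion, where the bonus contributes nothing and one must instead use that a level exceeding $5/6$ built from items of size at most $1/6$ needs at least six items, so that the ground-weight term already carries the argument. The remaining work --- verifying $\frac{5}{k}+\frac{5(2k-11)}{9k}=\frac{10(k-1)}{9k}$ and checking that each of the minimal small/medium/big bonuses is at least $\frac{10-k}{9k}$ for $k\in\{6,7,8\}$ --- is routine.
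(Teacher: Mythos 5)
Your proof is correct and follows essentially the same route as the paper: decompose the weight into ground weight, scaled size, and bonus, note that the first two terms give $\frac{10(k-1)}{9k}$, and supply the missing $\frac{10-k}{9k}$ either from one positive bonus or from a sixth item's ground weight (the paper splits on $j=5$ versus $j\geq 6$ rather than on the presence of a non-tiny item, but this is the same argument), with the second assertion following from Claim~\ref{veryuseful} exactly as in the paper.
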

\begin{proof}
All $5^+$-bins, except for at most one bin, have levels above $\frac 56$.
Consider a $j$-bin $A$ where $5 \leq j \leq k-1$. The ground weight of its items is $\frac jk$, and their scaled size is at least $\frac{5}{6}\cdot\frac{2\left( 2k-11\right)  }{3k}$. If $j=5$, then at least one item has a positive bonus (otherwise the total size is at most $\frac 56$),
and the weight of the
bin is
${w}(A)={g}(A)+{s}(A)+{b}(A)\geq \frac 5k+\frac{5}{6}\cdot\frac{2\left(
2k-11\right)  }{3k}+\frac{10-k}{9k}=1$, since the value of any positive bonus is at least $\frac{10-k}{9k}$. Otherwise, $k \geq 6$, so the ground weight is at least $\frac 6k$, and we are done since
$\frac 1k \geq \frac{10-k}{9k}$.
Since there is at most one $5^+$-bin whose level is at most $\frac 56$, and all $5^+$-bins with level above $\frac 56$ have weights of at least $1$, we find that the total weight of $5^+$-bins is at least their number minus $1$.
\end{proof}

It remain to consider only the $2$-bins, $3$-bins, and $4$-bins. For all of these cases we consider two
subcases. We will show that if the level of a bin is sufficiently large (above $\frac 34$ for $2$-bins, and above $\frac 56$ otherwise), then the total weight of the bin is at least
$1$.  Then, we will consider $j$-bins of smaller levels for for $j=2,3,4$.

\begin{lemma}
Consider a $2$-bin of level above $\frac{3}{4}$, the weight of the bin is at least $1$.
\end{lemma}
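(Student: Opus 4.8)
The plan is to split on whether the $2$-bin contains a huge item. If it does, that item alone has weight $1$, so the bound holds; hence I may assume neither item is huge. Since a $2$-bin is not a $k$-bin for $k\in\{6,7,8\}$, both items are additional items, each of size in $(0,\frac12]$ and thus of exactly one of the types tiny, small, medium, big. The first real step is to observe that the level hypothesis $a_1+a_2>\frac34$ is very restrictive: using the maximal size of each type ($\frac16$ for tiny, $\frac14$ for small, $\frac13$ for medium, $\frac12$ for big), every unordered pair of types other than $(\text{medium},\text{big})$ and $(\text{big},\text{big})$ has total size at most $\frac34$ — in particular $(\text{small},\text{big})$ tops out at $\frac14+\frac12=\frac34$, which is not \emph{above} $\frac34$. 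So only those two configurations need to be treated.

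For two big items, each has weight $g+s+b=\frac1k+\frac{2(2k-11)}{3k}a_i+\frac2k=\frac3k+\frac{2(2k-11)}{3k}a_i$; since $2k-11>0$ for $k\ge6$ and $a_i>\frac13$, this is more than $\frac{4k+5}{9k}$, so the two weights sum to more than $\frac{8k+10}{9k}\ge1$ for all $k\le10$. For one medium item of size $a_1$ and one big item of size $a_2$, I would expand the medium weight (whose bonus contributes a second linear term in $a_1$) and the big weight, and the arithmetic should collapse to $w(a_1)+w(a_2)=\frac{22-k}{3k}+\frac{2(2k-11)}{3k}(2a_1+a_2)$. The key trick is then that $2a_1+a_2=(a_1+a_2)+a_1>\frac34+\frac14=1$ (using $a_1>\frac14$ for a medium item and the level hypothesis), which together with $2k-11>0$ makes this expression strictly exceed $\frac{22-k}{3k}+\frac{2(2k-11)}{3k}=1$. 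In both cases $w(B)>1\ge1$.

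The only obstacle is bookkeeping rather than anything conceptual: one must keep track that the common coefficient $\frac{2(2k-11)}{3k}$ is strictly positive on $\{6,7,8\}$ (so monotonicity in the $a_i$ points the right way — this is exactly why the lemma is separated from the $k=5$ case), and one must be careful about strict versus non-strict inequalities, since the level being \emph{strictly} above $\frac34$ is precisely what eliminates the $(\text{small},\text{big})$ configuration and what drives the final strict bound in the $(\text{medium},\text{big})$ case.
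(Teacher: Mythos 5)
Your proposal is correct and takes essentially the same route as the paper: both arguments first observe that a $2$-bin of level strictly above $\frac34$ (with no huge item) must contain one big item and one item that is medium or big, and then lower-bound the total of ground weight, scaled size, and bonus. The paper does the arithmetic in one shot (scaled size at least $\frac{2(2k-11)}{3k}\cdot\frac34$ plus bonuses at least $\frac2k+\frac3{2k}$), whereas you split into the $(\mathrm{big},\mathrm{big})$ and $(\mathrm{medium},\mathrm{big})$ cases and fold the level hypothesis into $2a_1+a_2>1$; this is only a cosmetic difference and your computations check out.
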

\begin{proof}
The bin must have a big item and another item that is either medium or big.
The ground weight is $\frac 2k$, and the scaled size is at least $\frac{2(2k-11)}{3k}\cdot \frac 34$. The total bonus is at least $\frac{2}{k}+\frac{3}{2k}=\frac{7}{2k}$.
The total weight is therefore at least $\frac 2k+\frac{2k-11}{2k}+\frac{7}{2k}=\frac{4+2k-11+7}{2k}=1$.
\end{proof}

\begin{lemma}
Let $j \in \{3,4\}$. Consider a $j$-bin of level above $\frac{5}{6}$, the weight of the bin is at least $1$.
\end{lemma}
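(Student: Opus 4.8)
I would work with the additive decomposition $w(B)=g(B)+s(B)+b(B)$, where $g(B)=j/k$ since $B$ has $j$ items, $s(B)=\frac{2(2k-11)}{3k}\cdot\mathrm{level}(B)>\frac{5(2k-11)}{9k}$ by hypothesis, and $b(B)=\sum_{a\in B}b(a)\ge 0$. Recall the bonus is $0$ on tiny items ($a\le 1/6$), equals $\frac{2(2k-11)}{3k}a+\frac{7-k}{3k}$ on small items, equals $\frac{2(2k-11)}{3k}a+\frac{10-k}{3k}$ on medium items (so it exceeds the small expression by $\frac 1k$), and equals $\frac 2k$ on big items; also $B$ has no huge item, since those bins have already been handled. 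Thus the whole problem is to extract enough bonus from the non-tiny items, and the hypothesis $\mathrm{level}(B)>5/6$ plays two roles: it feeds $s(B)$, and it limits the number of tiny items in $B$.

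I would first split on the number of big items in $B$. If there are at least two big items, then $b(B)\ge\frac 4k$, so $w(B)\ge\frac jk+\frac{5(2k-11)}{9k}+\frac 4k\ge\frac{10k+8}{9k}>1$. If there is exactly one big item and $j=4$, then $b(B)\ge\frac 2k$ and $w(B)\ge\frac 6k+\frac{5(2k-11)}{9k}=\frac{10k-1}{9k}\ge1$. If there is exactly one big item and $j=3$, then the other two items cannot both be tiny (else $\mathrm{level}(B)\le\frac12+\frac 13=\frac 56$), so one of them is small or medium and contributes bonus more than $\frac{10-k}{9k}$; adding this to $g(B)+s(B)+\frac 2k$ gives more than $\frac{45+10k-55+10-k}{9k}=1$.

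The remaining case, no big item, is the heart of the argument: now every item has size $\le1/3$, and every non-tiny item of size $a$ contributes bonus at least $\frac{2(2k-11)}{3k}a+\frac{7-k}{3k}$, with an extra $\frac 1k$ if it is medium. For $j=3$ the level bound rules out any tiny item (a tiny item leaves the other two of total size $\le\frac 23$, hence $\mathrm{level}(B)\le\frac 56$) and forces at least one medium item (three items of size $\le1/4$ sum to $\le 3/4$); summing the per-item bounds and using $\mathrm{level}(B)>5/6$ then yields $w(B)>\frac{11(k-1)}{9k}>1$. For $j=4$ I would split on the number $t$ of tiny items: since $t$ tiny, $s$ small and $m$ medium items give level at most $t/6+s/4+m/3$ and $t+s+m=4$, the bound $>5/6$ forces $s+2m>2$, whence $t\le 2$, and when $t=2$ also $m\ge 1$. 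For $t=0$ the summed small bound gives $w(B)>\frac{8k+10}{9k}\ge1$ (using $k\le 10$); for $t=1$, after absorbing the tiny item's size loss ($\le 1/6$, costing at most $\frac{2k-11}{9k}$ against $s(B)$), the estimate collapses to exactly $\frac kk=1$; for $t=2$, using the forced medium item, it clears $1$ via $w(B)>\frac{10k-1}{9k}$.

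The main obstacle is not depth but precision: two subcases (one big item with $j=3$, and one tiny item with $j=4$) bottom out at exactly $1$, so each inequality feeding them — in particular the strict level bound $\mathrm{level}(B)>5/6$ and the lower bounds on the bonuses of the non-tiny items — must be tracked carefully, and the combinatorial restrictions on which item types can occur must be argued exactly; the remaining subcases have slack but still need a short verification, which as sketched above can be done uniformly in $k\in\{6,7,8\}$.
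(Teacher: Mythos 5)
Your proposal is correct and follows essentially the same route as the paper: the additive decomposition $w(B)=g(B)+s(B)+b(B)$ with $s(B)>\frac{2(2k-11)}{3k}\cdot\frac56=\frac{10k-55}{9k}$ and ground weight $\frac jk$, followed by a case analysis on item types to extract the remaining bonus. Your case tree (first on the number of big items, then, for big-free $4$-bins, on the number of tiny items, using the summed linear bonus bound) is organized somewhat differently from the paper's (which first splits on whether two items exceed $\frac14$), but all the individual estimates check out, including the two subcases you correctly flag as bottoming out at exactly $1$.
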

\begin{proof}
The scaled size of the bin is at least $\frac{2(2k-11)}{3k}\cdot\frac 56=\frac{10k-55}{9k}=1+\frac{k-55}{9k}$.
For $k=3$, the bin has ground weight $\frac 3k$, and for $k=4$, the bin has ground weight $\frac 4k$. If the bin has at least two items of sizes above $\frac 14$, then their combined bonuses are at least $\frac 3k$, and the total weight is at least $1+\frac{k-55}{9k}+\frac 3k+\frac 3k=1+\frac{k-1}{9k}>1$. Similarly, if a bin has a big item and at least one other item with a positive bonus, their combined bonuses are at least $\frac 2k+\frac{10-k}{9k}=\frac{28-k}{9k}$, and the total weight is at least $1+\frac{k-55}{9k}+\frac 3k+\frac {28-k}{9k}=1$. The remaining cases are considered separately for $k=3$ and $k=4$.

For $k=3$, a bin that has a big item and two tiny items has level of at most $\frac 56$, and a bin that has a medium item and two small items also has level of at most $\frac 56$, thus there are no additional cases and we are done. For $k=4$, a bin that has a big item has weight of at least $1+\frac{k-55}{9k}+\frac 4k+\frac 2k=1+\frac{k-1}{9k}>1$. We are left with the case where $k=4$, the bin has no big items, and it has at most one medium item. If the bin has one medium item, then (since the size of a medium item and three tiny items is at most $\frac 56$), it must have also one small item and their total bonus is at least $\frac{3}{2k}+\frac{10-k}{9k}=\frac{47-2k}{18k}$, and the bin has weight of at least $1+\frac{k-55}{9k}+\frac 4k+\frac {47-2k}{18k}=1+\frac{1}{2k}>1$.
Finally, if it has no medium items, then it must have at least three small items. These three items must have combined total size of at least $\frac 23$, and their total bonus is at least $\frac{2(2k-11)}{3k}\cdot \frac 23+\frac{3(7-k)}{3k}=\frac{19-k}{9k}$. In this case the weight of the bin is at least $1+\frac{k-55}{9k}+\frac 4k+\frac{19-k}{9k}=1$.
\end{proof}

\begin{lemma}
The total weight of the $2$-bins of levels in $(\frac 23,\frac 34]$ is at least their number minus $1$.
\end{lemma}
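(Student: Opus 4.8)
The plan is to analyse the possible contents of a $2$-bin of level in $(\frac23,\frac34]$ case by case and then amortise the few deficient bins along the First-Fit ordering. Fix $k\in\{6,7,8\}$ and abbreviate $c=\frac{2(2k-11)}{3k}$. Since we only consider $2$-bins without a huge item, both items of such a bin have size at most $\frac12$, so a bin of level in $(\frac23,\frac34]$ must contain a big item (otherwise its level is at most $\frac23$), and the remaining item then has size in $(\frac16,\frac5{12})$, hence is small, medium, or big. I would first record the weight in each case: two big items give weight strictly above $1$; a big item of size $a$ together with a medium item of size $m$ gives weight $\frac4k+\frac{10-k}{3k}+c(a+2m)$, which is at least $1$ precisely when $a+2m\ge1$, and when it is smaller the deficiency equals $c(1-a-2m)=c(1-(\text{level})-m)$, which is below $c/12$ because the level exceeds $\frac23$ and $m>\frac14$; a big item together with a small item gives weight strictly above $\frac{7k+2}{9k}$, so its deficiency is below $\frac{2(k-1)}{9k}$.

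Next I would isolate two consequences of the behaviour of FF, both based on the observation that if a $2$-bin $B$ of level at most $\frac34$ precedes a $2$-bin $B'$ of level at most $\frac34$, then $B$ is already complete when $B'$ is opened (had $B$ held only one item $y$ at that moment, the item $x$ opening $B'$ would have fit into $B$, since $s_x,s_y\le\frac12$); consequently every item of $B'$ has size greater than $1-s(B)\ge\frac14$. The first consequence is that at most one $2$-bin of level at most $\frac34$ contains a small item, so at most one bin counted in the lemma is of type (big, small). The second concerns the deficient (big, medium) bins counted in the lemma: writing them in FF order as $B_1\prec\cdots\prec B_r$ with medium items $m_1,\dots,m_r$ and levels $L_1,\dots,L_r$, the medium item $m_{i+1}$ is rejected by the already complete bin $B_i$, so $m_{i+1}>1-L_i$, i.e. $1-L_i-m_i<m_{i+1}-m_i$, and hence $1-w(B_i)=c(1-L_i-m_i)<c(m_{i+1}-m_i)$ for every $i<r$.

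To conclude I would sum the deficiencies over all $2$-bins of level in $(\frac23,\frac34]$. Bins of type (big, big) and non-deficient (big, medium) bins contribute at most $0$. The deficient (big, medium) bins contribute, by the telescoping estimate above together with the pointwise bound $1-w(B_r)<c/12$, strictly less than $c(m_r-m_1)+c/12<c/12+c/12=c/6$ (using $m_1>\frac14$ and $m_r\le\frac13$). The at most one (big, small) bin contributes strictly less than $\frac{2(k-1)}{9k}$. Thus the total deficiency is strictly below $c/6+\frac{2(k-1)}{9k}$, which equals $\frac{11}{54},\frac5{21},\frac{19}{72}$ for $k=6,7,8$ respectively, all below $1$; therefore the total weight of these $2$-bins exceeds their number minus $1$, as required.

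The step I expect to be the real obstacle is the (big, medium) case. Unlike the other two types, such bins genuinely may have weight below $1$, and nothing bounds their number in advance, so the argument must use the FF ordering to telescope their deficiencies; the care goes into proving the ``earlier bin is complete before the next one opens'' property and the rejection inequality $m_{i+1}>1-L_i$, and into keeping the single (big, small) exception from being double-counted against the telescoped chain.
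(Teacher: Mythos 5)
Your proof is correct: I checked the weight computations (a big--medium pair has weight exactly $1$ when $a+2m=1$; the big--small bound $\frac{7k+2}{9k}$; the final deficiency total $\frac{2k-11}{9k}+\frac{2(k-1)}{9k}=\frac{4k-13}{9k}<1$ for $k=6,7,8$) and the two FF-ordering facts you invoke, and all hold. Your route differs from the paper's in its bookkeeping, though both rest on the same rejection inequality. The paper needs no case analysis at all: it takes consecutive bins $B_i\prec B_j$ in the (filtered) list and proves the single uniform inequality $g(B_i)+s(B_i)+b(B_j)\geq 1$. Writing the level of $B_i$ as $\frac 23+x$ with $0<x\leq\frac1{12}$, both items of $B_j$ have size above $\frac13-x\geq\frac14$ because they were rejected by $B_i$, one of them is big, so $b(B_j)\geq \frac 2k+\frac{2(2k-11)}{3k}(\frac13-x)+\frac{10-k}{3k}$, and the $x$-terms cancel exactly against $s(B_i)$; summing over the consecutive pairs and discarding the leftover nonnegative terms $b$ of the first bin and $g+s$ of the last gives total weight at least the number of bins minus $1$. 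Your version instead splits by content type, disposes of the at most one (big, small) bin separately, and telescopes the deficiencies of the (big, medium) chain via $1-w(B_i)<\frac{2(2k-11)}{3k}(m_{i+1}-m_i)$; this is more work but yields the quantitatively sharper conclusion that the total deficiency is below $\frac{4k-13}{9k}$ rather than below $1$. The paper's charging scheme (credit each bin's bonus to its predecessor's ground weight and scaled size) absorbs all three of your cases into one line, which is the main economy you miss; nothing in your argument is wrong.
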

\begin{proof}
Consider two consecutive $2$-bins of levels in $(\frac 23,\frac 34]$, $B_i$ and $B_j$ (these bins become consecutive if we remove all other kinds of bins from the list of bins).
We prove that ${g}(B_{i})+{s}(B_{i})+{b}(B_{j})\geq1$. Let the level of $B_{i}$ be $2/3+x$ with some $0 < x\leq1/12$.
There are two items in $B_{j}$, their sizes are above $1/3-x \geq \frac 14$ (since they were not packed into $B_i$), and moreover one of them must be big (and the other one is either medium or big).
Their total bonus is at least $\frac 2k+\frac{2\left(
2k-11\right)  }{3k}\cdot (1/3-x)+ \frac {10-k}{3k}=\frac {26+k-6x(2k-11)}{9k}$.
We get
\begin{align*}
 {g}(B_{i})+{s}(B_{i})+{b}(B_{j})
&  =\frac 2k+\frac{2\left(  2k-11\right)
}{3k}\cdot(2/3+x)+\frac{26+k}{9k}-\frac{2x(2k-11)}{3k}=1.
\end{align*}
The number of pairs $i,j$ that are considered is the number of considered bins minus $1$ and the claim follows.
\end{proof}

Since there is at most one $2$-bin whose level is at most $\frac 23$, and all $2$-bins with level above $\frac 34$ have weights of at least $1$, we find that the total weight of $2$-bins is at least their number minus $2$.

\begin{lemma}
The total weight of the $3$-bins of levels in $(\frac 34,\frac 56]$ is at least their number minus $1$.
\end{lemma}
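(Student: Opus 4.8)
The plan is to follow the template of the preceding lemma for $2$-bins of levels in $(\tfrac23,\tfrac34]$. Delete from the list of bins produced by FF every bin that is not a $3$-bin of level in $(\tfrac34,\tfrac56]$, and let $B_1,\dots,B_m$ be the remaining bins in their order of creation. I will show that for every $i\in\{1,\dots,m-1\}$,
\[
g(B_i)+s(B_i)+b(B_{i+1})\ge 1 .
\]
Summing this over $i=1,\dots,m-1$ and regrouping gives $\sum_{i=1}^{m-1}\bigl(g(B_i)+s(B_i)\bigr)+\sum_{i=2}^{m}b(B_i)\ge m-1$, and adding the non-negative quantities $g(B_m)+s(B_m)$ and $b(B_1)$ recovers $\sum_{i=1}^{m}w(B_i)\ge m-1$, which is exactly the claim.

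To prove the displayed inequality, write the level of $B_i$ as $\tfrac34+x$ with $0<x\le\tfrac1{12}$, so $g(B_i)+s(B_i)=\tfrac3k+\tfrac{2(2k-11)}{3k}\bigl(\tfrac34+x\bigr)$. Since $B_{i+1}$ is created after $B_i$ and a $3$-bin is never full with respect to cardinality (because $k\ge6$), each of the three items of $B_{i+1}$ failed to fit into $B_i$ for reasons of size, hence has size strictly above $1-(\tfrac34+x)=\tfrac14-x$; moreover each item has size at most $\tfrac12$, since (as explained before the lemma) we may restrict attention to bins with no huge item. As $s(B_{i+1})>\tfrac34$, at least one item of $B_{i+1}$ has size above $\tfrac14$. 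I then split into cases according to whether $B_{i+1}$ contains a big item. If it does not, every item of $B_{i+1}$ contributes a bonus of the form $\tfrac{2(2k-11)}{3k}a+c$ with $c\ge\tfrac{7-k}{3k}$, and since at least one item is medium the constants sum to at least $\tfrac{8-k}{k}$; this gives $b(B_{i+1})>\tfrac{2(2k-11)}{3k}\cdot\tfrac34+\tfrac{8-k}{k}$, and substituting yields $g(B_i)+s(B_i)+b(B_{i+1})\ge 1+\tfrac{2(2k-11)}{3k}x\ge1$. If $B_{i+1}$ contains a big item, it contains exactly one (two big items together with a third item of size $>\tfrac14-x$ would push the level above $\tfrac56$), and its two other items cannot both be medium (that too would force the level above $\tfrac14+\tfrac14+\tfrac13=\tfrac56$); so $B_{i+1}$ is of type ``small, medium, big'' or ``small, small, big''. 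In the former I use $s+m>\tfrac12-x$ to get $b(B_{i+1})>\tfrac4k-\tfrac{2(2k-11)}{3k}x$, which leaves a comfortable slack of $1+\tfrac3{2k}$. In the latter I use $s_1+s_2>\tfrac12-2x$ to get $b(B_{i+1})>\tfrac3k-\tfrac{4(2k-11)}{3k}x$, so that $g(B_i)+s(B_i)+b(B_{i+1})>1+\tfrac1{2k}-\tfrac{2(2k-11)}{3k}x$.

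The main obstacle is precisely the ``small, small, big'' subcase: the crude per-item bound on the bonus of a small item loses too much, so one has to estimate the two small items jointly through $s_1+s_2>\tfrac12-2x$, which introduces a term proportional to $-x$ that must be absorbed. The resulting inequality $1+\tfrac1{2k}-\tfrac{2(2k-11)}{3k}x\ge1$ holds iff $x\le\tfrac3{4(2k-11)}$, which is guaranteed by $x\le\tfrac1{12}$ together with $2k-11\le9$, i.e.\ $k\le10$ --- exactly the regime of the section. All the remaining case computations are routine and leave slack bounded below by a positive constant.
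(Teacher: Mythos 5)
Your proposal is correct and follows essentially the same route as the paper: the same telescoping inequality $g(B_i)+s(B_i)+b(B_{i+1})\geq 1$ over consecutive $3$-bins, the same lower bound $\frac 14 - x$ on item sizes in $B_{i+1}$, and the same treatment of the critical ``one big, two small'' configuration via the joint bound on the two small items, absorbed using $x\leq \frac 1{12}$ and $2k-11\leq 9$. The only differences (excluding the two-big and two-medium-plus-big configurations by level arguments rather than bounding their bonuses directly) are cosmetic.
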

\begin{proof}
Suppose that $B_{i}$ and $B_{j}$ are two consecutive $3$-bins. We prove that ${g}(B_{i})+{s}(B_{i})+{b}%
(B_{j})\geq1$. Let the level of $B_{i}$ be $3/4+x$ with some
$0 < x\leq 1/12$. Then there are three items in $B_{j}$, of sizes
$a_{1}\geq a_{2}\geq a_{3}$, such that all are bigger than
$1/4-x$, and in particular, all are bigger than $1/6$. At least one of them must
be also bigger than $1/4$, otherwise the level of the bin is at
most $3/4$. We have $g(B_{i})+{s}(B_{i}) =
 \frac{3}{k}+\frac{2(2k-11)}{3k}\cdot (\frac 34+x)=\frac{2k-5}{2k}+\frac{2x(2k-11)}{3k}$. Thus, it is sufficient to show ${b}(B_{j}) \geq \frac{5}{2k}-\frac{2x(2k-11)}{3k}$. This holds if $a_2 > \frac 14$, as the bonus of a medium or big item is at least $\frac{3}{2k}$.
If none of the items is big, using $a_1+a_2+a_3 \geq \frac 34$ we have ${b}(B_{j}) \geq \frac{2\left(
2k-11\right)  }{3k}(a_{1}+a_{2}+a_{3})+2\cdot \frac{7-k}{3k}+\frac{10-k}{3k}\geq \frac{2\left(
2k-11\right)  }{3k}\cdot  \frac 34+ \frac{24-3k}{3k} =\frac{5}{2k}$.

We are left with the case that $a_1>\frac 13$, and $\frac 14-x < a_3 \leq a_2 \leq \frac 14$. The bonus of each small item is at least $\frac{2(1/4-x)(2k-11)}{3k}+\frac{7-k}{3k}$, and ${b}(B_{j}) \geq \frac 2k+2(\frac{2(2k-11)}{3k}(\frac 14-x)+\frac{7-k}{3k})=\frac {3}{k}-2\frac{2x(2k-11)}{3k} =  \frac{5}{2k}-\frac{2x(2k-11)}{3k}+\frac{1}{2k}-\frac{2x(2k-11)}{3k}$, where by using $x \leq \frac {1}{12}$, we get $\frac{1}{2k}-\frac{2x(2k-11)}{3k} \geq \frac{1}{2k}-\frac{2k-11}{18k}=\frac{20-2k}{18k} \geq 0$.
\end{proof}

Since there is at most one $3$-bin whose level is at most $\frac 34$, and all $3$-bins with level above $\frac 56$ have weights of at least $1$, we find that the total weight of $3$-bins is at least their number minus $2$.

\begin{lemma}
The total weight of the $4$-bins of levels in $(\frac 45,\frac 56]$ is at least their number minus $1$.
\end{lemma}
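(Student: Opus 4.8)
The plan is to reproduce the telescoping argument used for the $2$-bin and $3$-bin lemmas. Order the $4$-bins whose levels lie in $(\frac 45,\frac 56]$ by the time FF creates them, say $B_{(1)},\dots,B_{(m)}$; if $m\le 1$ the asserted inequality (``at least their number minus $1$'') is trivial, so assume $m\ge 2$. For each consecutive pair I will establish ${g}(B_{(t)})+{s}(B_{(t)})+{b}(B_{(t+1)})\ge 1$ for $t=1,\dots,m-1$. Summing these $m-1$ inequalities and adding the nonnegative quantity ${g}(B_{(m)})+{s}(B_{(m)})+{b}(B_{(1)})$ yields $\sum_{t=1}^m w(B_{(t)})\ge m-1$, which is the claim.

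So fix consecutive bins $B_i=B_{(t)}$ (created earlier) and $B_j=B_{(t+1)}$ (created later), and write the level of $B_i$ as $\frac 45+x$ with $0<x\le \frac 1{30}$. Since $B_i$ has four items, ${g}(B_i)+{s}(B_i)=\frac 4k+\frac{2(2k-11)}{3k}(\frac 45+x)=\frac{16k-28}{15k}+\frac{2x(2k-11)}{3k}$, so it suffices to prove ${b}(B_j)\ge \frac{28-k}{15k}-\frac{2x(2k-11)}{3k}$. Because a $4$-bin never reaches $k$ items (as $k\ge 6$), every item $a$ of $B_j$ was kept out of $B_i$ purely for reasons of size, and since the level of $B_i$ never exceeded its final value $\frac 45+x$, each item of $B_j$ has size strictly more than $\frac 15-x\ge \frac 16$. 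Hence each of the four items of $B_j$ is small, medium, or big (recall that $B_j$ contains no huge item).

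Now split into two cases. If all four items of $B_j$ are small, then ${b}(B_j)=\frac{2(2k-11)}{3k}\,\sigma+\frac{4(7-k)}{3k}$, where $\sigma>\frac 45$ is the level of $B_j$; since $2k-11>0$ this exceeds $\frac{2(2k-11)}{3k}\cdot\frac 45+\frac{4(7-k)}{3k}=\frac{52-4k}{15k}$, and the target inequality follows from $\frac{24-3k}{15k}+\frac{2x(2k-11)}{3k}\ge 0$, which holds for $k\in\{6,7,8\}$ because $24-3k\ge 0$ and $2k-11\ge 0$. If instead some item of $B_j$ is medium or big, that single item already contributes a bonus of at least $\frac 3{2k}$, and $\frac 3{2k}-(\frac{28-k}{15k}-\frac{2x(2k-11)}{3k})\ge \frac{2k-11}{30k}\ge 0$ for $k\ge 6$. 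In both cases ${g}(B_i)+{s}(B_i)+{b}(B_j)\ge 1$, which completes the telescoping. This lemma is essentially routine once the earlier machinery is in place; the only point that needs care is verifying that the two-case split on the contents of $B_j$ is exhaustive and that the two resulting inequalities hold simultaneously for $k=6,7,8$, whereas the FF-based lower bound $\frac 16$ on the size of each item of $B_j$ is immediate, since a $4$-bin has fewer than $k$ items and hence cardinality is never the binding constraint there.
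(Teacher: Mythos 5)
Your proof is correct and follows essentially the same route as the paper: pair consecutive qualifying $4$-bins, show ${g}(B_i)+{s}(B_i)+{b}(B_j)\geq 1$, and split on whether $B_j$ contains an item larger than $\frac 14$. The only (harmless) difference is in the all-small case, where the paper lower-bounds the total size of $B_j$ by $\frac{23}{30}+x$ so that the $x$-terms cancel exactly, whereas you use the level bound $\sigma>\frac 45$ directly and absorb the $x$-term by its favorable sign; both computations close for $6\leq k\leq 8$.
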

\begin{proof}
Suppose that $B_{i}$ and $B_{j}$ are two consecutive
such $4$-bins. We prove that
${g}(B_{i})+{s}(B_{i})+{b}(B_{j})\geq1$. Let the size of $B_{i}$
be $5/6-x$ with some $0\leq x<1/30$. Then there are four items in
$B_{j}$, of sizes $a_{1}\geq a_{2}\geq a_{3}\geq a_{4}$, all are bigger
than $1/6+x$. If any of them is also bigger than $1/4$, we have
\[
{g}(B_{i})+{s}(B_{i})+{b}(B_{j})\geq \frac 4k+\frac 45\cdot\frac{2\left(
2k-11\right) }{3k}+\frac{3}{2k}=\frac{32k-11}{30k}\geq1,
\]
since $k\geq6$ and we are done. Otherwise all four items are
small. Note that the three biggest items among them have total
size at least $3/4\cdot4/5=3/5$, so the total size of all four of them is at least $\frac 35+\frac 16+x=\frac{23}{30}+x$
and the total bonus is at least $\frac{2(2k-11)}{3k}(\frac{23}{30}+x)+\frac{4(7-k)}{3k}$.
Thus we have
\begin{align*}
&  {g}(B_{i})+{s}(B_{i})+{b}(B_{j}) \geq \frac 4k+\frac{2\left(  2k-11\right)
}{3k}\cdot\left(\frac 56-x\right)+\frac{2(2k-11)}{3k}\left(\frac{23}{30}+x\right)+\frac{4(7-k)}{3k}\\
& =\frac {40-4k}{3k}+\frac{2(2k-11)}{3k}\left(\frac 56-x+\frac{23}{30}+x\right)=\frac {40-4k}{3k}+\frac{2(2k-11)}{3k}\cdot \frac{8}{5}  = \frac{4k+8}{5k} \geq 1,
\end{align*}
since $k\leq8$. \end{proof}

Since there is at most one $4$-bin whose level is at most $\frac 45$, and all $4$-bins with level above $\frac 56$ have weights of at least $1$, we find that the total weight of $4$-bins is at least their number minus $2$.

We proved $FF(L)-8 \leq W \leq (8/3-8/(3k))OPT(L)$.
\begin{theorem}
The asymptotic approximation ratio of $FF$\ for any $6\leq k\leq8$
is at most $8/3-8/(3k)$.
\end{theorem}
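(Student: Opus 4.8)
The target statement is $FF(L) \le (8/3 - 8/(3k)) \cdot OPT(L) + O(1)$ for $k \in \{6,7,8\}$, with the weight function $w = g + s + b$ already defined. The strategy is the standard weighting argument: (i) show every bin of $OPT$ has weight at most $\frac{8(k-1)}{3k}$, and (ii) show the total weight of FF's bins is at least $FF(L) - O(1)$, i.e. all but a constant number of FF's bins have weight at least $1$. Combining, $FF(L) - O(1) \le W = w(I) = \sum_{B \in OPT} w(B) \le \frac{8(k-1)}{3k} OPT(L)$, and dividing by $OPT(L)$ and taking the limit gives the asymptotic bound. Step (i) is exactly Lemma~\ref{OptWeight68}, which I would prove by splitting on whether the $OPT$-bin contains a huge item, and within each case bounding the total ground weight (at most $1$ from $k$ items, or $\frac{k-1}{k}$ after reserving one unit for the huge item), the total scaled size (at most $\frac{2(2k-11)}{3k}$, halved if a huge item is present since the rest sums to under $\frac12$), and the total bonus, where the bonus bound uses a case analysis on how many big items are present (at most two, each contributing bonus $\frac2k$) and the fact that at most five items can have positive bonus and at most three of those can be medium.

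**The FF side.** For step (ii), by Claim~\ref{reduc} I may assume the $k$-bins come first and $1$-bins last; $k$-bins have weight exactly $1$ (they are $\alpha$-items at $\frac1k$ each), and all $1$-bins but one contain a huge item of weight $1$. So the work is on $j$-bins for $2 \le j \le k-1$ with no huge item. I would partition these into four classes — $2$-bins, $3$-bins, $4$-bins, and $5^+$-bins — and show each class contributes total weight at least (its size) minus a constant ($2$ for each of the first three, $1$ for $5^+$), yielding $W \ge FF(L) - 8$. Within each class the plan is: first show that a bin whose level exceeds a threshold ($\frac34$ for $2$-bins, $\frac56$ for the rest) already has weight $\ge 1$ on its own; this is a direct computation combining ground weight, scaled size at the threshold, and the minimum bonus forced by the level. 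Then, for bins below the threshold but above the next natural threshold ($\frac23$, $\frac34$, $\frac45$ respectively), pair up consecutive such bins $B_i, B_j$ (consecutive after deleting the other bin types) and prove the amortized inequality $g(B_i) + s(B_i) + b(B_j) \ge 1$: the key is that the items of $B_j$ were not packed into $B_i$, so each of them has size exceeding $1 - \ell(B_i)$ (minus a small slack), which is enough to force their bonuses to be large; the $x$-dependent terms cancel exactly, and any residual is shown nonnegative using $x \le \frac1{12}$ (or $\frac1{30}$) together with $6 \le k \le 8$. Since there is at most one bin in each class below the lower threshold, each class loses at most $2$ bins total (one from the low-level bin, one from the unpaired end of the chain) — for $5^+$-bins only $1$ since there is no pairing step, just the single low-level bin.

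**Where the difficulty lies.** The conceptual steps are routine; the real obstacle is that the amortized inequalities and the "above-threshold implies weight $\ge 1$" claims must hold \emph{uniformly} across $k = 6, 7, 8$ with a single weight formula, so every constant in the bonus function (the $\frac{10-k}{9k}$, $\frac{7-k}{3k}$, $\frac{10-k}{3k}$ offsets, the slope $\frac{2(2k-11)}{3k}$, the cap $\frac2k$ on big items) has been tuned precisely so that the worst case is tight. The delicate points are: the $4$-bin amortization, where the final expression is $\frac{4k+8}{5k}$, which is $\ge 1$ exactly when $k \le 8$ (so $k=9$ would break it — consistent with the separate treatment of $k \ge 9$ elsewhere); and the $5^+$-bin case, where a $5$-bin needs at least one positive bonus and the minimum positive bonus $\frac{10-k}{9k}$ must be large enough, again forcing $k \le 10$ implicitly via the $OPT$-side bound. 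I would organize the proof as the sequence of lemmas already displayed — $OPT$-weight bound, then $5^+$-bins, then the threshold lemmas for $2,3,4$-bins, then the three amortization lemmas — and finish by assembling $FF(L) - 8 \le W \le \frac{8(k-1)}{3k} OPT(L)$, which gives the asymptotic ratio $\frac{8(k-1)}{3k} = \frac83 - \frac8{3k}$.
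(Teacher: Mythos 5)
Your plan reproduces the paper's proof essentially step for step: the same weight bound of $\frac{8(k-1)}{3k}$ per $OPT$-bin via the ground-weight/scaled-size/bonus split, the same partition of FF's bins into $1$-, $k$-, $2$-, $3$-, $4$-, and $5^+$-bins with the same level thresholds ($\frac34$ for $2$-bins, $\frac56$ otherwise, with lower thresholds $\frac23$, $\frac34$, $\frac45$), the same amortized inequality $g(B_i)+s(B_i)+b(B_j)\ge 1$ over consecutive pairs, and the same final accounting $FF(L)-8\le W\le\frac{8(k-1)}{3k}OPT(L)$. This is correct and matches the paper's argument, including the observation that the $4$-bin amortization yields $\frac{4k+8}{5k}$ and hence requires $k\le 8$.
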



\subsection{The case $\boldsymbol{k=9}$}
We consider this case separately, as the proof for smaller and larger $k$ fails in this case. We combine methods from all other proofs here. The weighting function is similar to that is used in the previous section, in the sense that it has discontinuity
points at $1/6$ and $1/3$. It is also similar to the weighting used in the next section as the intervals for small and medium
sizes are divided to two parts. In this section we introduce a new method that was not used in the previous cases. We distinguish the bins of $OPT$
according to the number of additional items packed into them. Since $\alpha$-items always have weights of $\frac 1k$, bins that contain $k$ such items will still have weights of exactly $1$. Thus, in the analysis we can assume that there is at least one additional item in each bin of $OPT$.

\noindent \textbf{Case a.} Consider bins of $OPT$ containing one or two
additional items (and the remaining items are $\alpha$-items). Such bins are called
$\gamma$-bins, and the additional items packed into such bins (in $OPT$) are called $\gamma$-items. The largest $\gamma$-item of such a bin is called a $\gamma_{1}$-item (breaking ties arbitrarily). Any $\gamma_1$-item has weight $1$. If the bin contains another $\gamma$-item, this item is called a $\gamma_2$-item, and its weight is defined to be $\frac{16}{27}$.

\noindent \textbf{Case b.} Consider the other bins of $OPT$ (containing
\textit{at least} three additional items). Each such bin has at
most six $\alpha$-items, we call it a $\phi$-bin, and its
items that are not $\alpha$-items are called $\phi$-items. The
weighting function of the $\phi$-items is more complicated. The
weight of any huge $\phi$-item (i.e. a $\phi$-item with size
strictly above $1/2$) is exactly $1$.  The weight of a $\phi$-item
of size $a\leq1/2$ is $w(a)=s(a)+b(a)$, where
$s(a)=\frac{32}{27}a$ is called the scaled size, and $b(a)$ is the
bonus of the item. Note that there is no ground weight in this
case. Below we give the bonus function of the $\phi$-items of
sizes no larger than $1/2$. The functions $b(a)$ and $w(a)$ are piecewise
linear, and breakpoints where it is continuous are $1/5$ and $3/10$.

\[
b(a)=%
\begin{cases}
0 & \text{if \ \ \ \ \ \ \ \ \ \  \ \ \  \ }a\leq1/6\text{\ \ \ \ \ (tiny) }\\
\frac{32}{27}(a-\frac{1}{6})+\frac{1}{81} = \frac{32}{27}a-\frac{5}{27} & \text{if  \ \ \ \ \ }1/6<a\leq1/5\text{  \ \ \ \ (very small)}\\
-\frac{28}{27}(a-\frac{1}{5})+\frac{7}{135} = - \frac{28}{27}a+\frac{7}{27}& \text{if \ \ \ \ \ }1/5 < a\leq
1/4\text{\ \ \ \ \ (larger small)}\\
-\frac{28}{27}(a-\frac{1}{4})+\frac{1}{9} = -\frac{28}{27}a+\frac{10}{27} & \text{if \ \ \ \ \ }1/4<a\leq3/10\text{\ \ \ \ (smaller medium)}\\
\frac{32}{27}(a-\frac{3}{10})+\frac{8}{135}= \frac{32}{27}a-\frac 8{27}& \text{if \ \ \ \ \ }3/10 <
a\leq1/3\text{\ \ \ \ (larger medium)}\\
\frac{1}{9} & \text{if \ \ \ \ \ }1/3<a\leq1/2\text{\ \ \ \ \ (big)}%
\end{cases}
\]

The value of the bonus is zero if $a\leq1/6$ and it is constant ($\frac 19$) between
$1/3$ and $1/2$. The bonus function is not continuous at the points $1/6$, $1/4$,
and $1/3$. The bonus function is monotonically increasing in $(1/6,1/5)$ and in $(3/10,1/3)$. It is monotonically decreasing in $(1/5,1/4)$ and in $(1/4,3/10)$ (which is less typical for weight functions).
Nevertheless, the weight function remains monotonically increasing for the whole
interval $0 < a\leq1/2$, and the value of the bonus is nonnegative for the
whole interval.

Next, we state several additional properties of the bonus
function. For small items (very small and larger small items,
i.e., items of sizes in $(1/6,1/4]$), the maximum value of the
bonus for very small items is given for $a=1/5$, and the bonus at
this point is $7/135$. The bonus of very small items is at least
$\frac 1{81}$, and the smallest bonus of larger small items is
zero. For smaller medium items, the bonus decreases from $\frac
19$ to $\frac 8{135}$, and for larger medium items, the bonus
increases from $\frac 8{135}$ to $\frac 8{81}$. The weight of a big
$\phi$-item is at least $41/81$, the weight of a
$\phi$-item with size more than $1/4$ is at least $11/27$, and
for any $\phi$-item with size $0 < a\leq1$, and for any
$\gamma_{2}$-item, the next inequality holds: $w(a)\geq
\frac{32}{27} a > \frac 76 a$. This is true since bonuses of
$\phi$-items are non-negative, and since the size of any
$\gamma_2$-item  is at most $\frac 12$ (as the bin of $OPT$ that contains it has
another item of at least the size of the $\gamma_2$-item) while
its weight is $\frac{16}{27}$.

\subsubsection{Properties of the weighting and the asymptotic bound}

\begin{lemma}
For every bin $B$ of $OPT$, $w(B)\leq8/3-8/27=2+\frac {10}{27}=\frac{64}{27}$ holds.
\end{lemma}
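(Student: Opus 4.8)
The plan is to exploit the two-case structure already encoded in the weighting. A bin $B$ of $OPT$ is either a $\gamma$-bin (one or two additional items, all others $\alpha$-items) or a $\phi$-bin (at least three additional items, hence at most $k-3=6$ of its items are $\alpha$-items, since $k=9$); I would treat the two kinds separately, and since a bin consisting of $k$ $\alpha$-items has weight exactly $1$ I may assume $B$ has an additional item.

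The $\gamma$-bin case is immediate from counting. If $B$ has a single additional item, that is a $\gamma_1$-item of weight $1$ together with at most $k-1=8$ further $\alpha$-items, so $w(B)\le 1+8\cdot\frac 19=\frac{17}9<\frac{64}{27}$. If $B$ has two additional items, they are a $\gamma_1$-item of weight $1$ and a $\gamma_2$-item of weight $\frac{16}{27}$, leaving at most $k-2=7$ $\alpha$-items, so $w(B)\le 1+\frac{16}{27}+7\cdot\frac 19=\frac{64}{27}$; this already shows the bound is tight.

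For a $\phi$-bin, note first that $B$ has at most one huge $\phi$-item. If it has none, I would bound every $\phi$-item's weight by $w(a)=\frac{32}{27}a+b(a)\le\frac{32}{27}a+\frac19$ and every $\alpha$-item's weight by $\frac19$; since $B$ has at most $k=9$ items of total size at most $1$, this gives $w(B)\le\frac{32}{27}\cdot 1+9\cdot\frac19=\frac{59}{27}<\frac{64}{27}$. The remaining, delicate, case is a $\phi$-bin with a huge $\phi$-item of size $h>\frac12$. Then its other items have total size $<1-h<\frac12$, and since a positive bonus requires size above $\frac16$ while $3\cdot\frac16=\frac12$, at most two of these other items have a positive bonus; moreover, if one of them is big (size above $\frac13$) it is the only one with positive bonus. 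I would split accordingly. If at most one non-huge item has positive bonus, then the total bonus of the non-huge items is at most $\frac19$, their total scaled size is $\frac{32}{27}(1-h)<\frac{16}{27}$, and there are at most $6$ $\alpha$-items, so $w(B)<1+\frac{16}{27}+\frac19+\frac23=\frac{64}{27}$. If exactly two non-huge items have positive bonus, say of sizes $s_1,s_2$, then $s_1,s_2\in(\frac16,\frac13)$ with $s_1+s_2<\frac12$, and a short check of the piecewise-linear weight and bonus on $(\frac16,\frac13)$ under the constraint $s_1+s_2<\frac12$ yields $w(s_1)+w(s_2)\le\frac{19}{27}$ and $b(s_1)+b(s_2)\le\frac{22}{135}$ (the extremal configurations being $s_1\to\frac13,\ s_2\to\frac16$ and $s_1$ just above $\frac14$, $s_2$ just below $\frac14$). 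Now I would split once more on the number of $\alpha$-items. If $B$ has $6$ $\alpha$-items, then it already has $1+2+6=9$ items, so it has no tiny $\phi$-items and $w(B)\le 1+w(s_1)+w(s_2)+\frac69\le 1+\frac{19}{27}+\frac23=\frac{64}{27}$. If $B$ has at most $5$ $\alpha$-items, let $T$ be the total size of its tiny $\phi$-items; using $s_1+s_2+T<1-h<\frac12$ one gets $w(s_1)+w(s_2)+\frac{32}{27}T=\frac{32}{27}(s_1+s_2+T)+b(s_1)+b(s_2)<\frac{16}{27}+\frac{22}{135}$, whence $w(B)<1+\frac{16}{27}+\frac{22}{135}+\frac59<\frac{64}{27}$.

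I expect the last case — the huge $\phi$-item together with two positive-bonus non-huge items — to be the only real obstacle, matching the tightness already observed for $\gamma$-bins: naïvely maximizing the scaled size, the bonus, and the number of $\alpha$-items at once overshoots $\frac{64}{27}$, and the point is precisely that these quantities cannot all be made large together. The resolution is to play the cardinality bound against the size bound: six $\alpha$-items force out any tiny $\phi$-item, which lets me use the sharp estimate $w(s_1)+w(s_2)\le\frac{19}{27}$; with at most five $\alpha$-items there is enough room to absorb the tiny $\phi$-items through the coarser bound $b(s_1)+b(s_2)\le\frac{22}{135}$. Establishing $w(s_1)+w(s_2)\le\frac{19}{27}$ — that the scaled sizes and the bonuses of two items summing to less than $\frac12$ cannot both be large — is the one genuinely computational step, but it reduces to comparing a handful of linear pieces.
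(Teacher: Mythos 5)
Your proposal is correct and follows essentially the same route as the paper: the same $\gamma$-bin versus $\phi$-bin split, the same reduction of the hard case to a huge $\phi$-item accompanied by exactly two positive-bonus items, and the same trade-off between six $\alpha$-items (forcing exactly nine items and the sharp bound $w(s_1)+w(s_2)\le \frac{19}{27}$) and at most five $\alpha$-items (where a cruder bonus bound suffices). The step you defer as "a short check" is precisely the three-case piecewise-linear computation the paper carries out, and your claimed bounds $w(s_1)+w(s_2)\le\frac{19}{27}$ and $b(s_1)+b(s_2)\le\frac{22}{135}$ do verify; your treatment of the no-huge-item case via the uniform estimate $w(a)\le\frac{32}{27}a+\frac19$ per item is a slightly cleaner variant of the paper's counting argument.
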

\begin{proof}
Consider the case that  $B$ is a $\gamma$-bin. In this case $B$ has one item of weight $1$, possibly an item of weight $\frac{16}{27}$, and each remaining item is an $\alpha$-item and has weight $\frac19$. Thus, the total weight is at most $1+\frac{16}{27}+\frac79=\frac{64}{27}$.
It remains to consider the case that $B$ is a $\phi$-bin. It contains at most six $\alpha$-items, of total weight at most $6/9$. Thus it suffices to show that the total weight of the $\phi$-items is at most $\frac{64}{27}-\frac 69=\frac{46}{27}$.

First, assume that $B$ contains no huge item. The total scaled size of the
$\phi$-items is at most $32/27$. It suffices to show that the total bonus of $\phi$-items
the bin is at most $46/27-32/27=14/27$. Since the bonus is zero if the size of
the item is at most $1/6$, it follows that at most five items can have
positive bonuses. Moreover at most three items can have size above $1/4$, and the bonus of each such item is at most $\frac 19$, and the
bonus of any other item is at most $7/135$. Thus the total bonus of the bin is
at most $3\cdot \frac 1 9+2\cdot \frac 7{135}=59/135<14/27$.

Next, assume that $B$ contains a huge item. The weight of a huge item is exactly $1$. We will show that if there are six $\alpha$-items, then the total weight of the further additional items of $B$ is at most $46/27-1=19/27$, and consider also the case that the number of $\alpha$ items is smaller.
Since the total size of remaining additional items is below $\frac 12$, their scaled size is at most $\frac {16}{27}$, and  it
suffices to show that their total bonus is at most $19/27-16/27=1/9$. Since only items of size above $\frac 16$ have positive bonuses, there can be at most two further items in the bin having positive bonuses. If there is only one further item having positive
bonus, we are done, since no bonus is above $\frac 19$. If there are two items with bonuses, but there are at most five $\alpha$-items, then the total weight of $\alpha$-items is at most $\frac 59$, and we are done as well.
Thus, it is left to consider the case where there are two further
$\phi$-items in the bin both having positive bonuses, and there are no other $\phi$-items packed into $B$ except for the huge item and these two items. Let their sizes be denoted as
$a_{1}$ and $a_{2}$, where $1/6<a_{1}\leq a_{2}$, and thus $a_2<1/3$ as $a_{1}+a_{2}<1/2$.
The claim holds if $a_{2}\leq1/4$, since then the total bonus is at most
$2\cdot7/135=\frac{14}{135}<\frac{1}{9}$.
Thus the only remaining case is  where the item of size $a_{1}$ is small and the item of size $a_{2}$ is a medium item. We will show $w(a_1)+w(a_2) \leq \frac{19}{27}$. There are three cases, as $a_1> \frac 15$ and $a_2 > 0.3$ cannot hold simultaneously. In all cases $s(a_1)+s(a_2) = \frac{32}{27}(a_1+a_2)$ and $a_1+a_2<\frac 12$.

If $a_{1} \leq \frac 1{5}$ and $a_{2} \leq \frac 3{10}$, $w(a_1)+w(a_2)=\frac{32}{27}(a_1+a_2)+\frac{32}{27}a_1-\frac{28}{27}a_2+\frac 5{27}=\frac{64}{27}a_1+\frac{4}{27}a_2+\frac 5{27}\leq \frac{64}{27}a_1+\frac{4}{27}(\frac 12-a_1)+\frac 5{27}=\frac{60}{27}a_1+\frac{7}{27}\leq \frac{19}{27}$.
If $a_{1} > \frac 1{5}$ and $a_{2} \leq \frac 3{10}$, $w(a_1)+w(a_2)=\frac{32}{27}(a_1+a_2)-\frac{28}{27}(a_1+a_2)+\frac {17}{27} = \frac{4}{27}(a_1+a_2)+\frac {17}{27}\leq \frac{19}{27}$.
If $a_{1} \leq \frac 1{5}$ and $a_{2} > \frac 3{10}$, $w(a_1)+w(a_2)=\frac{32}{27}(a_1+a_2)+\frac{32}{27}(a_1+a_2)-\frac {13}{27}=\frac{64}{27}(a_1+a_2)-\frac {13}{27}\leq  \frac{19}{27}$.
\end{proof}

Now, we will analyze the total weight of the bins of FF. Once again, we split
the analysis according to the number of items in these bins. The $9$-bins that have weight $1$. Moreover, every item of
size above $\frac 12$ packed into any bin of FF that is not a
$9$-bin is either a huge $\phi$-item, or it is a $\gamma$-item, in
which case this must be the largest item of its bin of $OPT$, i.e.,
it is a $\gamma_1$-item, and its weight is $1$. Thus, we neglect
all bins containing items of size above $\frac 12$ from the
analysis. At most one $1$-bin is left, and we neglect that bin (if
it exists) as well. In what follows we analyze $2^+$-bins of FF
that contain items of sizes in $(0,\frac 12]$. These bins only
contain $\phi$-items, and $\gamma_2$-items.

\begin{lemma}
The weight of any bin with level above $\frac 67$ is at least $1$.
There is at most one $6^+$-bin whose weight is below $1$.
\end{lemma}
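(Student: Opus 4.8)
The plan is to prove the two claims separately, both by the standard FF argument that bins appearing later in the list are progressively "fuller" in weight. For the first claim, consider any bin $B$ with level $s(B)>\frac 67$. Since $B$ contains only items of sizes in $(0,\frac 12]$ (huge items have weight $1$ and were neglected), it has at least two items, and in fact if it is a $j$-bin with $j\le 5$ we want to exploit the level bound directly. I would split by the number of items: if $B$ has at least seven items, the scaled size alone is $s(B)\ge \frac{32}{27}\cdot\frac 67=\frac{32\cdot 6}{27\cdot 7}$, which is less than $1$, so this crude bound is not enough and one must add bonuses — but a $7^+$-bin with level above $\frac 67$ must contain several items above $\frac 16$, each contributing a positive bonus, and I would check that even the worst case (as many tiny items of size just under $\frac 16$ as possible, the rest pushing the level up) forces enough items of size $>\frac 16$ that the bonuses close the gap to $1$. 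For $j\in\{2,3,4,5,6\}$, the level being above $\frac 67$ means the average item size exceeds $\frac{6}{7j}$; I would go case by case, using that items of size in the various brackets $(\frac 16,\frac 15]$, $(\frac 15,\frac 14]$, $(\frac 14,\frac 3{10}]$, $(\frac 3{10},\frac 13]$, $(\frac 13,\frac 12]$ have weight at least $\frac{32}{27}a$ plus a controlled bonus, and checking each configuration that is consistent with level $>\frac 67$. The key subtlety is the two decreasing bonus intervals $(\frac 15,\frac 14]$ and $(\frac 14,\frac 3{10}]$: there the bonus can be as low as $0$ (at $a=\frac 14^-$) or $\frac 8{135}$ (at $a=\frac 3{10}^-$), so the argument must lean on the scaled size $\frac{32}{27}a$ being correspondingly large there, and I expect the tightest case to be a bin of just two or three items sitting right around $a=\frac 14$.

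For the second claim, that at most one $6^+$-bin has weight below $1$, I would invoke Claim~\ref{veryuseful}: every $6^+$-bin except at most one has level above $\frac 67$ (taking $j=6$, so level above $\frac 67=\frac{6}{7}$). Combined with the first claim just proved, every $6^+$-bin of level above $\frac 67$ has weight at least $1$, so the only $6^+$-bin that could have weight below $1$ is the single exceptional bin from Claim~\ref{veryuseful}. This second part is then essentially immediate once the first part is in hand.

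The main obstacle is the first claim, specifically handling the small-number-of-items cases (2-bins and 3-bins) where the item sizes can cluster in the decreasing-bonus region around $\frac 14$, and simultaneously handling the $7^+$ and $8^+$ cases where the scaled size alone undershoots $1$ and one must argue that a high level forces enough "large enough" items to recover the deficit via bonuses. I would organize the first claim as: (i) reduce to $\phi$-items and $\gamma_2$-items only; (ii) dispose of $j\ge 7$ by a counting argument on items of size $>\frac 16$; (iii) check $j=2,3,4,5,6$ individually against the weight-lower-bound inequalities already established in the preamble (namely $w(a)\ge\frac{32}{27}a$, big $\phi$-items have weight $\ge\frac{41}{81}$, $\phi$-items above $\frac 14$ have weight $\ge\frac{11}{27}$), picking in each case the size bracketing that minimizes total weight subject to the level exceeding $\frac 67$.
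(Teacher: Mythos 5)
There is a genuine gap, and it stems from a single arithmetic slip that derails the whole first claim. You assert that $\frac{32}{27}\cdot\frac 67=\frac{32\cdot 6}{27\cdot 7}$ ``is less than $1$'' and conclude that the crude scaled-size bound is not enough. But $\frac{32\cdot 6}{27\cdot 7}=\frac{192}{189}=\frac{64}{63}>1$. That one inequality \emph{is} the entire proof of the first claim: every $\phi$-item of size $a\leq \frac 12$ has weight $\frac{32}{27}a+b(a)\geq \frac{32}{27}a$ since the bonus is non-negative, and every $\gamma_2$-item has weight $\frac{16}{27}$ with size at most $\frac 12$, hence weight at least $\frac{32}{27}$ times its size as well (bins containing huge items or $\gamma_1$-items were already discarded as having weight at least $1$). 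So a bin of level above $\frac 67$ has weight at least $\frac{32}{27}\cdot \frac 67=\frac{64}{63}>1$, with no case analysis on the number of items or on the bonus brackets. Because of the miscomputation you instead sketch an elaborate case-by-case plan over $j$-bins and size brackets, which you never actually carry out; as written, the proposal does not establish the first claim, and the proposed detour through the decreasing-bonus intervals around $\frac 14$ is solving a problem that does not exist.

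Your treatment of the second claim is correct and matches the paper: by Claim~\ref{veryuseful} with $j=6$, all $6^+$-bins except at most one have level above $\frac 67$, and combining this with the first claim gives that at most one $6^+$-bin has weight below $1$. Once you fix the arithmetic in the first part, the whole lemma follows in two lines exactly as the paper does it.
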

\begin{proof}
For any $\phi$-item or for any $\gamma_2$-item, the weight of the
item is at least $\frac{32}{27}$ times the size of the item. Since
except for at most one bin, the level of $6^+$-bin is at most
$\frac 67$, the weights of these bins, except for at most one bin, are at least $1$.
\end{proof}

In the following we concentrate on the $2$-bins, $3$-bins,
$4$-bins and $5$-bins. We start
with analyzing  bins containing a $\gamma_2$-item. Note that a
bin that has two $\gamma_2$-items has weight above $1$, and thus
we consider bins that contain one $\gamma_2$-item and the
remaining items are $\phi$-items.

\begin{lemma}
The weight of any $2$-bin that has a $\gamma_{2}$-item is at least
$1$, except for at most one such bin.
The weight of any $3$-bin that has a $\gamma_{2}$-item is at least
$1$, except for at most two such bins.
The weight of any bin that is a $4$-bin or $5$-bin and has a
$\gamma_{2}$-item is at least $1$, except for at most one such
bin.
\end{lemma}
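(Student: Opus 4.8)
The plan is to treat the $j$-bins of FF for $j\in\{2,3,4,5\}$ that carry a $\gamma_2$-item, reusing the weighting machinery already set up. First I would discard the easy cases: a bin with two $\gamma_2$-items has weight at least $2\cdot\tfrac{16}{27}=\tfrac{32}{27}>1$, and a bin holding a $\gamma_2$-item together with a huge item or a $\gamma_1$-item has weight at least $1+\tfrac{16}{27}>1$; by the reduction made just before the lemma we may also assume all items have size at most $\tfrac12$. So it remains to consider bins made of one $\gamma_2$-item of size $c\le\tfrac12$ (weight $\tfrac{16}{27}=\tfrac{32}{27}\cdot\tfrac12$) and $j-1$ $\phi$-items of sizes at most $\tfrac12$.

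Next I would feed in Claim~\ref{veryuseful}: all $j$-bins except at most one have level above $\tfrac{j}{j+1}$, so the $\phi$-items of such a bin have total size exceeding $\tfrac{j}{j+1}-c$. Using $w(a)\ge\tfrac{32}{27}a$ for every $\phi$-item and every $\gamma_2$-item, the weight of the bin is at least $\tfrac{16}{27}+\tfrac{32}{27}\big(\tfrac{j}{j+1}-c\big)+B=\tfrac{32}{27}\cdot\tfrac{j}{j+1}+\tfrac{32}{27}\big(\tfrac12-c\big)+B$, where $B$ is the total bonus of the $\phi$-items and $\tfrac{32}{27}(\tfrac12-c)\ge 0$. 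For $j=4,5$ the term $\tfrac{32}{27}\cdot\tfrac{j}{j+1}$ is already $\tfrac{128}{135}$, resp. $\tfrac{80}{81}$, so the bin is good as soon as $c\le\tfrac{73}{160}$, resp. $c$ is not extremely close to $\tfrac12$; in the residual range the remaining $\phi$-items are forced to be tiny (size $\le\tfrac16$, zero bonus), and since in FF a tiny item always enters the lowest-indexed non-full bin and such a bin of level below $\tfrac56$ still has room, at most one bin of this shape can occur, giving the "at most one" bound jointly for $j\in\{4,5\}$. For $j=2,3$ the scaled part alone is short of $1$, so I would isolate the genuinely tight configurations — a $2$-bin whose $\phi$-item is "small" (size in $(\tfrac16,\tfrac14]$), which forces $c>\tfrac5{12}$, and a $3$-bin whose two $\phi$-items are tiny, which forces $c>\tfrac{13}{32}$ — and handle them with the consecutive-bins amortization used for $k=6,7,8$: for two consecutive bad bins $B_i$ (earlier) and $B_j$ (later), show that the scaled size of $B_i$'s items, plus the slack $\tfrac{32}{27}(\tfrac12-c_i)$ hidden in its $\gamma_2$-item, plus the bonus/weight surplus of $B_j$'s items (which are forced large because they did not fit into $B_i$), sums to at least $1$; telescoping leaves the total short of the number of such bins by only a constant, and adding the single exception from Claim~\ref{veryuseful} yields "at most one" for $j=2$ and "at most two" for $j=3$.

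I expect the $j=2$ case with a small $\phi$-item to be the main obstacle: there $c>\tfrac5{12}$ and the level can exceed $\tfrac23$, so Claim~\ref{veryuseful} does not by itself kill the bin, and the crude weight estimate only gives $\tfrac{24}{27}<1$, while the available slack $\tfrac{32}{27}(\tfrac12-c)$ and the bonus of a small item together stay below $\tfrac{17}{81}$. To finish, one must extract sharper FF structure: because such a bin contains only items of size at most $\tfrac12$, when the opening item of a later bad bin of the same type arrives the earlier bad bin must already be complete (any one of its items plus the arriving item has total size at most $1$, hence would fit), and then comparing the two arrival orders against the thresholds $1-c$ and $1-d$ should leave room for at most one such bin. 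Making this bookkeeping airtight — rather than merely consistent with two bad bins — is the delicate point; the $j=3$ tiny-$\phi$ case and the large-$\gamma_2$ residual cases for $j=4,5$ are analogous but, thanks to the extra items and the larger level thresholds, are more forgiving.
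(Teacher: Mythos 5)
Your reduction to bins holding one $\gamma_2$-item plus $\phi$-items of size at most $\tfrac12$ matches the paper, and so does the use of $w(a)\geq\tfrac{32}{27}a$; but after that point the proposal diverges from the paper's argument and leaves the decisive steps unproved. The paper does not use Claim~\ref{veryuseful} here at all. Instead it observes two things about any bad bin (weight below $1$): its level is below $\tfrac{27}{32}$ (otherwise the scaled size alone gives weight $\geq 1$), and all its $\phi$-items have size at most $\tfrac14$ (a $\phi$-item of size above $\tfrac14$ has weight at least $\tfrac{11}{27}$, which with $\tfrac{16}{27}$ already gives $1$). Consequently, for two bad bins $B_i$ (earlier) and $B_j$ (later), every item of $B_j$ has size above $1-\tfrac{27}{32}=\tfrac{5}{32}$. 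This kills the $4$- and $5$-bin case at once: $B_j$ would have at least three $\phi$-items of total weight at least $3\cdot\tfrac{32}{27}\cdot\tfrac{5}{32}=\tfrac59$, hence weight above $1$. It also kills the $2$-bin case, which you flag as your main obstacle: a bad $2$-bin $B_i$ has level at most $\tfrac12+\tfrac14=\tfrac34$, so every item of $B_j$ has size above $\tfrac14$, contradicting the fact that $B_j$'s $\phi$-item must have size at most $\tfrac14$. Your proposal never reaches either of these conclusions; for $j=4,5$ you instead assert that the residual $\phi$-items are ``forced to be tiny,'' which is false (a larger small item of size $\tfrac14$ has bonus exactly $0$, so small but non-tiny items are equally consistent with weight below $1$), and for $j=2$ you explicitly leave the bookkeeping open (``should leave room for at most one such bin'').

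The $3$-bin bound of ``at most two'' also needs a genuinely three-bin argument that the amortization you sketch does not supply: the paper assumes three bad $3$-bins $B_i,B_j,B_r$ and splits on whether $B_j$ has level above $\tfrac56$; if yes, its two $\phi$-items (each of size above $\tfrac{5}{32}$) force weight at least $1$ via either total size $\tfrac{5}{32}+\tfrac15$ or a bonus of $\tfrac{1}{81}$; if no, then both $\phi$-items of $B_r$ exceed $\tfrac16$ and each has weight at least $\tfrac{17}{81}$, again giving weight at least $1$. Note also that the lemma is a per-bin statement (all bins good except at most one or two), whereas the consecutive-pair telescoping you borrow from the $k=6,7,8$ analysis proves only that the total weight is at least the number of bins minus a constant; that would suffice for the final theorem but is not what is claimed here, and in any case you do not verify that the telescoping inequality actually holds for the tight configurations you isolate. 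As it stands the proposal is an outline whose hardest cases are either unfinished or rest on an incorrect claim.
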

\begin{proof}
Any bin that has a $\gamma_2$-item and a $\phi$-item of size above
$\frac 14$ has total weight of at least $1$, since any $\phi$-item
of size above $\frac 14$ has weight at least $\frac{11}{27}$, and
each $\gamma_{2}$-item has weight $\frac{16}{27}$. Moreover, since
the weight of any item is at least $\frac{32}{27}$ times its size,
if the level of the bin is at least $\frac {27}{32}$, then the
total weight is at least $1$. In each one of the cases we assume
that there exist two bins, called $B_{i}$ and $B_{j}$, were $B_i$
appears earlier than $B_j$ in the ordering of FF, each having a
$\gamma_{2}$-item, and each one of these two bins has weight below
$1$ (and thus a level below $\frac{27}{32}$), all its
$\phi$-items have sizes of at most $\frac 14$, and the sizes of items of $B_j$ are above $\frac
5{32}$ (since none of them could not be packed into $B_i$).

Bins $B_i,B_j$ cannot be $2$-bins; in such a case $B_i$ has one $\gamma_2$-item (of size at most $\frac 12$) and one $\phi$-item (of size at most $\frac 14$, so its level is at most $\frac 14$, and in this case
$B_j$ cannot have an item of size at most $\frac 14$. Thus, there cannot be a pair $i,j$ that are both $2$-bins.

For $4$-bins and $5$-bins, $B_j$ has at least three $\phi$-items of sizes
at least $\frac 5{32}$, whose total weight is at least $3 \cdot
\frac{32}{27}\cdot \frac{5}{32}=\frac{5}{9}$, so the weight of
$B_j$ is above $1$, a contradiction. Thus, there cannot be a pair $i,j$, each of which containing four or five items.

For $3$-bins, we assume that there is a third $3$-bin $B_r$ that appears after $B_j$ in the ordering, where $B_r$ has a $\gamma_2$-item and weight below $1$ as well.
We split the analysis to the cases where $B_j$ has
level above $\frac 56$, and the case that it does not. If it has
level above $\frac 56$, then the total size of its $\phi$-items is
above $\frac 13$ (as the $\gamma_2$-item has size of at most $\frac 12$), and at least one of them has size above $\frac 16$. If at least one of the two $\phi$-items has size above $\frac 15$, then
their total size is above $\frac5{32}+\frac 15=\frac{57}{160}$,
and the weight of all items is at least
$\frac{16}{27}+\frac{32}{27}\cdot \frac{57}{160} >1$. Otherwise,
there is an item that has a bonus of at least $\frac 1{81}$, and
the total weight is at least $\frac{16}{27}+\frac{32}{27}\cdot
\frac 13 +\frac 1{81}=1$. Therefore, we find that $B_j$
has a level of at most $\frac 56$. However, in this case
the two $\phi$-items of $B_r$ have sizes above $\frac 16$. The
weight of such an item is at least $\frac{32}{27}\cdot \frac 16
+\frac 1{81}=\frac{17}{81}$, and the total weight of $B_r$ is at
least $\frac{16}{27}+2\cdot \frac{17}{81} >1$, a contradiction.
\end{proof}

We are left with bins containing only $\phi$-items of sizes at most $\frac 12$, that are $2$-bins, $3$-bins, $4$-bins, and $5$-bins. Before analyzing their total weights we discuss some properties of $\phi$-items.

\begin{lemma}
\label{aa2}
Consider two $\phi$-items of sizes $a_{1}\leq a_{2}\leq1/2$.
If $1 \geq a_{1}+a_{2}>1-a_{1}$ holds, then the total weight of the two items is at
least $1$.
\end{lemma}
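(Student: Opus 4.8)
The plan is to first convert the hypothesis into usable size information. Writing the condition $a_1+a_2>1-a_1$ as $2a_1+a_2>1$, I would combine it with $a_2\le\frac12$ to get $2a_1>1-a_2\ge\frac12$, so $a_1>\frac14$; both items thus have size above $\frac14$. The crucial observation is that in fact $a_2>\frac13$ always: if $a_2\le\frac13$ then $a_1\le a_2\le\frac13$ would give $2a_1+a_2\le1$, contradicting the hypothesis. Hence $a_2$ is a \emph{big} $\phi$-item and $w(a_2)=\frac{32}{27}a_2+\frac19$.

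Next I would split into two cases according to the size of $a_1$. If $a_1>\frac13$ as well, then both items are big, so $w(a_1)+w(a_2)=\frac{32}{27}(a_1+a_2)+\frac29$, and since $a_1+a_2\ge2a_1>\frac23$ this exceeds $\frac{32}{27}\cdot\frac23+\frac29=\frac{82}{81}>1$. If instead $\frac14<a_1\le\frac13$ (so $a_1$ is smaller-medium or larger-medium), I would use $a_2>1-2a_1$ together with $a_2>\frac13$ to write $w(a_2)>\frac{32}{27}(1-2a_1)+\frac19=\frac{35}{27}-\frac{64}{27}a_1$; it then suffices to verify the single-variable inequality $w(a_1)\ge\frac{64}{27}a_1-\frac8{27}$ for $a_1\in(\frac14,\frac13]$, since adding the two bounds yields $w(a_1)+w(a_2)>1$. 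On the larger-medium piece $(\frac3{10},\frac13]$ this holds with equality (there $w(a)=\frac{64}{27}a-\frac8{27}$), and on the smaller-medium piece $(\frac14,\frac3{10}]$, where $w(a)=\frac4{27}a+\frac{10}{27}$, it reduces to $a_1\le\frac3{10}$, with equality at $a_1=\frac3{10}$; this completes the proof.

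I do not expect a genuine obstacle; the only delicate point is the jump of the bonus at $\frac13$. The natural worry is the configuration $a_1=a_2=\frac13$, where $2w(\tfrac13)=\frac{80}{81}<1$ — but there $2a_1+a_2=1$, so the \emph{strict} inequality in the hypothesis excludes it, and the upward jump of $w$ at $\frac13$ (from $\frac{40}{81}$ to $\frac{41}{81}$) is precisely what makes the estimate go through once $a_2$ is pushed strictly above $\frac13$; the case analysis above handles this automatically. I would take care throughout to keep the strict versus non-strict inequalities consistent (the strict $2a_1+a_2>1$ is what is exploited), and I note that the upper bound $a_1+a_2\le1$ appearing in the hypothesis is not actually needed for the weight estimate itself.
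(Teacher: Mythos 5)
Your proof is correct and follows essentially the same route as the paper's: deduce $a_1>\frac14$ and that $a_2$ must be big, handle the case where $a_1$ is also big via the bound $w(a)\geq\frac{41}{81}$ for big items, and then verify a linear inequality on each of the two medium subintervals of $a_1$. Your reorganization of the medium case into the single inequality $w(a_1)\geq\frac{64}{27}a_1-\frac{8}{27}$ is a slightly tidier packaging of the same piecewise-linear computation the paper performs with the substitutions $a_1=\frac14+x$ and $a_1=\frac{3}{10}+x$.
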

\begin{proof}
We have $a_1 > \frac{1-a_2}2 \geq \frac 14$. If $a_1 > \frac 13$, then both items are big, and since the weight of any big item is at least $\frac{41}{81}>1/2$, the claim holds in this case. Next, we consider the case $1/4<a_{1}\leq1/3$, where $a_{2}>1-2a_{1}\geq1/3$, thus the item of size $a_{2}$ is big.

If $a_{1}$ is smaller medium, then let $a_{1}=1/4+x$ for some $0<x\leq1/20$, and $a_1+a_2>1-a_1=\frac 34-x$. The
total weight of the items is
\[
\frac{32}{27}(a_{1}+a_{2})+b(a_{1})+b(a_{2})\geq\frac{32}{27}(3/4-x)-\frac
{28}{27}x+\frac{1}{9}+\frac{1}{9}=\frac{10}{9}-\frac{20}{9}x\geq1\text{.}%
\]

If $a_{1}$ is larger medium, then let $a_{1}=3/10+x$ for some $0 < x\leq1/30$, and $a_1+a_2>1-a_1=\frac 7{10}-x$.
We get
\[
\frac{32}{27}(a_{1}+a_{2})+b(a_{1})+b(a_{2})\geq\frac{32}{27}(7/10-x)+\frac
{32}{27}x+\frac{8}{135}+\frac{1}{9}=1\text{.}%
\]
\end{proof}

\begin{lemma}
\label{aa3}
Consider three $\phi$-items of sizes  $a_{1}\leq
a_{2}\leq a_{3}\leq1/2$. If $1 \geq a_{1}+a_{2}+a_{3}>1-a_{1}$ holds, then the total
weight of the three items is at least $1$.
\end{lemma}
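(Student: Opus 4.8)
The plan is to imitate the proof of Lemma~\ref{aa2}, but with extra case analysis at the low end of the size scale. Write each weight as $w(a_i)=\tfrac{32}{27}a_i+b(a_i)$ with $b(a_i)\ge 0$ (all three items have size at most $1/2$, so none is huge). Then $\sum_{i=1}^{3}w(a_i)=\tfrac{32}{27}\sum_{i=1}^{3}a_i+\sum_{i=1}^{3}b(a_i)$, and by the hypothesis $a_1+a_2+a_3>1-a_1$ it suffices to show $\sum_{i=1}^{3}b(a_i)\ge \tfrac{32}{27}a_1-\tfrac{5}{27}$. I would split according to the size of the smallest item $a_1$; the point is that, unlike in Lemma~\ref{aa2}, $a_1$ can be arbitrarily small here (if $a_2,a_3$ are close to $1/2$), so the tiny and very small ranges need their own treatment.

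If $a_1\le 1/6$, the hypothesis yields $a_2+a_3>1-2a_1\ge 2/3$, so $a_3>1/3$ is big and $b(a_3)=1/9$; since $\tfrac{32}{27}a_1-\tfrac{5}{27}\le \tfrac1{81}<\tfrac19$ for $a_1\le 1/6$, this range is done. If $1/6<a_1\le 1/5$, then $a_1$ is very small, $b(a_1)=\tfrac{32}{27}a_1-\tfrac{5}{27}$ exactly, and $b(a_2),b(a_3)\ge 0$ finishes it; equivalently $\tfrac{32}{27}(1-a_1)+b(a_1)=1$ in this range.

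For $a_1>1/5$ all three items have size above $1/5$, and I would bound the total weight directly. If in addition $a_1>1/4$, then all three have size above $1/4$, hence each weighs more than $11/27$ (a property recorded before the asymptotic bound), so the total exceeds $33/27>1$. If $1/5<a_1\le 1/4$ and $a_2>1/4$, then $w(a_2),w(a_3)>11/27$ while $w(a_1)\ge \tfrac{32}{27}a_1>\tfrac{32}{135}$, which already sum to more than $1$. The remaining case is $1/5<a_1\le a_2\le 1/4$: here $a_1$ and $a_2$ are larger small items with $w(a)=\tfrac{4}{27}a+\tfrac{7}{27}$, and since $a_1,a_2\le 1/4$ the hypothesis forces $a_3>1-2a_1-a_2\ge 1/4$. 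Using the inequality $w(a)\ge \tfrac{4}{27}a+\tfrac{10}{27}$, valid for every $a>1/4$ (checked piecewise on $(1/4,3/10]$, $(3/10,1/3]$ and $(1/3,1/2]$), we get $\sum_{i=1}^{3} w(a_i)\ge \tfrac{4}{27}(a_1+a_2+a_3)+\tfrac{24}{27}>\tfrac{4}{27}(1-a_1)+\tfrac{24}{27}\ge \tfrac{4}{27}\cdot\tfrac34+\tfrac{24}{27}=1$, using $a_1\le 1/4$.

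The only step needing real care is this last one: verifying $w(a)\ge \tfrac{4}{27}a+\tfrac{10}{27}$ across the medium and big ranges (where the bonus is non-monotone, so the scaled size must compensate), and confirming that the size constraint genuinely lifts $a_3$ above $1/4$. Everything else reduces to a one-line estimate once the range of $a_1$ is fixed.
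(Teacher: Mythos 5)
Your proof is correct; I checked each case against the $k=9$ weight function and all the inequalities hold (in particular, $w(a)\geq \frac{4}{27}a+\frac{10}{27}$ does hold on each of $(1/4,3/10]$, $(3/10,1/3]$ and $(1/3,1/2]$, with equality on the first interval, and the hypothesis does force $a_3>1-2a_1-a_2\geq 1/4$ when $a_1,a_2\leq 1/4$). The overall strategy is the same as the paper's — decompose the weight into scaled size plus bonus and do a case analysis driven by the piecewise structure — but your case organization differs in a useful way: the paper splits primarily on the type of the largest item (big, then two mediums, then smaller versus larger medium, with a separate computation in each sub-case), whereas you split on the range of $a_1$ and then absorb all three possibilities for $a_3>1/4$ into the single uniform bound $w(a)\geq \frac{4}{27}a+\frac{10}{27}$. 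This collapses the paper's last three sub-cases into one line, at the cost of having to verify that linear lower bound across the non-monotone bonus region; your reduction of the first two cases to $\sum b(a_i)\geq \frac{32}{27}a_1-\frac{5}{27}$ is likewise a clean reformulation of what the paper does implicitly. Net effect: same machinery, slightly tidier bookkeeping.
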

\begin{proof}
We have $\frac 32(a_1+a_2) \geq 2a_1+a_2>1-a_3 \geq \frac 12$, thus $a_1+a_2 > \frac 13$.

If $a_1 > \frac 14$, then the claim holds since the weight of an
item with size above $1/4$ is at least $11/27$ (so the total weight is at least $\frac{11}9$). In what follows we assume that
$a_{1}\leq1/4$, and thus $a_1+a_2+a_3 > 1-a_1 \geq \frac 34$. If the largest item is big, then its bonus is $\frac 19$, and the total weight of the three items is at least $\frac{32}{27}(a_1+a_2+a_3)+\frac 19 \geq \frac{32}{27}\cdot\frac 34 +\frac 19 =1$.
If the largest item is not big, i.e., $a_3 \leq \frac 13$, then using $\frac 23 \geq 2a_3 \geq a_2+a_3  > 1-2a_1 \geq \frac 12$ we get $a_1 > \frac 16$ and $a_3>\frac 14$, thus the smallest item is small, and the largest item is medium. If there are two medium items, then the bonus of each one of them is at least $\frac 8{135}$, and the total weight is at least $\frac{32}{27}\cdot \frac 34+\frac{16}{135}=\frac{136}{135} >1$. We are left with the case where the smallest two items are small and the largest item is medium. We find that $2a_1 > 1-a_2-a_3 \geq 1-\frac 14-\frac 13= \frac 5{12}$, thus the smallest item is larger small, and so is the item of size $a_2$.

If the largest item is smaller medium, we have a total weight of $\frac{32}{27}(a_1+a_2+a_3)-\frac{28}{27}(a_1+a_2+a_3)+\frac{24}{27} \geq \frac 4{27} \cdot \frac 34+\frac{24}{27}=1$. If the largest item is larger medium, we have $a_1+a_2  > \frac 23(1-a_3)$.
The total weight is at least $\frac{32}{27}(a_1+a_2+a_3)-\frac{28}{27}(a_1+a_2)+\frac{14}{27}+\frac{32}{27}a_3-\frac 8{27}=\frac{4}{27}(a_1+a_2)+\frac{64}{27}a_3+\frac 2{9} \geq
\frac{4}{27}\cdot \frac 23(1-a_3)+\frac{64}{27}a_3+\frac 2{9}=\frac{184}{81}a_3+\frac{26}{81} \geq \frac{184}{81} \cdot \frac 3{10}+\frac {26}{81}=\frac{406}{405}>1$.
\end{proof}

\begin{lemma}
\label{aa4} Consider four $\phi$-items of sizes $a_{1}\leq a_{2}\leq a_{3}\leq a_{4}\leq1/2$. If $1 \geq a_{1}+a_{2}+a_{3}
+a_{4}>1-a_{1}$ holds, then the total weight of the four items is at least $1$.
\end{lemma}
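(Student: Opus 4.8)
The plan is to mimic the structure of Lemmas \ref{aa2} and \ref{aa3}: exploit the hypothesis $a_1+a_2+a_3+a_4 > 1-a_1$, i.e. $5a_1 \geq a_1+a_2+a_3+a_4 > 1-a_1$ so that $a_1 > 1/6$, which already tells us every one of the four items has a positive bonus (each is at least very small). First I would record the easy case: if $a_1 > 1/4$, then every item has weight at least $11/27$, giving total weight at least $44/27 > 1$. So assume $a_1 \leq 1/4$; then $a_1$ is small (very small or larger small), and $a_1+a_2+a_3+a_4 > 1-a_1 \geq 3/4$.

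Next I would split on the size of the largest item $a_4$. If $a_4$ is big (size in $(1/3,1/2]$), its bonus is $\tfrac19$, and since $w(a)\geq \tfrac{32}{27}a$ for all $\phi$-items, the total weight is at least $\tfrac{32}{27}\cdot\tfrac34 + \tfrac19 = 1$, done. So assume $a_4 \leq 1/3$, meaning all four items have size in $(1/6,1/3]$. Now I would further subdivide according to how many of the items are medium (size in $(1/4,1/3]$) versus small (size in $(1/6,1/4]$); since the four sizes sum to more than $3/4$ and each is at most $1/3$, at least one must exceed $3/16$, and in fact a counting/averaging argument pins down which configurations are possible. The worst configurations will be those with the most small (low-bonus) items, so the key subcase is likely "three small items and one medium item" or "all four small." For the all-small case, the sum exceeding $3/4$ with each item at most $1/4$ forces the items to be fairly large within $(1/6,1/4]$; I would use $a_1 + a_2 + a_3 + a_4 > 3/4$ to lower bound the total bonus $\sum b(a_i)$. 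Because the bonus of a larger small item is piecewise linear and decreasing on $(1/5,1/4]$ but increasing on $(1/6,1/5)$, I expect to need the constraint that at most so many items can be below $1/5$ (otherwise the sum would be too small) — this is exactly the kind of linear-programming-over-sizes reasoning used in the $k=9$ computations above, and the bonus coefficients $\tfrac{32}{27}$ and $-\tfrac{28}{27}$ were presumably chosen to make these inequalities tight.

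The main obstacle I anticipate is the bookkeeping in the mixed subcases (e.g. two small and two medium, or one small and three medium), where the bonuses come from several different linear pieces and one must carefully use both $\sum a_i > 3/4$ and the sharper $\sum a_i > 1 - a_1$ together with the bracket constraints $a_i \in (1/6,1/5]$, $(1/5,1/4]$, $(1/4,3/10]$, or $(3/10,1/3]$. As in Lemma \ref{aa3}, I would handle "largest item is smaller medium" and "largest item is larger medium" separately, since the bonus slope flips sign at $3/10$; in the larger-medium subcase I expect to reuse an inequality of the form $a_1+a_2+a_3 > \tfrac34(1-a_4)$ (or its analogue with the appropriate fraction) to convert the constraint into a bound linear in $a_4$ alone, then minimize over the admissible range of $a_4$. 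In every branch the target is the same clean identity, total weight $\geq \tfrac{32}{27}\cdot\tfrac34 + (\text{sum of bonuses}) \geq 1$, so the proof reduces to checking that the sum of bonuses is at least $\tfrac{3}{27}=\tfrac19$ in each configuration, which should follow from the stated per-type lower bounds on the bonus (e.g. larger small items have bonus down to $0$, so those are the dangerous ones and must be offset by the extra room in $\sum a_i$).
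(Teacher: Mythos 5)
There is a genuine gap at the very first step. You claim $5a_1 \geq a_1+a_2+a_3+a_4$ and deduce $a_1>\frac16$, but this inequality runs the wrong way: since $a_1$ is the \emph{smallest} of the four sizes, $a_1+a_2+a_3+a_4\geq 4a_1$ is a \emph{lower} bound, and the sum cannot be bounded above by a multiple of $a_1$ (e.g.\ $a_1=0.1$, $a_2=a_3=a_4=0.29$ satisfies the hypothesis with $a_1<\frac16$). You appear to have transplanted the argument from the five-item case, where $a_1+\cdots+a_5\geq 5a_1$ is used to show the \emph{sum} exceeds $\frac56$ whether or not $a_1>\frac16$ -- not to show $a_1>\frac16$. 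Consequently the case $a_1\leq\frac16$ is entirely missing from your plan, and it is not a trivial case: there the smallest item is tiny with zero bonus, and it can happen that every remaining item is larger small, where the bonus can also be arbitrarily close to $0$ (it vanishes at $a=\frac14$), so the blanket estimate $\frac{32}{27}\cdot(\text{sum})$ alone does not reach $1$. The paper devotes a separate multi-step argument to this case: it first notes the sum exceeds $\frac56$, disposes of the situation where some item is very small or medium via a bonus of at least $\frac1{81}$ (since $\frac{32}{27}\cdot\frac56+\frac1{81}=1$), and then handles the residual configuration of tiny plus larger small items by counting how many must be larger small and extracting a lower bound on $a_1$ itself.

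For the case $a_1>\frac16$ your outline is essentially workable and matches the paper in spirit, but you are anticipating more casework than is needed: once $a_4>\frac14$, its weight alone is at least $\frac{11}{27}$, and the three smaller items (each above $\frac16$) contribute at least $\frac{32}{27}\cdot\frac12=\frac{16}{27}$, which already gives $1$; so the only surviving configuration is all four items small, where the paper splits on whether all four are larger small (each then has weight at least $\frac{13}{45}$) or the smallest is very small (where the bonus $\frac{32}{27}a_1-\frac5{27}$ combines with the scaled size of the sum, bounded below by $\frac{32}{27}(1-a_1)$, to give exactly $1$). Your proposed medium/small mixed subcases never arise. But the missing $a_1\leq\frac16$ case must be supplied before the proof is complete.
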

\begin{proof}
First, consider the case $a_1 \leq \frac 16$. In this case $a_{1}+a_{2}+a_{3}+a_{4} > 1-a_1\geq \frac 56$, and if at least one item is very small or medium, then there is at least one item with a bonus of at least $\frac 1{81}$, and the total weight is at least $\frac{32}{27}\cdot \frac 56+\frac 1{81}=1$. Otherwise, all items are larger small and tiny. At least three items must be larger small as $2a_1+a_2+a_3+a_4>1$, which is impossible in the case where $a_2 \leq \frac 16$ and $a_4 \leq \frac 14$. The total weight if all four items are larger small is at least $\frac{4}{27}\cdot (a_1+a_2+a_3+a_4)+\frac{28}{27}>1$. Otherwise, the total weight is at least  $\frac{32}{27}a_1+ \frac{4}{27}\cdot (a_2+a_3+a_4)+\frac{21}{27}>\frac{32}{27}a_1+ \frac{4}{27}\cdot (1-2a_1)+\frac{21}{27}=\frac{8}{9}a_1+\frac{25}{27}$. Since the three largest items are larger small, $a_2+a_3+a_4 \leq \frac 34$, and $2a_1 > 1-(a_2+a_3+a_4) \geq \frac 14$, so $a_1 > \frac 18$. We find that the total weight is at least $\frac{28}{27}>1$.

Next, consider the case $a_{1}>1/6$. It follows that the total weight of the three smallest
items is at least $\frac{32}{27}\cdot \frac 36=\frac{16}{27}$. If the biggest item
is bigger than $1/4$, the total weight is at least $\frac{16}{27}+\frac{11}{27}=1$.
Then it is left to consider only the case where $1/6<a_{1}\leq a_{2}\leq a_{3}\leq
a_{4}\leq1/4$, i.e., all items are small.

Note that the weight of a larger small item is at least $\frac{32}{27}
\cdot \frac 15+\frac{7}{135}=\frac{13}{45}$. If all four items are larger small, then their total weight is above $1$. Otherwise, the smallest item is very small. The total size of the items is above $1-a_1$, and the bonus of the smallest item is $\frac{32}{27}a_1-\frac 5{27}$. The total weight of the four items is at least $\frac{32}{27}(1-a_1)+\frac{32}{27}a_1-\frac 5{27}=1$.
\end{proof}

\begin{lemma}
\label{aa5}
Consider five $\phi$-items of sizes $a_{1}\leq a_{2}\leq a_{3}\leq a_{4}\leq a_5 \leq 1/2$. If $1 \geq a_{1}+a_{2}+a_{3}
+a_{4}+a_5 >1-a_{1}$ holds, then the total weight of the five items is at least $1$.
\end{lemma}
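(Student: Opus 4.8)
The plan is to follow the same case split used for Lemma~\ref{aa4}, according to whether the smallest item is tiny. Throughout I would use the decomposition $w(a)=s(a)+b(a)$ with $s(a)=\frac{32}{27}a$ and $b(a)\ge 0$, the closed form $w(a)=\frac{4}{27}a+\frac{7}{27}$ for larger small items and $b(a)=\frac{32}{27}a-\frac{5}{27}$ for very small items, together with the facts already recorded in the paper that a $\phi$-item of size above $\frac14$ has weight at least $\frac{11}{27}$ and that every item which is neither tiny nor larger small has bonus at least $\frac{1}{81}$.

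First I would dispose of the case $a_1>\frac16$, exactly as in Lemma~\ref{aa4}: if $a_5>\frac14$, the four smallest items have total weight at least $\frac{32}{27}\cdot\frac46=\frac{64}{81}$ while $a_5$ contributes at least $\frac{11}{27}$, for a total above $1$; otherwise all five items lie in $(\frac16,\frac14]$, and either all of them are larger small (total weight at least $5\cdot\frac{7}{27}>1$) or $a_1$ is very small, in which case $W\ge\frac{32}{27}\sum_i a_i+b(a_1)>\frac{32}{27}(1-a_1)+\frac{32}{27}a_1-\frac{5}{27}=1$ by the hypothesis $\sum_i a_i>1-a_1$ and $b(a_i)\ge0$ for $i\ge2$.

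The bulk of the work is the case $a_1\le\frac16$, where $\sum_i a_i>1-a_1\ge\frac56$. If some item is neither tiny nor larger small, it has bonus at least $\frac{1}{81}$, so $W>\frac{32}{27}\cdot\frac56+\frac{1}{81}=1$. Otherwise every item is tiny or larger small; I would let $m$ be the number of larger small items, so $1\le m\le 4$ (the value $m=0$ would force $\sum_i a_i\le\frac56$, and $a_1$ is tiny so at least one item is tiny), and let $P$ and $U$ be the total sizes of the tiny items and of the larger small items, so that $W=\frac{32}{27}P+\frac{4}{27}U+\frac{7m}{27}$ and $U\le m/4$. The pivotal inequality is $P>\frac{5-m}{6-m}(1-U)$, obtained by combining $a_1\le P/(5-m)$ (as $a_1$ is the smallest of $5-m$ tiny items) with $a_1>1-P-U$ (from $\sum_i a_i>1-a_1$). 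Substituting this lower bound on $P$ and then using $U\le m/4$ yields $W>1$ for each $m\in\{1,2,3,4\}$; for $m=4$ it is immediate since $W\ge\frac{28}{27}>1$ already from the base weights $\frac{7}{27}$ of the four larger small items.

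The genuinely delicate point, and the one I expect to be the main obstacle, is precisely this last subcase: the crude estimate $W\ge\frac{32}{27}\sum_i a_i>\frac{80}{81}$ falls just short of $1$, so the argument must exploit both the constant $\frac{7}{27}$ in the weight of a larger small item and the fact that, once $U$ is fixed, the size hypothesis prevents the tiny items from being too small — which is exactly what $P>\frac{5-m}{6-m}(1-U)$ encodes. Checking the four resulting linear inequalities in $U$ (the tightest being $m=1$, where the bound comes out to $\frac{136}{135}$) should then be routine.
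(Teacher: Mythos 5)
Your proof is correct, and it is in fact more complete than the paper's own argument for this lemma. The published proof is three lines long: after establishing $\sum_i a_i>\frac{5}{6}$, it asserts that if some item is not larger small then that item has bonus at least $\frac{1}{81}$ (so the total weight is at least $\frac{32}{27}\cdot\frac{5}{6}+\frac{1}{81}=1$), and that if all five items are larger small then their total size exceeds $1$. The asserted bonus bound is false for tiny items, whose bonus is $0$, so this dichotomy silently omits exactly the configuration you singled out as delicate: $a_1\leq\frac{1}{6}$ with every item either tiny or larger small. That configuration is realizable within the hypotheses --- for instance four tiny items of size $0.16$ together with one larger small item of size just above $\frac{1}{5}$ give $\sum_i a_i\approx 0.84<\frac{27}{32}$, so the crude estimate $\frac{32}{27}\sum_i a_i$ genuinely falls below $1$ there and the constant $\frac{7}{27}$ in the weight of a larger small item must be used. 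Your $P,U,m$ analysis, driven by $P>\frac{5-m}{6-m}(1-U)$ (from $a_1\leq P/(5-m)$ and $a_1>1-P-U$) together with $U\leq m/4$, handles it; I verified the resulting linear bounds, which come out to $\frac{136}{135}$, $\frac{28}{27}$, $\frac{88}{81}$, and $\frac{28}{27}$ for $m=1,2,3,4$, all at least $1$. The remainder of your argument (the case $a_1>\frac{1}{6}$, and the case where some item is neither tiny nor larger small) coincides with the paper's reasoning and with the pattern of Lemma~\ref{aa4}. So the framework is the same weight decomposition, but your version supplies a case analysis that the published proof actually needs.
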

\begin{proof}
If $a_1 \leq \frac 16$, then  $a_{1}+a_{2}+a_{3}
+a_{4}+a_5 > 1-a_1 \geq \frac 56$ holds. Otherwise, $a_{1}+a_{2}+a_{3}
+a_{4}+a_5 \geq 5a_1 >\frac 56$ holds too. If at least one item is not larger small, then its bonus is at least $\frac 1{81}$ and the total weight is at least $\frac{32}{27} \cdot \frac 56+\frac 1{81}=1$. If all items are larger small, then their total size is above $1$, contradicting the assumption.
\end{proof}

\begin{lemma}
\label{9last}The total weight of the $2$-bins, $3$-bins, $4$-bins, and $5$-bins, containing
$\phi$-bins is at least their number minus $1$.
\end{lemma}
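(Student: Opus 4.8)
The plan is to combine Lemmas~\ref{aa2}--\ref{aa5} with a telescoping argument over consecutive bins in the First Fit order. Throughout write $\lambda(B)$ for the level of a bin $B$; for the bins in question $w(B)=\tfrac{32}{27}\lambda(B)+b(B)$, where $b(B)$ is the total bonus of $B$ and there is no ground weight since every item is a $\phi$-item of size at most $\tfrac12$. List the relevant bins as $B_1,\dots,B_t$ in the order FF opens them. Since each $B_p$ has fewer than $k$ items, an item of $B_q$ with $q>p$ that was not placed in $B_p$ failed only because of the size bound, so every item of $B_q$ has size strictly greater than $1-\lambda(B_p)$. Call $B_p$ \emph{tight} if $\lambda(B_p)$ exceeds $1$ minus the size of its smallest item, and \emph{loose} otherwise; by Lemmas~\ref{aa2}--\ref{aa5} every tight bin has weight at least $1$, so it suffices to prove that the loose bins have total weight at least their number minus~$1$.

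Let $B_{q_1},\dots,B_{q_L}$ be the loose bins in FF order and put $x_i=1-\lambda(B_{q_i})$. First I would fix the structure of this subsequence: looseness of $B_{q_i}$ forces its smallest item to have size at most $x_i$, while every item of $B_{q_{i+1}}$ has size greater than $1-\lambda(B_{q_i})=x_i$, so the smallest item of $B_{q_{i+1}}$ is both $>x_i$ and $\le x_{i+1}$, whence the gaps strictly increase, $x_1<x_2<\dots<x_L$; moreover, since $\phi$-items have size at most $\tfrac12$ and $B_{q_{i+1}}$ has two items each exceeding $x_i$, every loose bin other than $B_{q_L}$ has level above $\tfrac12$ and gap below $\tfrac13$. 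The core estimate is then, for $2\le i\le L-1$,
\[
b(B_{q_i})\ \ge\ \frac{32x_{i-1}-5}{27},
\]
i.e.\ the bonus of a loose bin pays for the ``gap deficit'' of its predecessor. This is exactly what the weight function is calibrated for: a very small item of size $a$ has bonus $\tfrac{32}{27}a-\tfrac5{27}$, so if the smallest item of $B_{q_i}$ is very small the inequality follows from that single item using $a>x_{i-1}$; if the smallest item is tiny, one uses that $B_{q_i}$ (not being last) has level above $\tfrac12$ and hence an item of size above $\tfrac14$, whose bonus $\tfrac19$ (or $\ge\tfrac8{135}$ for a medium item) together with the $\tfrac{32}{27}$-scaling outweighs $\tfrac{32x_{i-1}-5}{27}$; the remaining linear pieces on $(\tfrac15,\tfrac14]$ and $(\tfrac14,\tfrac13]$ are checked directly. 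Only a bounded number of loose bins can fail this — essentially one ``all tiny'' loose bin (at most one can exist, since the later of two would need items exceeding $1-\tfrac56=\tfrac16$), and at most one loose bin whose items all cluster just below $\tfrac14$ — and for each the shortfall is at most an absolute constant ($<\tfrac19$).

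Finally I would sum up. Writing $w(B_{q_i})=\tfrac{32}{27}(1-x_i)+b(B_{q_i})$, using the boxed inequality for $2\le i\le L-1$ and $b(B_{q_1}),b(B_{q_L})\ge0$, the terms $\sum_{i=2}^{L-1}\tfrac{32x_{i-1}-5}{27}$ telescope against $\tfrac{32}{27}\sum_{i=1}^{L-1}x_i$, leaving $\sum_{i=1}^{L}w(B_{q_i})\ge L-\tfrac{32(x_{L-1}+x_L)-10}{27}$, up to the constant loss from the exceptional bins; and since $B_{q_L}$ has at least two items each of size more than $x_{L-1}$ we get $\lambda(B_{q_L})>2x_{L-1}$, hence $x_L<1-2x_{L-1}$ and $x_{L-1}+x_L<1-x_{L-1}\le1$, which makes the bound at least $L-\tfrac{22}{27}>L-1$ even after absorbing the exceptional losses. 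The cases $L\le1$ are immediate. The main obstacle is the boxed bonus inequality: the bonus function is deliberately non-monotone, so one has to run through the size ranges $(\tfrac16,\tfrac15]$, $(\tfrac15,\tfrac14]$, $(\tfrac14,\tfrac3{10}]$, $(\tfrac3{10},\tfrac13]$ separately and verify that the small losses it creates (near size $\tfrac14$, and from an all-tiny loose bin) stay within the slack $\tfrac{22}{27}<1$ that the telescoping leaves.
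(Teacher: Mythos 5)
Your opening move is the same as the paper's: call a bin \emph{tight} if its level exceeds $1$ minus its smallest item, observe that Lemmas~\ref{aa2}--\ref{aa5} give weight at least $1$ for tight bins, and then handle consecutive loose bins. But the way you close the loose case has a genuine gap. Your boxed inequality $b(B_{q_i})\ge\frac{32x_{i-1}-5}{27}$ is simply not true for non-last loose bins, and the exceptions are not confined to the two types you list. Concretely, take $B_{q_{i-1}}=\{0.17,0.3,0.3\}$ (level $0.77$, loose, gap $x_{i-1}=0.23$) followed by $B_{q_i}=\{0.24,0.24,0.25\}$ (level $0.73>\tfrac23$, loose, gap $0.27<\tfrac13$, so it can legitimately be followed by a further loose bin): its total bonus is $\frac{2(7-28\cdot 0.24)}{27}+0=\frac{0.56}{27}$, while your inequality demands $\frac{32\cdot 0.23-5}{27}=\frac{2.36}{27}$. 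This bin is neither all-tiny nor is it excluded by any argument you give; and several distinct loose bins can each carry a positive shortfall. The supporting steps are also shaky: a level above $\tfrac12$ (or even $\tfrac23$) does \emph{not} force an item of size above $\tfrac14$ (five items of size $0.15$ refute it), and the final accounting does not close as written --- your telescoping leaves slack $1-\tfrac{22}{27}=\tfrac{5}{27}$, while you allow two exceptional bins each losing ``$<\tfrac19=\tfrac{3}{27}$'', which already exceeds $\tfrac{5}{27}$. So the proof as proposed does not establish the lemma; making the shortfall bookkeeping rigorous would require enumerating and bounding all failure configurations, which you have not done.

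The paper avoids this entirely with an exchange argument you should compare against. For consecutive loose bins $B_i$ (earlier) and $B_j$ (later), with smallest items $i_1$ and $j_1$ and $S=s(B_i)$, it shows that the multiset $\bigl(B_i\setminus\{i_1\}\bigr)\cup\{j_1\}$ itself satisfies the hypothesis of the relevant lemma among \ref{aa2}--\ref{aa5}: either its total size exceeds $1$ (weight $>\tfrac{32}{27}$), or else $s_{j_1}>1-S$ by First Fit, $S\le 1-s_{i_1}$ by looseness gives $s_{j_1}>s_{i_1}$, and $(S-s_{i_1})+2s_{j_1}>S+s_{j_1}>1$ verifies the condition ``total size $>1-$ smallest item'' for the modified set. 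Hence $w(B_i)-w(i_1)+w(j_1)\ge 1$ for every consecutive pair, and summing over pairs telescopes the item weights to give exactly ``number of loose bins minus $1$'' with no exceptional cases and no numerical slack to chase. If you want to salvage your version, the cleanest repair is to replace the pointwise bonus bound by this transfer of the smallest item, which is what the weight function was actually calibrated for.
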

\begin{proof}
Consider bins of FF whose numbers of items is in $[2,5]$, that contain only $\phi$-items, and their weights are below $1$. Obviously these bins have no huge items. If the level of a given bin is bigger than $1$ minus the size of the smallest item
in the bin, then the weight of the bin is at least $1$ by the previous lemmas. Thus, we only consider bins that do not satisfy this property. If there is at most one bin to consider, then we are done. Otherwise, in the list of remaining bins, consider two consecutive bins $B_i$ and $B_j$ (such that $B_j$ appears after $B_i$ in the ordering). Let $i_1$ denote the smallest item of $B_i$ and $j_1$ the smallest item of $B_j$ (breaking ties in favor of items of smaller indices). Let $S=s(B_i)$.
We will show that the total weight of the items of $B_i$ excluding $i_1$ together with the weight of $j_1$ is at least $1$. Applying this property to every such consecutive pair of bins will show that the total weight is at least the number of bins in the list of remaining bins minus $1$. If $S-s_{i_1}+s_{j_1}>1$, then their total weight is above $\frac{32}{27}>1$. Otherwise, we have the following properties. First, $s_{j_1}> 1-S$ since $j_1$ was not packed into $B_i$. Additionally, by assumption, $S \leq  1-s_{i_1}$. Therefore $s_{j_1}>s_{i_1}$. Finally, $(S-s_{i_1})+2s_{j_1}>S+s_{j_1}>1$. Thus, the set of items of $B_i$ together with $j_1$ and excluding $i_1$ satisfies the condition of one of Lemmas \ref{aa2},\ref{aa3},\ref{aa4},\ref{aa5} (the one where the considered number of items is equal to that of this set (which is equal to the number of items of $B_i$ and therefore it is in $\{2,3,4,5\}$), and the total weight of this set is at least $1$.
\end{proof}

We proved $FF(L)-7 \leq W \leq (64/27))OPT(L)$.
\begin{theorem}
The asymptotic approximation ratio of $FF$\ for $k=9$
is at most $64/27$.
\end{theorem}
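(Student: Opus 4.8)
The plan is to combine the two halves of the weight accounting that the preceding lemmas have already established. Recall that the weight of an item is determined once we know, from $OPT$'s packing, whether it is an $\alpha$-item (weight $\frac19$), a $\gamma_1$-item (weight $1$), a $\gamma_2$-item (weight $\frac{16}{27}$), or a $\phi$-item (weight $1$ if huge, and $w(a)=s(a)+b(a)$ otherwise). Since the total weight $W=w(I)$ of the input does not depend on how the items are packed, we have $W=w(\mathcal{B}_{FF})=w(\mathcal{B}_{OPT})$, where $\mathcal{B}_{FF}$ and $\mathcal{B}_{OPT}$ denote the bin sets of $FF$ and $OPT$. The theorem then follows by sandwiching $W$ between $FF(L)$ (up to an additive constant) and a multiple of $OPT(L)$. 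For the upper estimate, the lemma bounding the weight of any $OPT$-bin by $\frac{64}{27}$ gives at once $W=\sum_{B\in\mathcal{B}_{OPT}}w(B)\le\frac{64}{27}\,OPT(L)$.

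For the lower estimate I would aggregate the per-class statements proved for $FF$'s bins. Every $9$-bin has weight exactly $1$; every bin containing an item of size above $\frac12$ has weight at least $1$, since such an item is either a huge $\phi$-item or a $\gamma_1$-item and hence of weight $1$; after discarding these, at most one $1$-bin survives. Among the rest, the $6^+$-bins all have weight at least $1$ except for the at most one bin of level $\le\frac67$; the bins containing a $\gamma_2$-item have weight at least $1$ except for at most one $2$-bin, at most two $3$-bins, and at most one $4$- or $5$-bin; and, by Lemma~\ref{9last}, the $2$-, $3$-, $4$- and $5$-bins holding only $\phi$-items have weight at least $1$ except for at most one. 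Adding the exceptions gives at most $1+1+1+2+1+1=7$ bins whose weight may be below $1$, and each of those still has nonnegative weight, so $W=\sum_{B\in\mathcal{B}_{FF}}w(B)\ge FF(L)-7$.

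Combining the two bounds yields $FF(L)-7\le W\le\frac{64}{27}\,OPT(L)$, i.e. $FF(L)\le\frac{64}{27}\,OPT(L)+7$, and since $7$ is an additive constant independent of $L$, this is precisely an asymptotic competitive ratio of at most $\frac{64}{27}$. The delicate part of the whole argument is not this final bookkeeping but the design of the weight function carried out in the preceding lemmas: one must keep $w(B)\le\frac{64}{27}$ for every possible $OPT$-bin — in particular for a huge item together with a small and a medium $\phi$-item — while simultaneously forcing $w(B)\ge1$ for all but boundedly many $FF$-bins, which is exactly why the bonus $b(a)$ is permitted to be non-monotone on the small/medium range and why the two-, three-, four- and five-item configurations have to be treated separately in Lemmas~\ref{aa2}, \ref{aa3}, \ref{aa4} and \ref{aa5}.
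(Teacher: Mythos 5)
Your proposal is correct and follows exactly the paper's route: the paper's proof of this theorem is precisely the aggregation you describe, summing the per-class exceptions ($1$ leftover $1$-bin, $1$ for the $6^+$-bins, $1+2+1$ for the $\gamma_2$-bins, and $1$ from Lemma~\ref{9last}) to get $FF(L)-7\le W\le \frac{64}{27}\,OPT(L)$ and hence the asymptotic bound. The only nitpick is that whether an item is an $\alpha$-item is determined by FF's packing (it lies in a $9$-bin of FF), not by $OPT$'s, though this does not affect your argument since the total weight is still independent of which algorithm's bins one sums over.
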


\subsection{The case $\boldsymbol{k\geq10}$}

The case $k\geq10$ is studied similarly to previous cases. In this
case we also distinguish the definitions of weights based on the  bins of $OPT$ according to the number of
additional items packed into these bins. The weight of an $\alpha$-item remains $\frac 1k$.

\noindent \textbf{Case a.} Consider bins of $OPT$ containing one or two
additional items (and the remaining items are $\alpha$-items).
Such bins are called $\gamma$-bins again, and the additional items
in the bin are called $\gamma$-items. The largest $\gamma$-item of
a bin is called a $\gamma_{1}$-item (breaking ties arbitrarily).
Any $\gamma_1$-item has weight $1$. If the bin contains another
$\gamma$-item, this item is called a $\gamma_2$-item. If
$10\leq k\leq 19$, then the weight of the $\gamma_2$-item is
$\frac {7}{10}- \frac 1k$, and otherwise (if $k\geq20$),
then its weight is $\frac{13}{20}=0.65$.
The smallest weight of a $\gamma_2$-item is $0.6$, and its size is
at most $\frac 12$, therefore the ratio between the weight of such
an item and its size is at least $1.2$.

\noindent \textbf{Case b.} Consider the remaining bins of $OPT$ (containing
at least three additional items). Each such bin has at
most $k-3$ $\alpha$-items, and we call it a $\phi$-bin. The items packed into $\phi$-bins that are not $\alpha$-items are called $\phi$-items. The
weighting function of the $\phi$-items is again more complicated.
The weight of any huge $\phi$-item (i.e., a $\phi$-item with size
strictly above $1/2$) is exactly $1$. The weight of a $\phi$-item
of size $a\leq1/2$ is $w(a)=s(a)+b(a)$, where $s(a)=\frac{6}{5}a$
is the scaled size, and $b(a)$ is the bonus of the item. Below we
give the bonus function of the $\phi$-items of sizes no larger
than $1/2$.

For $k\geq20$, the classical weighting function of  FF \cite{JDUGG74} is appropriate, in this case the bonus function is defined as follows.

\[
b(a)=%
\begin{cases}
0 & \text{ \ \ if \ \ \ \ \ \ \ \ \ }a\leq1/6\text{ \ \ \ \ (tiny) }\\
\frac{3}{5}(a-\frac{1}{6})=0.6a-0.1 & \text{ \ \ if }1/6 < a\leq1/3\text{\ \ \ \ \ (small or medium)}\\
0.1 & \text{\ \ if }1/3 <  a\leq1/2\text{ \ \ \ \ (big)}
\end{cases}
\]

The weight function in this case is continuous in the interval $(0,\frac 12)$. The bonus is piecewise linear (and so is the weight function). In the interval $(\frac 16,\frac 13)$, the bonus increases from $0$ to $0.1$.

For $10 \leq k \leq 19$, we use additional modifications to the classic weight function. In some of the cases the bonus function is still equal to
the one in the classic analysis. More
specifically, these are the cases where the size is in $(1/6,1/5]$
and $(3/10,1/3]$. In these intervals the slope of the weight
function is $1.8$, i.e, the slope of the bonus function is $0.6$.
The bonus function and the weight function are piecewise linear, and continuous in $(0,\frac 14)$ and $(\frac 14,\frac 12)$. The partition into types is as in the case $k=9$.

\[
b(a)=%
\begin{cases}
0 & \text{if \ \ \ \ \ \ \ \ \ \ }a\leq1/6\\
\frac{3}{5}(a-\frac{1}{6})=0.6a -0.1 & \text{if }1/6 < a\leq1/5\\
(1.6-20/k)(a-\frac{1}{5})+\frac{1}{50}=(1.6-20/k)a-0.3+\frac 4k & \text{if }1/5 <
a\leq1/4\\ (1.6-20/k)(a-\frac{1}{4})+ \frac 1k = (1.6-20/k)a-0.4+\frac 6k & \text{if
}1/4<a\leq3/10\\
\frac{3}{5}(a-\frac{3}{10})+\frac{8}{100} =0.6a-0.1& \text{if }3/10 < a\leq1/3\\
0.1 & \text{if }1/3 < a\leq1/2\end{cases}
\]

This bonus function is monotonically
non-decreasing for $k \geq 13$, but not in the cases $k=10,11,12$, whereas the resulting weight function is monotonically increasing for $10 \leq k \leq 19$. The value of the bonus is zero for $a\leq1/6$ and it is $0.1$
between $1/3$ and $1/2$.  We have $b(\frac
15)=0.02$ (and $w(\frac 15)=0.26$), $b(\frac 14)=0.1-\frac 1k$, thus for $a \in (\frac 15,
\frac 14]$ the bonus is in $[0.1-\frac 1k,0.02)$ for $k=10,11,12$
and in $(0.02,0.1-\frac 1k]$ for $13 \leq k \leq 19$. For $a \in
(\frac 14,0.3]$ the bonus is in $[0.08,\frac 1k)$ for $k=10,11,12$
and in $(\frac 1k,0.08]$ for $13 \leq k \leq 19$ (we have $w(0.3)=0.44$).

For $k \geq 10$, since the bonus function is non-negative, for any $\phi$-item of size $0\leq a\leq \frac 12$,
$w(a)\geq \frac 65 a$ holds. The bonus of every item of size in $(0,\frac
12]$ is in $[0,0.1]$. The weight of a big item is at least $0.3$. The weight of a medium item is at least $0.3+\frac 1k>0.35$ for $k \leq 19$ and at least $0.35$ for $k \geq 20$.

\subsubsection{Properties of the weighting and the asymptotic bound}
\begin{lemma}
For every bin $B$ of $OPT$, $w(B)\leq 2.7- 3/k$ holds.
\end{lemma}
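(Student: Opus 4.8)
The plan is to follow the classification of the bins of $OPT$ just introduced: a bin $B$ of $OPT$ is a $\gamma$-bin (one or two additional items) or a $\phi$-bin (at least three additional items), and in the latter case I split further according to whether $B$ contains a huge $\phi$-item. In every case I first peel off the $\alpha$-items: each has weight $\frac1k$, a $\gamma$-bin has at most $k-2$ of them and a $\phi$-bin at most $k-3$, and a bin consisting only of $\alpha$-items has weight at most $1$, so we may assume $B$ contains an additional item.

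For a $\gamma$-bin the bound is an immediate computation. If $B$ has two $\gamma$-items it consists of a $\gamma_{1}$-item (weight $1$), one $\gamma_{2}$-item, and at most $k-2$ $\alpha$-items, so
\[
w(B)\le 1+\Bigl(\tfrac{7}{10}-\tfrac1k\Bigr)+\tfrac{k-2}{k}=\tfrac{27}{10}-\tfrac3k \quad(10\le k\le 19),
\]
while for $k\ge 20$, $w(B)\le 1+\tfrac{13}{20}+\tfrac{k-2}{k}=\tfrac{53}{20}-\tfrac2k\le\tfrac{27}{10}-\tfrac3k$ because $\tfrac1k\le\tfrac1{20}$; if $B$ has a single $\gamma$-item then $w(B)\le 1+\tfrac{k-1}{k}=2-\tfrac1k$, which is smaller.

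Next let $B$ be a $\phi$-bin with no huge $\phi$-item. Then every $\phi$-item of $B$ has size at most $\frac12$ and weight $\frac65a+b(a)$ with $b(a)\in[0,\tfrac1{10}]$ and $b(a)=0$ for $a\le\frac16$ (both facts are recorded above), and since the $\phi$-items of $B$ have total size at most $1$, at most five of them carry a positive bonus. Hence the total weight of the $\phi$-items is at most $\frac65\cdot1+5\cdot\frac1{10}=\frac{17}{10}$, and adding the $\alpha$-contribution $\frac{k-3}{k}=1-\frac3k$ yields $w(B)\le\frac{27}{10}-\frac3k$. If instead $B$ contains a huge $\phi$-item (weight exactly $1$), then all the remaining items of $B$ have total size strictly below $\frac12$; a short counting argument (at most $k-3$ $\alpha$-items of weight $\frac1k$, at most two non-huge $\phi$-items of size above $\frac16$, and any further tiny $\phi$-items only consuming volume already budgeted while reducing the $\alpha$-count) reduces the whole estimate to the single inequality $w(a_{1})+w(a_{2})\le\frac7{10}$ for two $\phi$-items with $\frac16<a_{1}\le a_{2}$ and $a_{1}+a_{2}<\frac12$ (which forces $a_{1}<\frac14$ and $a_{2}<\frac13$). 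For $k\ge 20$ the weight equals $1.8a-\frac1{10}$ on $(\frac16,\frac13]$, so $w(a_{1})+w(a_{2})=1.8(a_{1}+a_{2})-\frac15<\frac7{10}$.

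The main obstacle is exactly this last inequality in the range $10\le k\le 19$, where the bonus has the two dips on $(\frac15,\frac14]$ and $(\frac14,\frac3{10}]$ and is no longer dominated by $1.8a-\frac1{10}$. I would settle it in the same manner as the analogous step in the $k=9$ proof: since $a_{1}+a_{2}<\frac12$ one cannot have $a_{1}>\frac15$ and $a_{2}>\frac3{10}$ simultaneously, leaving the three configurations $\{a_{1}\le\tfrac15,\ a_{2}\le\tfrac3{10}\}$, $\{a_{1}>\tfrac15,\ a_{2}\le\tfrac3{10}\}$, $\{a_{1}\le\tfrac15,\ a_{2}>\tfrac3{10}\}$; in each one writes $w(a_{1})+w(a_{2})$ using the correct linear pieces, collects terms, and uses $a_{1}+a_{2}<\frac12$ (together with the breakpoint coefficients and $\tfrac1k\ge\tfrac1{19}$ in two of the sub-cases) to get $w(a_{1})+w(a_{2})\le\frac7{10}$, the supremum $\frac7{10}$ being approached but not attained because the huge item makes $a_{1}+a_{2}<\frac12$ strict. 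I expect this finite check over the linear pieces to be the only part requiring genuine care; the $\gamma$-bin case, the no-huge case, and the bookkeeping everywhere else are pure arithmetic resting on the stated properties of the weight function.
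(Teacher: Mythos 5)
Your proposal follows the paper's proof essentially verbatim: the same split into $\gamma$-bins and $\phi$-bins (with and without a huge $\phi$-item), the same arithmetic for the $\gamma$-bins and for the huge-free case (ground/$\alpha$ weight at most $1-\tfrac 3k$, scaled size at most $1.2$, at most five positive bonuses of at most $0.1$ each), and the same reduction of the huge case to the two-item inequality $w(a_1)+w(a_2)\le \tfrac{7}{10}$ under $\tfrac 16<a_1\le a_2$ and $a_1+a_2<\tfrac 12$, settled by a finite check over the linear pieces of $w$. The only cosmetic difference is how that last check is organized: the paper first disposes of the case where both bonuses equal $0.6a-0.1$ and then evaluates $w(a_2)+w(\tfrac 12-a_2)$ using monotonicity of $w$, whereas you run the three-configuration case split from the $k=9$ analysis; both yield the same bound, so the proposal is correct and takes the same approach.
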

\begin{proof}
The claim holds for bins having only additional items. For a
$\gamma$-bin, if there is just one $\gamma$-item, then the total
weight is at most $\frac{k-1}{k}+1 <2$. Otherwise, if $k \leq 19$,
then the total weight is at most $\frac{k-2}{k}+1+0.7-\frac
1k=2.7-\frac 3k$, and if $k \geq 20$, then the total weight is at
most $\frac{k-2}{k}+1+0.65=2.65-\frac 2k \leq 2.7 - \frac 3k$.

Next, consider $\phi$-bins. For $k\geq20$, the proof follows
from the standard analysis \cite{JDUGG74}, and we include it for
completeness. There are at most $k-3$ $\alpha$-items, and their
total weight never exceeds $\frac{k-3}{k}$. If a bin does not
contain a huge $\phi$-item, then it has at most five $\phi$-items
of positive bonuses (each bonus is at most $0.1$), and their
scaled size is at most $1.2$. This gives a total weight of at
most $1-\frac 3k+1.2+0.5=2.7-\frac 3k$. Note that this total
weight cannot be achieved as both situations where there are $k-3$
$\alpha$-items and five $\phi$-items cannot occur simultaneously.
If a bin contains a huge item, then there are at most two
(other) $\phi$-items with positive bonuses. The scaled size of all
$\phi$-items except for the huge item is at most $0.6$, and the total weight excluding the
bonuses of $\phi$-items is at most $2.6-\frac 3k$. If there is only one
$\phi$-item with a positive bonus, then the total weight is at
most $2.7-\frac 3k$ again. Assume that there are two items with
positive bonuses. None of these items can be larger than $\frac
13$, as their total size is below $\frac 12$. If both items have
bonuses of $0.6$ times their sizes minus $0.1$, then their total
bonus is at most $0.6\cdot \frac 12-0.2=0.1$. In the case $k\geq 20$, this is the only remaining option (as each of these items is small or medium), and we are left with the
case $k \leq 19$, and moreover, in the remaining case there are two items with positive bonuses, and these bonuses are not both equal to the sizes times $0.6$ minus $0.1$. Let $a_1 \leq a_2$ be the sizes of
the items. We have $a_2 \in (0.2, 0.3]$ (otherwise either both items are very small, or the larger item is larger medium and the smaller one is very small, and both items have bonuses of the form $0.6$ times the size minus $0.1$, a case that was analyzed earlier).
Thus, the larger item of the two is either larger small or smaller medium.
We will bound the total weight of the two items and
show that it does not exceed $0.7$. Since the weight function is
monotonically non-decreasing, we analyze $w(a_2)+w(\frac 12-a_2)$.
If $\frac 15 < a_2 \leq \frac
14$, then $w(a_2)+w(\frac 12-a_2)=(2.8-20/k)\frac
12-0.7+\frac{10}k=1.4-10/k-0.7+10/k=0.7$. The case $\frac 14 < a_2
\leq 0.3$ is symmetric.
\end{proof}

Now, we bound the total weight of the bins of FF. Once again
we split the analysis into several cases according to the number
of items packed into the bins. In this case we can also neglect
$k$-bins and $1$-bins, as the total weight of a $k$ bin is $1$,
and all items of size above $\frac 12$ are either huge
$\phi$-items, or $\gamma_1$-items. Moreover, any bin that contains
a huge $\phi$-item or a $\gamma_1$-item can be removed from the
analysis. Thus, we are left with $2^+$-bins that do not contain
such items. Additionally, the weight of any bin with level at
least $5/6$ is at least $1$, as the weight of any $\phi$-item and
of a $\gamma_2$-item is at least $6/5$-times the size of the item.
Since there can be at most one $5^{+}$-bin whose level is below
$5/6$, the weight of any $5^{+}$-bin (except for at most one bin)
is at least $1$. In the following we concentrate on the $2$-bins,
$3$-bins and $4$-bins.

\begin{lemma}
The weight of any $2$-bin containing a $\gamma_{2}$-item is at
least $1$, except for at most one bin. The weight of any $3$-bin or $4$-bin, containing a $\gamma_{2}$-item, is at
least $1$, except for at most one bin.
\end{lemma}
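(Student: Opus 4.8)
The plan is to first determine the structure of the ``light'' bins (those whose weight is below $1$) among the bins in question, and then use the definition of First Fit to bound how many light bins can occur. Every bin under consideration contains only $\phi$-items of size at most $\frac12$ together with $\gamma_{2}$-items, and for each such item the weight is at least $\frac65$ times its size; hence a light bin has level strictly below $\frac56$ (this was already noted). Since a $\gamma_{2}$-item has weight at least $0.6$, two $\gamma_{2}$-items already give weight at least $1.2$, so a light bin with a $\gamma_{2}$-item has exactly one of them and all of its remaining items are $\phi$-items. I would then check that a $\phi$-item of size strictly above $\frac14$ has weight more than $0.3+\frac1k$ when $10\le k\le19$ and more than $0.35$ when $k\ge20$ (inspecting the piecewise-linear weight function on $(\frac14,\frac13]$ and on $(\frac13,\frac12]$); adding the $\gamma_{2}$-weight $0.7-\frac1k$ (resp.\ $0.65$) then already exceeds $1$. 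Hence every $\phi$-item of a light bin containing a $\gamma_{2}$-item has size at most $\frac14$.

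For the $2$-bin part, such a light bin consists of one $\gamma_{2}$-item (size at most $\frac12$) and one $\phi$-item (size at most $\frac14$), so its level is at most $\frac34$. I would then assume two such bins $B_i,B_j$ exist with $B_i$ opened before $B_j$, and look at the $\phi$-item $x$ of $B_j$: at the moment $x$ is packed, $B_i$ already exists, has at most $2<k$ items, and has level at most $\frac34$, so $s_x\le\frac14\le1-s(B_i)$ shows that $x$ fits into $B_i$; First Fit would then put $x$ into $B_i$ or an earlier bin, contradicting $x\in B_j$. Hence at most one light $2$-bin has a $\gamma_{2}$-item.

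For the $3$-bin/$4$-bin part I would argue in the same spirit. Let $B_i,B_j$ be light, each a $3$-bin or a $4$-bin with a $\gamma_{2}$-item, with $B_i$ opened before $B_j$. Lightness of $B_i$ gives $s(B_i)<\frac56$, so $1-s(B_i)>\frac16$, and $B_i$ never holds more than $4<k$ items; thus any item of $B_j$ of size at most $1-s(B_i)$ would have been accepted by First Fit into $B_i$ (or an earlier bin), which is impossible, so every item of $B_j$ has size greater than $1-s(B_i)>\frac16$. In particular each of the (at least two) $\phi$-items of $B_j$ has size above $\frac16$ and hence weight more than $\frac65\cdot\frac16=\frac15$, and together with the weight at least $0.6$ of the $\gamma_{2}$-item we get $w(B_j)>0.6+\frac25=1$, contradicting lightness of $B_j$. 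So no such pair exists, and at most one $3$-bin or $4$-bin with a $\gamma_{2}$-item is light.

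The bulk of the work is the weight bookkeeping in the first paragraph, which has to be done separately for $10\le k\le19$ and for $k\ge20$ and is slightly delicate near the breakpoint $a=\frac14$, where the weight function jumps; I expect the main conceptual obstacle to be the First-Fit step, whose correctness rests on two points that must be stated carefully: a $2$-, $3$-, or $4$-bin never attains the cardinality bound $k$ (since $k\ge10$), so a later item can only have been refused by it on size grounds, and a bin's running level never exceeds its final level, so fitting into the final configuration implies fitting at the time of arrival.
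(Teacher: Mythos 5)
Your proof is correct and follows essentially the same route as the paper: reduce to light bins having exactly one $\gamma_2$-item and only $\phi$-items of size at most $\frac14$ (via the weight bound $0.3+\frac1k$ resp.\ $0.35$ for items above $\frac14$), then derive a First-Fit contradiction from the existence of two such bins. Your treatment of the $3$-bin/$4$-bin case is marginally more streamlined --- you deduce directly from $s(B_i)<\frac56$ that every item of the later bin exceeds $\frac16$ and hence has weight above $\frac15$, where the paper instead splits into cases according to whether the later bin has an item below $\frac16$ --- but the underlying argument is identical.
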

\begin{proof}
Assume that at least two bins have $\gamma_2$-items, and each
one has weight below $1$. Denote them by $B_i$ and $B_j$ such that
$B_j$ appears after $B_i$ in the ordering of FF.
Each of them can have at most one $\gamma_2$-item, as the total weight of
two $\gamma_2$ items is above $1$. None of them has a level of at least $\frac 56$, as in such a case the weight is at least $1$.

Assume that both these bins are $2$-bins. The total weight of a $\gamma_2$-item and a $\phi$-item of size above $\frac 14$ is at least $0.35+0.65=1$ for $k \geq 20$, and at least $0.3+\frac 1k+0.7-\frac 1k=1$ for $k \leq 19$. Thus, each of these $2$-bins has a $\phi$-item of size at most $\frac 14$ (as there is only one $\gamma_2$-item packed into each of the two bins). We find $s(B_i)\leq \frac 34$, as the size of the $\gamma_2$-item is at most $\frac 12$, and therefore $B_j$ cannot have any item of size at most $\frac 14$, a contradiction.

Next, assume that $B_i$ and $B_j$ contain $3$ or $4$ items each and have weights below $1$, such that each of them contains one $\gamma_2$-item, and the other items are $\phi$-items. Similarly to the proof for $2$-bins, none of them has a $\phi$-item of size above $\frac 14$. If all the $\phi$-items of $B_{j}$ have sizes of at least
$1/6$, then their total size is at least $\frac 13$, and their total weight is at least $6/5\cdot
\frac 13=4/10$, and we reach a contradiction, since the $\gamma_2$-item of that bin has weight of at least $0.6$. Otherwise, since $B_j$ has an item of size below $\frac 16$, the level of $B_{i}$ is above
$5/6$, a contradiction.
\end{proof}

We are left with bins containing only $\phi$-items that are not huge.
\begin{lemma}
Consider two $\phi$-items of sizes $a_{1}\leq a_{2}\leq1/2$.
If $1 \geq a_{1}+a_{2}>1-a_{1}$ holds, then the total weight of the two items is at
least $1$.
\end{lemma}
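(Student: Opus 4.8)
The plan is to run the two-item base case exactly as in Lemma~\ref{aa2} (the $k=9$ analogue), adapted to the two weight regimes $k\ge 20$ and $10\le k\le 19$. First I would extract the basic size information from the hypothesis: $a_1+a_2>1-a_1$ gives $2a_1>1-a_2\ge 1/2$, hence $a_1>1/4$, so both $a_1$ and $a_2$ lie in $(1/4,1/2]$. In particular each of the two items is either \emph{medium} (size in $(1/4,1/3]$) or \emph{big} (size in $(1/3,1/2]$), which cuts down the number of subcases of the piecewise weight function that can occur. I would then split on the size of the smaller item $a_1$.

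If $a_1>1/3$, both items are big, so $w(a_i)=\tfrac65 a_i+0.1>\tfrac65\cdot\tfrac13+0.1=\tfrac12$ for $i=1,2$ and the total weight exceeds $1$. Otherwise $1/4<a_1\le 1/3$, and then $a_2>1-2a_1\ge 1/3$, so $a_2$ is big with $w(a_2)=\tfrac65 a_2+0.1=1.2a_2+0.1$; since the coefficient of $a_2$ is $1.2>0$, I may replace $a_2$ by the lower bound $1-2a_1$ throughout, which only decreases the sum. For $k\ge 20$ the bonus of a medium item is $0.6a-0.1$, so $w(a_1)=1.8a_1-0.1$ and
\[
w(a_1)+w(a_2)=1.8a_1+1.2a_2>1.8a_1+1.2(1-2a_1)=1.2-0.6a_1\ge 1.2-0.2=1,
\]
using $a_1\le 1/3$.

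For $10\le k\le 19$ I would split the range of $a_1$ at $3/10$. If $a_1\in(3/10,1/3]$ (larger medium) the bonus is again $0.6a_1-0.1$, so the computation is identical to the $k\ge 20$ case above. If $a_1\in(1/4,3/10]$ (smaller medium) then $b(a_1)=(1.6-20/k)a_1-0.4+6/k$, hence $w(a_1)=(2.8-20/k)a_1-0.4+6/k$, and after substituting $a_2>1-2a_1$,
\[
w(a_1)+w(a_2)>(2.8-20/k)a_1+1.2(1-2a_1)-0.3+6/k=(0.4-20/k)a_1+0.9+6/k.
\]
The coefficient $0.4-20/k$ is negative for every $k\le 19$, so the right-hand side is decreasing in $a_1$ and is minimized at $a_1=3/10$, where it equals $(0.4-20/k)\cdot\tfrac{3}{10}+0.9+6/k=1.02$ (the two $6/k$ terms cancel), which is $>1$. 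This exhausts all cases.

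The whole argument is essentially bookkeeping; the only points needing care are keeping the $k$-dependent pieces of the $10\le k\le 19$ bonus function straight, and verifying each time that the coefficient of $a_2$ (always the scaled-size coefficient $1.2$ here) is positive so that relaxing $a_2\ge 1-2a_1$ is legitimate. There is no genuine obstacle: this lemma is just the two-item case that will later feed, together with the analogous three- and four-item lemmas, into the "consecutive bins" estimate for $k\ge 10$, exactly as Lemmas~\ref{aa2}--\ref{aa5} feed into Lemma~\ref{9last} for $k=9$.
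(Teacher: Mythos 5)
Your proposal is correct and follows essentially the same route as the paper: derive $a_1>1/4$, dispose of the case where both items are big, and otherwise use $a_2>1-2a_1\geq 1/3$ together with the case split $k\geq 20$ versus $10\leq k\leq 19$ (the latter further split at $a_1=3/10$), arriving at the same bound $1.02$ in the smaller-medium subcase. The only difference is cosmetic algebra in the $k\geq 20$ computation (the paper groups the sum as $0.9(2a_1+a_2)+0.3a_2$ instead of substituting $a_2>1-2a_1$), so nothing further is needed.
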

\begin{proof}
We have $a_1 > (1-a_2)/2 \geq \frac 14$.
If both items have sizes at least $1/3$, since
$w(1/3)=1/2$, the claim holds, since $w$ is monotonically non-decreasing.
Thus it suffices to consider the case
$1/4<a_{1}\leq1/3$. In this case $a_2 > 1-2a_1 \geq \frac 13$. If $k \geq 20$, then the total weight of the two items is $1.2(a_1+a_2)+(0.6a_1-0.1)+0.1=1.8a_1+1.2a_2=0.9(2a_1+a_2)+0.3a_2 > 0.9\cdot 1+ 0.3 \cdot \frac 13=1$.
If $k \leq 19$, we consider two cases. If $a_1>0.3$, then the calculation is the same as for $ k \geq 20$. Otherwise,
the total weight of the two items is $1.2(a_1+a_2)+((1.6-20/k)a_1-0.4+ \frac 6k)+0.1=(2.8-20/k)a_1+1.2a_2-0.3+ \frac 6k>(2.8-20/k)a_1+1.2(1-2a_1)-0.3+\frac 6k=(0.4-20/k)a_1+0.9+6/k \geq (0.4-20/k)\cdot 0.3+0.9+\frac 6k=1.02>1$, since $0.4-20/k<0$ and $a_1 \leq 0.3$.
\end{proof}

\begin{lemma}
Consider three $\phi$-items of sizes  $a_{1}\leq
a_{2}\leq a_{3}\leq1/2$. If $1 \geq a_{1}+a_{2}+a_{3}>1-a_{1}$ holds, then the total
weight of the three items is at least $1$.
\end{lemma}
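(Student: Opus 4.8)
The plan is to follow the proof of Lemma~\ref{aa3} (the case $k=9$), adapted to the weight functions for $k\ge 10$. Throughout I will use that for every $\phi$-item of size $a\le\frac12$ one has $w(a)\ge\frac65 a$ (the bonus is nonnegative), that a big $\phi$-item (size in $(\frac13,\frac12]$) has weight at least $\frac65 a+0.1\ge0.5$, that a medium $\phi$-item (size in $(\frac14,\frac13]$) has bonus at least $0.05$ (for $10\le k\le19$ the minimum medium bonus is $\min\{\tfrac1k,0.08\}$, and for $k\ge20$ it is $0.05$), and that on $(\frac14,\frac12]$ the weight function is increasing with $w(\tfrac14^+)=0.3+\tfrac1k$ for $10\le k\le19$ and $w(\tfrac14^+)=0.35$ for $k\ge20$. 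From the hypothesis $2a_1+a_2+a_3>1$ together with $a_2\le a_3$ we get $a_1+a_3>\frac12$, and when $a_1\le\frac14$ the hypothesis also reads $a_1+a_2+a_3>1-a_1\ge\frac34$.

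First I would dispose of the case $a_1>\frac14$: all three items then have size above $\frac14$, so each has weight at least $0.3+\frac1k$ (for $10\le k\le19$) or $0.35$ (for $k\ge20$), and in either case the total exceeds $1$, since $3(0.3+\tfrac1k)>1$ for $k\le19$. So assume $a_1\le\frac14$, hence $a_1+a_2+a_3>\frac34$. If $a_3>\frac13$, then $w(a_3)\ge\frac65 a_3+0.1$ and the total weight is at least $\frac65(a_1+a_2+a_3)+0.1>\frac65\cdot\frac34+0.1=1$. So assume in addition $a_3\le\frac13$; then $a_3>\frac12-a_1\ge\frac14$, so $a_3$ is medium, and $a_1>\frac16$ (otherwise $2a_1+a_2+a_3\le\frac13+\frac23=1$). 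If moreover $a_2>\frac14$, then $a_2$ and $a_3$ are both medium, so $w(a_2)+w(a_3)\ge\frac65(a_2+a_3)+0.1$, and adding $w(a_1)\ge\frac65 a_1$ gives total weight at least $\frac65(a_1+a_2+a_3)+0.1>1$.

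This leaves the case $\frac16<a_1\le a_2\le\frac14<a_3\le\frac13$. Here I would first sharpen the size bound: from $a_2+a_3\le\frac14+\frac13=\frac7{12}$ and $2a_1+a_2+a_3>1$ we get $a_1>\frac5{24}>\frac15$, so $a_1$ and $a_2$ both lie in $(\frac15,\frac14]$ (``larger small''). For $k\ge20$ all of $a_1,a_2,a_3$ then lie in $(\frac15,\frac13]$, where the bonus equals $0.6a-0.1$, so the total weight is exactly $1.8(a_1+a_2+a_3)-0.3>1.8\cdot\frac34-0.3=1.05$. For $10\le k\le19$ I would substitute the explicit piecewise-linear formulas: $w(a_1)+w(a_2)=(2.8-\tfrac{20}k)(a_1+a_2)-0.6+\tfrac8k$, and $w(a_3)=(2.8-\tfrac{20}k)a_3-0.4+\tfrac6k$ if $a_3\in(\tfrac14,\tfrac3{10}]$ while $w(a_3)=1.8a_3-0.1$ if $a_3\in(\tfrac3{10},\tfrac13]$; in both sub-cases, using $a_1+a_2+a_3>\frac34$ (and $a_3>\frac3{10}$ in the second) together with the fact that all the coefficients appearing are nonnegative for $k\le19$, a short computation gives total weight strictly larger than $1.1-\frac1k\ge1$.

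The step I expect to be most delicate is this last one, $10\le k\le19$ with all three sizes in $(\frac15,\frac13]$: there the individual bonuses are very small (for $k=10,11,12$ the bonus is actually decreasing on $(\frac15,\frac14]$ and can be arbitrarily close to $0$, and the weight function has a downward kink approaching $a=\frac14$ from below), so essentially all of the slack must be extracted from the scaled sizes; this is what forces the use of the sharpened bound $a_1>\frac5{24}$ and the separate treatment of the two medium sub-intervals $(\frac14,\frac3{10}]$ and $(\frac3{10},\frac13]$ so that the bookkeeping closes. Everything else is routine linear arithmetic in $a_1,a_2,a_3$ and $\frac1k$.
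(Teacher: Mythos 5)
Your proof is correct and follows essentially the same route as the paper's: the same case split on the thresholds $\frac16,\frac15,\frac14,\frac3{10},\frac13$, the same use of the scaled-size bound $w(a)\ge\frac65a$ plus bonus estimates, and the same final quantity $1.1-\frac1k$ for $10\le k\le19$. The only (harmless) deviations are that you eliminate the sub-cases $a_1\le\frac16$ and $a_1\le\frac15$ by showing they contradict the size hypothesis rather than by a direct weight computation, and in the last sub-case you substitute $a_1+a_2>\frac34-a_3$ where the paper uses $a_2>1-2a_1-a_3$; both lead to the same bound.
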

\begin{proof}
We have $4a_3 \geq 2a_1+a_2+a_3>1$, so $a_3>\frac 14$.
If $a_1>\frac 14$, then the claim holds since the weight of an item with size above $1/4$ is at least $0.35$.
If $a_1 \leq \frac 16$, then $a_{1}+a_{2}+a_{3}>\frac 56$, and the total weight is at least $1$. Thus, we are left with the case $\frac 16 < a_1 \leq \frac 14$, and thus $a_1+a_2+a_3>\frac 34$. If $a_3>\frac 13$, then its bonus is $0.1$, and the total weight of the three items is at least $\frac 65 \cdot \frac 34+0.1=1$. We are left with the case $\frac 16<a_1 \leq a_2 \leq a_3 \leq \frac 13$.
We find that in the case $k \geq 20$, as all three items have sizes in $(\frac 16, \frac 13]$, the total weight of the items is $1.8(a_1+a_2+a_3)-0.3> 1.8 \cdot \frac 34 -0.3=1.05>1$. We are left with the case $k \leq 19$. If $a_2 > \frac 14$, then since the bonus of any item of size above $\frac 14$ is above $\frac 1{20}$, the total weight of the items is at least $1.2 \cdot \frac 34+2\cdot 0.05=1$. If $a_1 \leq \frac 15$, then $a_{1}+a_{2}+a_{3}>\frac 45$, and since the bonus of the largest item is above $\frac 1{20}$, we get a total weight of at least $1.2 \cdot \frac 45 +0.05=1.01>1$.
We are therefore left with the case $\frac 15 \leq a_1 \leq a_2 \leq \frac 14$.
If $a_3 \leq 0.3$, then the total weight is at least $(2.8-20/k)\frac 34+2(-0.3+4/k)+(-0.4+6/k)=1.1-1/k \geq 1$.
If $a_3 > 0.3$, then the total weight is $(2.8-20/k)(a_1+a_2)+1.8a_3+2(-0.3+4/k)-0.1 > (2.8-20/k)a_1+1.8a_3-0.7+8/k+(2.8-20/k)(1-2a_1-a_3)=(20/k-2.8)a_1+(20/k-1)a_3+2.1-12/k \geq (20/k-2.8)/4+(20/k-1)\cdot 0.3+2.1-12/k=1.1-1/k \geq 1$ since $k \geq 10$, $a_1 \leq \frac 14$, and $a_3 \geq 0.3$.
\end{proof}

\begin{lemma}
Consider four $\phi$-items of sizes $a_{1}\leq a_{2}\leq a_{3}\leq a_{4}\leq1/2$. If $1 \geq a_{1}+a_{2}+a_{3}
+a_{4}>1-a_{1}$ holds, then the total weight of the four items is at least $1$.
\end{lemma}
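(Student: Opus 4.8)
The plan is to follow the pattern of the two‑ and three‑item lemmas above (and of Lemmas~\ref{aa2}--\ref{aa5} for $k=9$), using only the uniform facts recorded before the lemma: every $\phi$‑item of size $a\le\tfrac12$ has $w(a)=\tfrac65a+b(a)$ with $b(a)\ge0$; $b(a)=\tfrac1{10}$ on $(\tfrac13,\tfrac12]$; $w(\tfrac15)=\tfrac{13}{50}$; $w$ is monotonically non‑decreasing on $(0,\tfrac12]$; and $w(a)\ge\tfrac3{10}$ for $a>\tfrac14$. None of the four items is huge; write $S=a_1+a_2+a_3+a_4$, so the hypothesis gives $S>1-a_1$, and the goal is $\sum_i w(a_i)\ge1$. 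I would case‑split on $a_1$. First, two quick cases. If $a_1\le\tfrac16$, then $\sum_i w(a_i)\ge\tfrac65S>\tfrac65(1-a_1)\ge\tfrac65\cdot\tfrac56=1$ (this is exactly the step where $k=9$ still needed a bonus and $k\ge10$ does not). If $a_1>\tfrac15$, then every item exceeds $\tfrac15$, so by monotonicity $w(a_i)\ge w(\tfrac15)=\tfrac{13}{50}$ and $\sum_i w(a_i)\ge\tfrac{26}{25}>1$; in particular this subsumes $a_1\ge\tfrac14$. This leaves the regime $\tfrac16<a_1\le\tfrac15$, where $b(a_1)=\tfrac35a_1-\tfrac1{10}$ exactly (uniformly in $k$) and $S>1-a_1\ge\tfrac45$.

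Within that regime I would split on the largest item $a_4$. If $a_4>\tfrac13$, then $b(a_4)=\tfrac1{10}$ and $w(a_i)\ge\tfrac65a_i$ for $i\le3$, hence $\sum_i w(a_i)\ge\tfrac65S+\tfrac1{10}>\tfrac65\cdot\tfrac45+\tfrac1{10}=\tfrac{53}{50}>1$. If $\tfrac14<a_4\le\tfrac13$, then a short check on the piecewise definition (valid for every $k\ge10$) gives $b(a_4)\ge\tfrac1{20}$, so, dropping $b(a_2),b(a_3)\ge0$, $\sum_i w(a_i)=\tfrac65S+\sum_i b(a_i)>\tfrac65(1-a_1)+b(a_1)+b(a_4)\ge\tfrac65-\tfrac35a_1-\tfrac1{20}\ge\tfrac65-\tfrac3{25}-\tfrac1{20}=\tfrac{103}{100}>1$, using $a_1\le\tfrac15$. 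The only remaining case is $a_4\le\tfrac14$, i.e.\ all four items lie in $(\tfrac16,\tfrac14]$.

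For that last case I would use that on $(\tfrac16,\tfrac14]$ the bonus for every $k\ge10$ dominates the $k=10$ bonus (one has $b_k(a)=\tfrac35a-\tfrac1{10}$ on $(\tfrac16,\tfrac15]$ for all $k$, while on $(\tfrac15,\tfrac14]$ the bonus is non‑decreasing in $k$), so it suffices to prove $\sum_i w_{10}(a_i)\ge1$, where $w_{10}(a)=\tfrac95a-\tfrac1{10}$ on $(\tfrac16,\tfrac15]$ and $w_{10}(a)=\tfrac45a+\tfrac1{10}$ on $(\tfrac15,\tfrac14]$. Since the slope drops from $\tfrac95$ to $\tfrac45$ at $\tfrac15$, $w_{10}$ is concave on $(\tfrac16,\tfrac14]$, so $\sum_i w_{10}(a_i)$ attains its minimum over the polytope $\{\tfrac16\le a_1\le a_2\le a_3\le a_4\le\tfrac14,\ 2a_1+a_2+a_3+a_4\ge1\}$ at a vertex; there are only $O(1)$ vertices to list, e.g.\ $(\tfrac15,\tfrac15,\tfrac15,\tfrac15)$ giving $\tfrac{26}{25}$, $(\tfrac16,\tfrac16,\tfrac14,\tfrac14)$ giving exactly $1$, $(\tfrac16,\tfrac29,\tfrac29,\tfrac29)$ and $(\tfrac16,\tfrac5{24},\tfrac5{24},\tfrac14)$ giving $\tfrac{31}{30}$, $(\tfrac3{16},\tfrac3{16},\tfrac3{16},\tfrac14)$ giving $\tfrac{81}{80}$, and $(\tfrac14,\tfrac14,\tfrac14,\tfrac14)$ giving $\tfrac65$, each a one‑line arithmetic check, and in every case the value is at least $1$. (Our region is contained in this polytope, so $\sum_i w_{10}(a_i)\ge1$ follows.)

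The main obstacle is precisely this last case for $10\le k\le19$: there the modified weight function is strictly flatter than the classical one on $(\tfrac15,\tfrac14]$ and has an upward jump at $\tfrac14$, so the slick classical dichotomy ($a_1\le\tfrac29$ forces $S>\tfrac79$ and $\sum_i w(a_i)=\tfrac95S-\tfrac25>1$, otherwise every item weighs more than $\tfrac3{10}$) no longer applies verbatim. What makes the polytope/concavity argument go through is that the constraint $S>1-a_1$ excludes exactly the configurations in which all four sizes sit near $\tfrac14$ — the only place the $k=10$ bonus is tiny — since such configurations force $a_1\le\tfrac16$ and thus fall under the first easy case. Everything else is routine: the preliminary cases and cases (c) and (d) are short size‑accounting estimates that hold uniformly in $k$.
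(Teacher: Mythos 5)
Your argument is correct and, in its hardest sub-case, genuinely different from the paper's. The preliminary reductions ($a_1\le\frac16$ handled by scaled size alone; $a_4>\frac14$ handled by the bonus of the largest item together with $S>\frac45$) match the paper's, though you dispatch $a_1>\frac15$ earlier and more cleanly via monotonicity of $w$ and $w(\frac15)=0.26$, where the paper instead computes with the $k$-dependent slopes inside the ``all items small'' case. The real divergence is the final case $\frac16<a_1\le\frac15<a_4\le\frac14$: the paper splits on whether $a_2>\frac15$ and grinds through two $k$-dependent linear estimates (checking the sign of $40/k-3.8$, etc.), while you first observe that on $(\frac16,\frac14]$ the bonus is pointwise minimized at $k=10$ (correct: the slope $1.6-20/k$ on $(\frac15,\frac14]$ is increasing in $k$, and the $k\ge20$ function corresponds to slope $0.6$), and then minimize the concave function $\sum_i w_{10}(a_i)$ over the polytope $\{\frac16\le a_1\le\dots\le a_4\le\frac14,\ 2a_1+a_2+a_3+a_4\ge1\}$ at its vertices. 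This buys uniformity in $k$ and makes the tight configuration $(\frac16,\frac16,\frac14,\frac14)$ (value exactly $1$) visible, at the cost of a vertex enumeration. One small gap: your list is prefaced by ``e.g.''\ but the conclusion requires \emph{all} vertices; you omit $(\frac16,\frac14,\frac14,\frac14)$ (active constraints $a_1=\frac16$, $a_2=a_3=a_4=\frac14$), whose value is $\frac15+3\cdot\frac3{10}=\frac{11}{10}\ge1$, so the omission is harmless but should be repaired by an explicit, exhaustive enumeration of the (seven) vertices.
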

\begin{proof}
We have $5a_4 \geq 2a_1+a_2+a_3+a_4 >1$, so $a_4 > \frac 15$. If $a_1 \leq \frac 16$, then $a_{1}+a_{2}+a_{3}+a_4>\frac 56$, and the total weight is at least $1$.
Otherwise we find $a_{1}+a_{2}+a_{3}+a_{4} \geq \max\{1-a_1,4a_1\}\geq \frac 45$. If $a_4 > \frac 14$, then its bonus is above $\frac 1{20}$, and the total weight is at least $1.2 \cdot \frac 45+0.05>1$. Thus, $\frac 15 < a_4 \leq \frac 14$. If $k \geq 20$, as the sizes of all items are in $(\frac 16,\frac 14]$, the total weight of all four items is $1.8(a_1+a_2+a_3+a_4)-0.4 \geq 1.04>1$. We are left with the case $k \leq 19$.
If $a_1>\frac 15$, then the total weight of all four items is $(2.8-20/k)(a_1+a_2+a_3+a_4)-1.2+16/k \geq (2.8-20/k)\cdot 0.8-1.2+16/k=1.04 >1$. Otherwise, $\frac 16< a_1 \leq \frac 15 < a_4 \leq \frac 14$. If $a_2 > \frac 15$, then the total weight is $1.8a_1-0.1+(2.8-20/k)(a_2+a_3+a_4)-0.9+12/k >1.8a_1+(2.8-20/k)(1-2a_1)-1+12/k=a_1(40/k-3.8)+1.8-8/k$. If $40/k-3.8$ is non-negative, then using $8/k \leq 0.8$ we find a total weight of at least $1$. Otherwise, using $a_1 \leq \frac 15$, we find a total weight of at least $(40/k-3.8)\cdot \frac 15 +1.8-8/k = 1.04>1$.
If $a_2 \leq \frac 15$, then the scaled size is $1.2(a_1+a_2+a_3+a_4)>1.2(1-a_1)$. Thus bonus of the two smallest items is $0.6(a_1+a_2)-0.2\geq 1.2a_1-0.2$. Thus, the total weight is at least $1$.
\end{proof}
\begin{lemma}
The total weight of the $2$-bins, $3$-bins and $4$-bins of FF that contain only $\phi$-items
is at least their number minus $1$.
\end{lemma}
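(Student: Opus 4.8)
The plan is to reproduce, almost verbatim, the argument used for $k=9$ in Lemma~\ref{9last}, now invoking the three lemmas just proved in this section (on two, three, and four $\phi$-items) in place of Lemmas~\ref{aa2}, \ref{aa3}, \ref{aa4}. First I would restrict attention to the $2$-bins, $3$-bins and $4$-bins of FF that contain only $\phi$-items and whose weight is strictly below $1$. Since $w(a)\ge\frac65 a$ for every $\phi$-item, none of these bins can contain a huge item, and by the three preceding lemmas none of them can have level exceeding $1$ minus the size of its smallest item; every other $2$-, $3$- or $4$-bin containing only $\phi$-items already has weight at least $1$, so it suffices to bound the total weight of the restricted set below by its cardinality minus one. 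If the restricted set has at most one bin we are done; otherwise I list its members in FF order as $B_{i_1},\dots,B_{i_t}$ with $t\ge 2$.

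Second, for two consecutive bins $B_i$ (earlier) and $B_j$ (later) in this list, let $i_1$ and $j_1$ denote their smallest items, let $S=s(B_i)$, and I would prove $w(B_i)-w(i_1)+w(j_1)\ge 1$. If $S-s_{i_1}+s_{j_1}>1$ this is immediate, since by $w(a)\ge\frac65 a$ the three quantities sum to at least $\frac65(S-s_{i_1}+s_{j_1})>\frac65>1$. Otherwise, note that $B_i$ always has fewer than $k$ items (it is a $2$-, $3$- or $4$-bin and $k\ge 10$), so when $j_1$ arrived it failed to fit into $B_i$ purely for reasons of size, which gives $s_{j_1}>1-S$; combined with the defining property $S\le 1-s_{i_1}$ of the restricted set this yields $s_{j_1}>s_{i_1}$. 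Consequently the multiset $M=(B_i\setminus\{i_1\})\cup\{j_1\}$ has the same cardinality as $B_i$, hence in $\{2,3,4\}$, total size $S-s_{i_1}+s_{j_1}\le 1$, and smallest element $m$ satisfying $m\ge s_{i_1}$ (all items of $B_i\setminus\{i_1\}$ have size at least $s_{i_1}$, and $s_{j_1}>s_{i_1}$); then the total size of $M$ equals $(S+s_{j_1})-s_{i_1}>1-s_{i_1}\ge 1-m$, so $M$ satisfies the hypothesis of the corresponding lemma and $w(M)\ge 1$, i.e.\ $w(B_i)-w(i_1)+w(j_1)\ge 1$.

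Finally, writing $m_\ell$ for the smallest item of $B_{i_\ell}$, the inequality above reads $w(B_{i_\ell})\ge 1+w(m_\ell)-w(m_{\ell+1})$ for $\ell=1,\dots,t-1$; summing telescopically and then adding $w(B_{i_t})\ge w(m_t)$ gives $\sum_{\ell=1}^{t} w(B_{i_\ell})\ge (t-1)+w(m_1)\ge t-1$. Adding the contribution of at least $1$ from each of the remaining $2$-, $3$- and $4$-bins containing only $\phi$-items yields the claim. The only point that needs care is the bookkeeping around the smallest element of $M$: one must verify that after replacing the smallest item $i_1$ of $B_i$ by the strictly larger item $j_1$, the total size of the new configuration still exceeds $1$ minus its own smallest element; but as shown this follows immediately from $s_{j_1}>s_{i_1}$ together with $S+s_{j_1}>1$, so there is no genuine obstacle beyond a faithful transcription of the $k=9$ proof using the scaled-size bound $w(a)\ge\frac65 a$.
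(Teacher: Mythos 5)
Your proposal is correct and is essentially the paper's own argument: the paper proves this lemma by saying it is identical to the proof of Lemma~\ref{9last} for $k=9$ (with $5$-bins omitted), and you have faithfully reproduced that telescoping argument, substituting the bound $w(a)\ge\frac65 a$ and the three $k\ge10$ lemmas for their $k=9$ counterparts. Your write-up is in fact somewhat more explicit than the paper's about the bookkeeping for the smallest element of the modified set and the final telescoping sum.
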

\begin{proof}
The proof is the same as for Lemma \ref{9last} (the only difference is that $5$-bins are not considered).
\end{proof}

We proved $FF(L)-5 \leq W \leq (2.7-3/k)OPT(L)$.
\begin{theorem}
The asymptotic approximation ratio of $FF$\ for any $k \geq 10$
is at most $2.7-3/k$.
\end{theorem}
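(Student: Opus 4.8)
The plan is to close the argument by the same weighting scheme used throughout the section: assemble the already-established bin-weight estimates for $k\ge10$ into the chain $FF(L)-O(1)\le W\le (2.7-3/k)\,OPT(L)$. Recall that every item carries a single weight $w$ --- an $\alpha$-item (one that FF packs into a $k$-bin) has weight $\frac1k$, and every other item has a weight depending on its size and on whether $OPT$ places it in a $\gamma$-bin or a $\phi$-bin, with the bonus function defined by cases according to whether $10\le k\le19$ or $k\ge20$. Since each item is assigned exactly one weight, the quantity $W=w(I)$ is well defined and equals both $\sum_{B\in OPT}w(B)$ and $\sum_{B\in FF}w(B)$; the whole proof is the comparison of these two sums.

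First I would use the lemma that $w(B)\le 2.7-3/k$ for every bin $B$ of $OPT$; summing over the $OPT(L)$ bins yields $W\le (2.7-3/k)\,OPT(L)$. Then I would assemble the lower bound $W\ge FF(L)-5$ on the FF side. Here one neglects the $k$-bins (weight exactly $1$), the at most one exceptional $1$-bin, and every bin containing a huge $\phi$-item or a $\gamma_1$-item (each of weight $\ge1$); what remains are $2^+$-bins holding only $\phi$-items and $\gamma_2$-items. Since $w(a)\ge\frac65a$ for such items, any such bin of level at least $5/6$ has weight at least $1$, so all $5^+$-bins except at most one do. For the $2$-, $3$- and $4$-bins one invokes the lemma controlling bins that contain a $\gamma_2$-item (at most two exceptions in total), together with the three lemmas bounding the total weight of two, three, or four $\phi$-items whose total size exceeds $1$ minus the smallest size; chaining the latter over consecutive bins exactly as in the proof of Lemma~\ref{9last} leaves at most one further exceptional bin. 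Counting exceptions --- one $1$-bin, one $5^+$-bin, two $\gamma_2$-bins, and one $\phi$-only bin --- gives $W\ge FF(L)-5$.

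Combining the two estimates gives $FF(L)-5\le W\le (2.7-3/k)\,OPT(L)$, hence $FF(L)\le (2.7-3/k)\,OPT(L)+5$; for any input with $OPT(L)\ge K$ this yields $FF(L)/OPT(L)\le (2.7-3/k)+5/K$, so letting $K\to\infty$ shows the asymptotic competitive ratio is at most $2.7-3/k$. The genuine difficulty lies entirely inside the preceding lemmas rather than in this final assembly; I expect the main obstacle to be the $\phi$-item weight verification for $10\le k\le19$, where the bonus function is non-monotone on $(1/5,3/10)$ and the two-, three- and four-item inequalities must be checked piecewise on the sub-intervals $(1/6,1/5]$, $(1/5,1/4]$, $(1/4,3/10]$, $(3/10,1/3]$, along with the huge-item-plus-two-small-items configuration in the $OPT$-bin bound.
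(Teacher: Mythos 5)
Your proposal follows exactly the paper's route: the theorem is stated in the paper as the immediate consequence of the chain $FF(L)-5\le W\le(2.7-3/k)\,OPT(L)$, obtained by combining the $OPT$-bin weight bound with the FF-side accounting of exceptional bins (one $1$-bin, one $5^+$-bin, two bins with a $\gamma_2$-item, one $\phi$-only bin), which matches your count of five. Your assembly and the passage to the asymptotic ratio are correct and identical in substance to the paper's.
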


\section{Other algorithms}\label{oth}
\subsection{A $2$-competitive algorithm for all $\boldsymbol{k\geq 2}$}
\label{Kot}
Kotov et al. \cite{BCKK04} designed an algorithm that is
$2$-competitive in the asymptotic sense. We present a simplified
version of that algorithm and prove that it is $2$-competitive in
the absolute sense. Our algorithm {\sc Thin and Fat} (TF) has
three kinds of bins.

\noindent 1. Paired bins. Those are bins partitioned into pairs
such that the total size of items for each pair is strictly above
$1$, and the total number of items packed into the two bins is at
least $k$. Those bins will not be used for packing further items.

\noindent 2. Fat bins. Those are bins containing exactly  $k-1$
items.

\noindent 3. Thin bins. Those are non-empty bins containing at
most $k-2$ items.

After we define TF, we will prove that if it has at least one fat
bin, then it has at most one thin bin. The algorithm acts as
follows. Initially all three sets of bins are empty. Let $i \geq
1$ be a new item. The following steps are processed for $i$ until
it is packed.

\noindent 1. If there is a fat bin $B$ such that $s(B)+s_i>1$,
pack $i$ into a new bin, match $B$ and the new bin, these bins
become paired.

\noindent 2. If there are no thin bins, pack $i$ into a new bin.

\noindent 3. If there exists a thin bin $B$ such that $s(B)+s_i
\leq 1$, then pack $i$ into $B$. If $B$ becomes fat and there is a
thin bin $B'\neq B$, match $B$ and $B'$, these bins become paired.

\noindent 4. If there are no fat bins, pack $i$ into a new bin.

\noindent 5. Pack $i$ into a fat bin $B$, match $B$ with a thin
bin $B'$, these bins become paired.

\begin{lemma}
i. In all cases, the actions described above can be performed.

ii. Every two thin bins have total sizes above $1$.

iii. Every two bins that are
matched have sufficient total sizes and sufficient numbers of
items.

iv. If there is at least one fat bin, then there is at most one
thin bin.
\end{lemma}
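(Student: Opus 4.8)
The plan is to establish all four parts simultaneously by induction on the number of items processed, treating (ii) and (iv) as invariants that hold at the start of each iteration, (iii) as a property guaranteed of every matched pair at the instant it is created, and (i) as the statement that, \emph{given} these invariants, each of the five steps is executable. The base case is trivial since initially there are no bins. I would first dispose of the degenerate case $k=2$: then a non-empty bin with at most $k-2=0$ items cannot exist, so thin bins never arise, only steps 1 and 2 are ever reached, (ii) and (iv) are vacuous, and every matched pair is formed in step 1 from a one-item fat bin $B$ with $s(B)+s_i>1$ and a fresh one-item bin, hence has total size above $1$ and $k$ items. For $k\geq 3$ a fresh one-item bin is thin, and the genuine argument begins.

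For the inductive step I would walk through the five steps, checking in each case that the prescribed actions can be carried out, that (ii) and (iv) survive, and that any newly formed pair satisfies (iii). The structural observations driving this are: in step 1 the existence of a fat bin together with invariant (iv) already forces at most one thin bin, and pairing off $B$ only shrinks the set of fat bins while leaving the thin bins untouched; in step 3, if a thin bin $B$ fills up to $k-1$ items while another thin bin $B'$ is still present, then two thin bins coexisted when $i$ arrived, so by the contrapositive of (iv) there were no fat bins at all, and pairing $B$ with $B'$ keeps it that way; and in step 5 the fact that step 2 failed (a thin bin exists) together with (iv) applied to the fat bin guaranteed by the failure of step 4 forces the thin bin $B'$ to be \emph{unique} — which is precisely what makes ``match $B$ with a thin bin $B'$'' unambiguous and leaves zero thin bins afterwards, restoring (iv). Steps 2 and 4 each create a single new thin bin; step 2 is entered with no thin bins so afterwards there is exactly one (consistent with (iv)), and step 4 is entered only after step 3 has failed, so every existing thin bin has level summing with $s_i$ to more than $1$, whence the new thin bin overflows jointly with each old one and (ii) persists.

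It then remains to verify (iii) at the only three pair-creation sites, steps 1, 3 and 5. The cardinality half is routine: a step-1 pair has $(k-1)+1=k$ items, a step-3 pair has the $k-1$ items of the now-fat bin plus at least one in the non-empty $B'$, and a step-5 pair already has $k$ items in $B$. For the size half: step 1 is immediate from its triggering condition $s(B)+s_i>1$; for step 3 I would invoke invariant (ii) — $B$ and $B'$ were both thin when $i$ arrived, so their levels already summed to more than $1$, and moving $i$ into $B$ only increases this; for step 5 I would use that step 3 failed, so $s(B')+s_i>1$ for the unique thin bin $B'$, and then the pair's total level is $s(B)+s_i+s(B')>s_i+s(B')>1$ since the fat bin $B$ has $k-1\geq 1$ items and hence positive level. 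Since every step preserves the invariants and produces only valid matched pairs, the induction closes and all four parts follow.

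I expect the main obstacle to be the bookkeeping around step 5: correctly arguing that the thin bin is unique at that moment (so the step is well defined and consumes exactly that bin), and the slightly indirect size estimate $s(B)+s_i+s(B')>1$, which rests on the \emph{failure} of step 3 rather than on any direct property of $B$. A secondary subtlety is being scrupulous that (ii) and (iv) are asserted only between iterations, not mid-iteration: within step 3 a bin may momentarily be fat while two thin bins are present, and it is the immediate pairing that restores (iv) by the end of the iteration.
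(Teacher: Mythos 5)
Your proposal is correct and follows essentially the same route as the paper: a case analysis over the five steps, with (ii) and (iv) maintained as invariants, (iii) checked at the three pair-creation sites via the same three arguments (the trigger condition of step 1, invariant (ii) for step 3, and the failure of step 3 for step 5), and (i) from the failure of earlier steps. Your explicit inductive framing and the separate $k=2$ remark are minor presentational additions; the substance matches the paper's proof.
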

\begin{proof}
We start with proving part i, i.e., we show that any item $i$ can be packed into the bin
that it is assigned to. In steps 1,2, and 4, the item is packed
into a new bin. Step 3 is applied provided that $B$ exists. Such a
bin has at most $k-2$ items, and has sufficient space. Assume that
step 5 is reached. Since $i$ is not packed in step 1, every fat
bin can receive $i$ since it has $k-1$ items and sufficient space.
Since step 4 was not applied, a fat bin must exist. The only other
action that is performed unconditionally is matching bins in step
5. The thin bin must exist as $i$ was not packed in step 2.

Next we consider part ii. Note that a thin bin can become fat, but a fat bin cannot become
thin. Thus, a new thin bin is created only by packing an item into
a new bin. The bin remains thin as long as it is not paired, and
it has at most $k-2$ items. Obviously, the total size packed into
a bin cannot decrease over time. Thus, to prove this part, it is sufficient to prove
that a pair of a thin bin and a thin bin that was just created and
was not paired immediately have a total size of items above $1$. A
new bin $B'$ that it not paired immediately can be created in
steps 2 and 4. In step 2 it becomes the only thin bin. In step 4,
it is created since the item was not packed in step 3, thus for
any existing thin bin $B$, $s(B)+s(B')=s(B)+s_i>1$.

Consider part (iii). Bins  are matched only  in steps 1,3,5. In step $1$, the two bins
will have $k$ items and total size above $1$. In step 3, the pair
is created only if a thin bin $B$ becomes fat, i.e., $B$ now has
$k-1$ items and $B'$ has at least one item. Moreover, since $B$
and $B'$ were thin, their total size of items is above $1$. In
step 5, since step 3 was not applied, we have $s(B')+s_i>1$, and
the total number of items is at least $k+1$.

To verify the last property (part iv), consider the cases where $i$ is
packed into a bin that is not paired immediately. In step 2, there
will be a unique thin bin. In step 3, if $B$ becomes fat and it is
not paired, then no thin bins remain. Otherwise, there is no
change in the numbers of fat and thin bins. In step 4, there will
be no fat bins after $i$ is packed.
\end{proof}

\begin{theorem}
For any input $L$, $TF(L) \leq 2\cdot OPT(L)$.
\end{theorem}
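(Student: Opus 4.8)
The plan is to combine the structural facts established in the preceding lemma with the two elementary lower bounds on $OPT(L)$: the volume bound $OPT(L)\geq \sum_i s_i$ and the cardinality bound $OPT(L)\geq n/k$, where $n$ is the total number of items. Write $2p$ for the number of paired bins (so there are $p$ matched pairs), $f$ for the number of fat bins, and $t$ for the number of thin bins, so that $TF(L)=2p+f+t$. If $L$ is empty the claim is trivial, so assume $TF(L)\geq 1$. By part (iv) of the lemma, either $f=0$, or else $f\geq 1$ and then $t\leq 1$; these two cases will be handled by the volume bound and by the cardinality bound, respectively.

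First I would treat the case $f=0$, where $TF(L)=2p+t$. Each of the $p$ matched pairs holds items of total size strictly above $1$ by part (iii), and by part (ii) any two thin bins together hold items of total size strictly above $1$; grouping the $t$ thin bins into $\lfloor t/2\rfloor$ pairs (with a possibly remaining non-empty bin), the thin bins hold items of total size strictly above $\lfloor t/2\rfloor$. Hence $\sum_i s_i > p+\lfloor t/2\rfloor$, so $OPT(L)$ is an integer strictly larger than the integer $p+\lfloor t/2\rfloor$, giving $OPT(L)\geq p+\lfloor t/2\rfloor+1$. Using $2\lfloor t/2\rfloor\geq t-1$ this yields $2\,OPT(L)\geq 2p+t+1>2p+t=TF(L)$.

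Next I would treat the case $f\geq 1$, and hence $t\leq 1$. Here each matched pair contains at least $k$ items by part (iii), each fat bin contains exactly $k-1$ items, and the (at most one) thin bin contains at least $t$ items, whether $t=0$ or $t=1$; therefore $n\geq pk+f(k-1)+t$. Combining with $OPT(L)\geq n/k$ gives $2\,OPT(L)\geq 2p+2f\frac{k-1}{k}+\frac{2t}{k}$, and it remains to verify that this is at least $2p+f+t$. After subtracting $2p$ and clearing denominators, this reduces to $(f-t)(k-2)\geq 0$, which holds since $k\geq 2$ and $f\geq 1\geq t\geq 0$. Thus $2\,OPT(L)\geq TF(L)$, finishing this case and the theorem.

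The argument itself is short; its real content is the structural dichotomy ``a fat bin forces at most one thin bin,'' which is precisely part (iv) of the lemma and is already in hand. The only places needing slight care are the integrality step in the $f=0$ case (passing from a strict inequality on volumes to a clean bound on $OPT(L)$) and phrasing the item-count bound $n\geq pk+f(k-1)+t$ so that it is valid uniformly for $t\in\{0,1\}$; neither is a genuine obstacle.
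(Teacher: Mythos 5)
Your proof is correct and takes essentially the same route as the paper: the dichotomy from part (iv) of the lemma, the volume bound $OPT(L)\geq\sum_i s_i$ in the case $f=0$, and the item-count bound $OPT(L)\geq n/k$ with $n\geq pk+f(k-1)+t$ in the case $f\geq 1$. The only differences are cosmetic: you count matched pairs rather than paired bins, and you merge the paper's $t\leq 1$ and $t\geq 2$ subcases into one via the $\lfloor t/2\rfloor$ pairing and the integrality of $OPT(L)$.
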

\begin{proof}
Let $p$, $f$, and $t$ be the numbers of paired bins, fat bins, and
thin bins when the algorithm terminates. Let $W$ be the total size
of items, and $n$ their number.

Assume first that $f=0$. If $t \leq 1$, and $p=0$, then
$TF(L)=t=OPT(L)$. Otherwise, if $p>0$ (and $t\leq 1$), we find that $W >  \frac{p}2$. Therefore
$OPT(L) \geq W > \frac p2$ and $OPT(L) \geq \frac{p+1}2$. Thus,
$p+f+t \leq p+1 \leq 2 OPT(L)$. If $t \geq 2$, since the total
size of items of every pair of thin bins is above $1$, we have
$OPT(L) \geq W > \frac p2 + \frac t2$, while $TF(L)=p+t$.

Otherwise, $f \geq 1$. In this case $t \leq 1$. We have $n \geq \frac p2 \cdot k + (k-1)f+t$. If $t=0$, then we get $n \geq \frac k2 (p+f)=\frac k2 TF(L)$, since $k-1 \geq \frac k2$. Otherwise, $n \geq \frac p2 \cdot k + (k-1)f+1=\frac k2(p+f+1)+(\frac k2-1)f-\frac k2+1=\frac k2(p+f+1)+(\frac k2-1)(f-1) \geq \frac k2(p+f+1) =\frac k2 TF(L)$. Since $OPT(L) \geq \frac nk$, and we get $TF(L) \leq 2OPT(L)$.
\end{proof}

\subsection{A simple algorithm with an absolute competitive ratio $\boldsymbol{2}$ for $\boldsymbol{k =5}$}
We present a different algorithm that is based on an adaptation of
FF. The algorithm $ALG$ acts as follows. A new item $i$ is
assigned into a minimum index bin $B$ that satisfies all the
following conditions (if no such bin exists, then it is assigned
into an empty bin).

\noindent 1. The current level of $B$ is at most $1-s_i$.
\noindent 2. The current number of items of $B$ is at most $4$.
\noindent 3. If $B$ currently contains four items, then after
assigning $i$, its level will be at least $\frac 12$.

A regular bin is a bin of $ALG$ that is a $2$-bin or a $3$-bin. We
treat $1$-bins (also called dedicated bins), $4$-bins, and
$5$-bins separately.

\begin{lemma}\label{basics}
The level of any $5$-bin is at least $\frac 12$. The level of any
regular bin, except for at most one bin is at least $\frac 23$.
For a pair of bins $B$, $B'$ (where $B'$ appears after $B$ in the
ordering), the total size of items packed into $B$ and $B'$ is
above $1$ in the following two cases.

1. None of the bins contains more than three items. 2. One of the
bins is a large dedicated bin, and the other contains four items.
\end{lemma}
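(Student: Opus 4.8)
The plan is to isolate one structural fact about $ALG$ and read off all four assertions from it, then flag the single delicate point. The fact is this: whenever $ALG$ opens a new (empty) bin for an incoming item $i$, every already-existing bin $B$ that currently holds at most four items must fail condition 1 or condition 3 at that moment (condition 2 holds for it). In particular, if $B$ currently holds at most three items then condition 3 is vacuous, so condition 1 fails, i.e., the current level of $B$ exceeds $1-s_i$; since levels never decrease, the final level $s(B)$ then also satisfies $s(B)+s_i>1$. If $B$ currently holds exactly four items, then either $s(B)+s_i>1$ (condition 1 fails) or $s(B)+s_i<\frac12$ (condition 3 fails), where $s(B)$ is the current level. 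I would state this up front and invoke it repeatedly.

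The $5$-bin claim is then immediate: a $5$-bin received its fifth item while holding four items, so condition 3 held at that step, forcing its level to be at least $\frac12$ then and hence at the end. Note that a $4$-bin need \emph{not} reach level $\frac12$, since condition 3 is vacuous when the fourth item is placed; this is exactly why the second pairing case couples a four-item bin with a \emph{large} dedicated bin rather than an arbitrary one.

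For the regular-bin claim I would argue by contradiction. Suppose two regular bins $B,B'$, with $B'$ appearing after $B$ in the ordering, both have level below $\frac23$. Every item $x$ of $B'$ arrives after $B$ is opened, and at that moment $B$ --- a final $2$- or $3$-bin --- holds at most three items and has level below $\frac23$; so if $s_x\le\frac13$ then $B$ satisfies all three conditions for $x$ (condition 1 since the level is $<\frac23\le1-s_x$, condition 2 trivially, condition 3 vacuously) and has smaller index than $B'$, so $x$ would be placed in $B$, not in $B'$, a contradiction. Hence every item of $B'$ exceeds $\frac13$, so $s(B')>\frac23$, contradicting the assumption on $B'$. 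Thus at most one regular bin has level below $\frac23$.

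Finally, the two pairing cases follow directly from the structural fact. In Case 1, let $j'$ be the first item of $B'$; when $j'$ arrives $B$ is open and holds at most three items, so condition 1 failed for $B$ (using that $B$'s level only grows), giving $s(B)+s(B')\ge s(B)+s_{j'}>1$. In Case 2, write $D$ for the large dedicated bin (its single item $d$ has $s_d>\frac12$) and $Q$ for the four-item bin. If $Q$ appears before $D$, then when $d$ arrives $Q$ holds at most four items, so either condition 1 failed and $s(Q)+s(D)\ge(\text{current level of }Q)+s_d>1$, or condition 3 failed and $(\text{current level of }Q)+s_d<\frac12$, which is impossible as $s_d>\frac12$. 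If $D$ appears before $Q$, then when the first item $q_1$ of $Q$ arrives $D$ holds a single item, condition 3 is vacuous, condition 1 fails, and $s(D)+s(Q)\ge s_d+s_{q_1}>1$. The only real subtlety, which I would highlight, is in Case 2: one must rule out that $d$ failed to enter $Q$ because of condition 3 rather than condition 1, since a condition-3 failure gives an inequality of the wrong sign; this is precisely what the hypothesis that the dedicated bin is \emph{large} (so $s_d>\frac12$) is for. Everything else is a direct consequence of the structural fact together with monotonicity of levels.
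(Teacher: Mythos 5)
Your proof is correct and follows essentially the same approach as the paper: all four assertions are read off from the fact that when $ALG$ opens a new bin, every existing eligible bin must violate condition 1 or condition 3, with the largeness of the dedicated bin ruling out a condition-3 violation in the one delicate sub-case. The only (harmless) divergence is in the regular-bin claim, which the paper obtains by observing that $ALG$ restricted to the items of regular bins behaves like FF and citing its earlier claim about $j^+$-bins, whereas you reprove that fact directly; both arguments are sound.
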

\begin{proof}
By the third condition, $ALG$ never creates a $5$-bin whose level
is below $\frac 12$. For regular bins,  $ALG$ simply applies FF
(without cardinality constraints) on the subsequence of items of
these bins, and thus the claim follows from Claim
\ref{veryuseful}.

If $B$ contains at most three items, then when the first item of
$B'$ is packed, the third condition is irrelevant, the second
condition holds, and thus the first condition does not hold, and
the total size of items packed into the two bins exceeds $1$.

We are left with the case that $B$ contains four items when the
unique item $i$ of $B'$ arrives, and $s_i  > \frac 12$ (the case that $B'$ has four items but $B$ has one item was already considered). Thus, if
$i$ is packed into $B$, the third condition must hold, and
therefore the first condition does not hold, and the total size of
items packed into the two bins exceeds $1$.
\end{proof}

Consider an input $L$. For the output of ALG applied on $L$, let
$f$ denote the number of $4$-bins, let $d_1$ be the number of
dedicated bins whose items have sizes above $\frac 12$ (such bins
and items are called large dedicated bins and large dedicated
items), and let $d_0$ be the number of dedicated bins whose items
have sizes no larger than $\frac 12$ (such bins and items are
called small dedicated bins and small dedicated items). By Lemma
\ref{basics}, $d_{0}\leq1$ must hold.

\begin{theorem}
\label{k5} The absolute competitive ratio of $ALG$ is at most $2$.
\end{theorem}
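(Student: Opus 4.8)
The plan is to bound $OPT(L)$ from below in three different ways and combine them by a short case analysis, using Lemma~\ref{basics} for the sizes and letting integrality absorb an $O(1)$ additive slack. Write $ALG(L)=a+f+d_1+d_0+b$, where $a$ is the number of regular bins, $f$ the number of $4$-bins, $d_1$ and $d_0\le 1$ the numbers of large and small dedicated bins, and $b$ the number of $5$-bins; let $W$ be the total size of the input and $n$ the number of items. The three lower bounds I would use are $OPT(L)\ge\lceil W\rceil$, $OPT(L)\ge\lceil n/5\rceil$ (from the cardinality constraint), and $OPT(L)\ge d_1$ (two large dedicated items, each of size $>\tfrac12$, cannot share a bin). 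From Lemma~\ref{basics} I record: every $5$-bin has level $\ge\tfrac12$; every regular bin but one has level $\ge\tfrac23$; every large dedicated bin has level $>\tfrac12$; any earlier $\le 3$-item bin, paired with any later bin, has combined level $>1$ (the later bin's first item did not fit there by size, since conditions 2 and 3 are automatically satisfied for a $\le 3$-item bin); and a large dedicated bin paired with a $4$-bin has combined level $>1$.

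First I would dispose of the case $f\le d_1$ (in particular $f=0$). Pair each $4$-bin with a distinct large dedicated bin, contributing more than $1$ per pair to $W$, then pair the remaining $\le 3$-item bins ($a$ regular, $d_0$ small dedicated, $d_1-f$ large dedicated) among themselves and the $5$-bins among themselves, each such pair contributing more than, resp.\ at least, $1$ to $W$; leftover odd bins contribute their positive levels (and $>\tfrac12$ for a large dedicated or a non-exceptional regular bin). Bookkeeping these contributions yields $W>\tfrac12(ALG(L)-1)$, the worst residue coming from the one exceptional regular bin and/or the single small dedicated bin; since $ALG(L)$ is an integer and $OPT(L)\ge\lceil W\rceil$, the strict inequality gives $ALG(L)\le 2\,OPT(L)$.

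The remaining case is $f>d_1$, where $f-d_1$ $4$-bins stay unpaired and can be arbitrarily light, so the size bound alone is too weak on them while the cardinality bound alone is too weak on the regular bins (each holds only $2$ items yet contributes $\approx\tfrac23$ to $W$). The reconciling observation is structural: if $B$ is a $4$-bin whose smallest item has size $\sigma$ (so $\sigma\le\tfrac14\,\mathrm{level}(B)$), then every $\le 3$-item bin appearing before $B$ has level $>1-\sigma$, since otherwise it would have accepted $B$'s smallest item (conditions 2 and 3 hold there vacuously). Thus a light $4$-bin forces all earlier small-cardinality bins to be nearly full — exactly where the size bound is strong — while if every $4$-bin is heavy the size bound is directly strong on them. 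Making this dichotomy quantitative, I would pair $d_1$ of the $4$-bins with the large dedicated bins, use $OPT(L)\ge d_1$ for the large items, use $OPT(L)\ge\lceil n/5\rceil$ to pay for the $4$-bins (each with $\ge 4$ items) and $5$-bins (each with $5$), and use $OPT(L)\ge\lceil W\rceil$, strengthened by the observation whenever some $4$-bin is light, to pay for the regular bins; an appropriate convex combination of the three bounds then gives $ALG(L)\le 2\,OPT(L)$ up to an additive term below $1$, which integrality removes.

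The main obstacle is precisely this last case, i.e.\ the $4$-bins. Condition~3 of $ALG$ guarantees only that $5$-bins are heavy, and $4$-bins can be individually tiny and numerous, so they are simultaneously bad for the size bound and — being lighter in cardinality than $5$-bins — not covered by the cardinality bound with enough slack to also pay for the regular bins. The fact that a light $4$-bin can only appear behind nearly full small-cardinality bins is what lets the size and cardinality bounds be combined without loss; ensuring the leftover constants fall strictly below $1$, so that the \emph{absolute} (not merely asymptotic) ratio follows, is the delicate bookkeeping step.
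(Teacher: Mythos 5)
Your proposal follows essentially the same route as the paper: the case $f<d_1$ (you write $f\le d_1$) is handled by the identical pairing-plus-total-size argument, and in the case $f\ge d_1$ your ``convex combination of the size and cardinality bounds,'' taken after pairing $d_1$ of the $4$-bins with the large dedicated bins, is precisely the paper's weight function $w(x)=1+3x$ per item of size $x$, giving $8\,OPT(L)\ge n+3W\ge 4\,ALG(L)-3$ and finishing by integrality. The only divergence is your structural observation that a light $4$-bin forces all earlier small-cardinality bins to be nearly full; this is true but unnecessary, since under $w(x)=1+3x$ an unpaired $4$-bin already contributes at least $4$ from its item count alone, each regular bin contributes at least $2+3\cdot\tfrac{2}{3}=4$ by Lemma~\ref{basics}, each $5$-bin contributes at least $5$, and the only shortfalls come from the single exceptional regular bin and the single small dedicated bin, whose combined deficit is at most $3<4$.
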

\begin{proof}
We distinguish two cases as follows.

\noindent \textbf{Case 1. $\boldsymbol{f<d_{1}}$}. We match
$4$-bins and large dedicated bins into pairs arbitrarily, leaving
at least one large dedicated bin unmatched. The remaining bins
that are not $k$-bins (regular bins, and a small dedicated bin, if
it exists) are also matched into pairs, and if the number of these
bins is odd, one of them is matched to an unmatched  large
dedicated bin. The total size of items of any matched pair is
above than $1$, the level of every remaining large dedicated bin
is above $\frac 12$, and the level of any $k$-bin is at least
$\frac 12$, by Lemma \ref{basics}. We find that the total size of
items $S$ satisfies $S \geq \frac {ALG}2$, thus $OPT(L) \geq S
\geq \frac{ALG}2$.

\noindent \textbf{Case 2. $\boldsymbol{f \geq d_{1}}$}. For an
item of size $x$, we define its weight to be $w(x)=1+3x$. Let $W$
denote the total weight of all items of $L$. For any bin, the
total weight of its items is at most $8$, as it has at most five
items of a total size of at most $1$. Match every large dedicated
bin to a $4$-bin. For each such pair, the total size is above $1$,
and the number of items is $5$, thus the total weight of the items
of every such pair of bins is at least $8$. Every remaining
$4$-bin has four items, and their total weight is at least $4$.
Similarly, every $5$-bin has a total weight above $5$. Every
regular bin, except for at most one such bin, has a total size of
items of at least $\frac 23$, and at least two items, so its
weight is at least $2+\frac 23 \cdot 3=4$. Thus, on average, all
bins have weights of at least $4$, except for possibly a small
dedicated bin, if it exists, and a regular bin of level below
$\frac 23$, if it exists. If none of those exists, we find $8\cdot
OPT(L) \geq W \geq 4 \cdot ALG(L)$, and we are done. If both such
bins exist, then the total size of items in those bins is above
$1$ by Lemma \ref{basics}, and there are at least three items, so
the total weight is at least $6$. We find $ W \geq 4 \cdot
(ALG(L)-2)+6$. If $d_0=1$ but no regular bin of level below $\frac
23$ exists, then the weight of the small dedicated bin is at least
$1$, and $W \geq 4 \cdot (ALG(L)-1)+1$. If $d_0=0$, but there is a
regular bin of level below $\frac 23$, then the weight of this
regular bin is at least $2$, and $W \geq 4 \cdot (ALG(L)-1)+2$. In
all three last cases, $8 OPT(L) \geq W \geq 4\cdot ALG(L)-3$, or
alternatively, $ALG(L) \leq 2 \cdot OPT(L)+3/4$. By integrality,
we get $ALG(L) \leq 2 \cdot OPT(L)$.
\end{proof}

\bibliographystyle{abbrv}

\end{document}